\documentclass[a4paper,onecolumn,11pt,unpublished]{quantumarticle}
\pdfoutput=1

\usepackage[T1]{fontenc}
\usepackage[utf8]{inputenc}
\usepackage[provide=*,english]{babel}
\usepackage{microtype}
\usepackage{amsmath,amssymb,amsthm,mathtools,bm}
\usepackage{array}
\usepackage{booktabs}
\usepackage{graphicx}
\usepackage{enumitem}
\usepackage[numbers,sort&compress]{natbib}
\usepackage[colorlinks=true,linkcolor=blue,citecolor=blue,urlcolor=blue]{hyperref}
\usepackage[nameinlink,capitalize,noabbrev]{cleveref}

\graphicspath{{figures/}}

\allowdisplaybreaks

\newtheorem{assumption}{Assumption}
\newtheorem{definition}{Definition}
\newtheorem{theorem}{Theorem}
\newtheorem{proposition}{Proposition}
\newtheorem{corollary}{Corollary}
\newtheorem{lemma}{Lemma}
\newtheorem{remark}{Remark}

\crefname{assumption}{Assumption}{Assumptions}
\Crefname{assumption}{Assumption}{Assumptions}

\DeclareMathOperator{\Tr}{Tr}

\DeclareMathOperator{\KL}{KL}
\newcommand{\MSE}{\operatorname{MSE}}
\newcommand{\Bias}{\operatorname{Bias}}
\newcommand{\Var}{\operatorname{Var}}
\newcommand{\Cov}{\operatorname{Cov}}
\newcommand{\E}{\mathbb{E}}
\newcommand{\Prob}{\mathbb{P}}

\newcommand{\1}{\mathbf{1}}
\newcommand{\OO}{\mathcal{O}}
\newcommand{\norm}[1]{\left\lVert #1 \right\rVert}
\newcommand{\abs}[1]{\left\lvert #1 \right\rvert}
\newcommand{\ip}[2]{\left\langle #1\middle|#2\right\rangle}
\newcommand{\bra}[1]{\left\langle #1\right|}
\newcommand{\ket}[1]{\left|#1\right\rangle}

\title{Certified Finite-Shot Operating Windows for Virtual Distillation and Symmetry Verification}
\author{Vicenzo Scavino Alfaro}
\email{u201919346@upc.edu.pe}
\orcid{0009-0000-2472-9785}
\affiliation{Independent researcher}
\date{June 12, 2026}

\begin{document}
\maketitle

\begin{abstract}
Quantum error mitigation methods are often compared through infinite-shot bias, but real experiments are decided by finite sampling budgets, estimator instabilities, and per-shot resource costs.  We develop a certified finite-shot operating-window theory for comparing virtual distillation (VD) and symmetry verification (SV).  For each method we prove a mean-squared-error law with explicit non-asymptotic remainder constants, rather than only asymptotic delta-method statements.  For VD, the law exposes the statistical bias and denominator instability of its quotient estimator and gives a concentration certificate for the sample size beyond which the quotient is trustworthy.  For SV, it separates the residual bias floor left by undetectable errors from the sampling penalty set by the acceptance probability.  These local laws feed into a selection trichotomy: any two-method comparison is a tie, a uniform dominance relation, or a genuine tradeoff with a certified crossing window and a self-consistency test that rejects spurious crossings.  The theory makes falsifiable predictions---including operating-window locations scaling as \(p^{-2}\) or \(p^{-1}\) in the noise rate and the sign pattern of all pairwise comparisons.  Exact white-box experiments confirm the predicted \(p^{-2}\) window scale with fitted exponent \(-1.97\) and show \(300/300\) sign agreement within a pre-specified analysis; the single strict all-instance criterion that was not met is reported with its calibration analysis.  Gate-level simulation and archived runs on two IBM backends then test the windows under device conditions: idealized VD windows exist, but realistic interferometry overhead and denominator instability move them outside the tested resource range, while calibrated SV has the lower MSE in the tested QAOA instances.  The conclusion is therefore a regime statement, not a universal ranking: certified operating windows explain when a mitigation advantage should exist, when it should disappear, and which evidence validates coefficients rather than only device behavior.
\end{abstract}

\section{Introduction}

Quantum error mitigation (QEM) aims to improve observable estimates on noisy devices without implementing full fault tolerance \citep{CaiReview2023}.  Standard techniques include zero-noise extrapolation, probabilistic error cancellation, symmetry verification, Clifford-data regression, and multi-copy purification methods such as virtual distillation and error suppression by derangements \citep{Temme2017,LiBenjamin2017,Endo2018,BonetMonroig2018,McArdle2019,Czarnik2021,Lowe2021,Huggins2021,Koczor2021PRX,HuoLi2022Dual,Hakoshima2024Localized,Yuan2024ResourceDistillation,Bultrini2023,Ijaz2024}.  A common issue is that lower bias is not free.  Mitigation can increase shot noise, circuit count, circuit width, calibration cost, or all of these.

We focus on two methods with analyzable finite-shot structure.  Virtual distillation (VD) estimates an observable with respect to
\begin{equation}
    \frac{\rho_p^M}{\Tr(\rho_p^M)},
\end{equation}
where \(M\) noisy copies are combined \citep{Huggins2021,Koczor2021PRX}.  VD suppresses subdominant spectral components, but it cannot remove the mismatch between the dominant eigenvector of the noisy state and the ideal state \citep{Koczor2021NJP}.  Symmetry verification (SV) projects samples onto a valid symmetry sector and discards invalid samples \citep{BonetMonroig2018,McArdle2019,Sagastizabal2019,Kakkar2022}.  Its residual bias comes from symmetry-preserving errors that postselection cannot detect; its variance cost comes from the acceptance probability.  Cai's symmetry expansion framework already gives a broad bias-versus-sampling view for symmetry-based mitigation \citep{Cai2021}.  We use the narrower SV case as one component of an explicit operating-window theory.

The main mathematical issue is that VD is a quotient estimator.  The population target is
\begin{equation}
    \mu_{\mathrm{VD},M}(p)=\frac{N_M(p)}{D_M(p)}
    =\frac{\Tr(O\rho_p^M)}{\Tr(\rho_p^M)},
\end{equation}
while the finite-shot estimator is \(\widehat N_M/\widehat D_M\).  Therefore a VD MSE law must account not only for population bias and variance, but also for the statistical bias of a random quotient and for the probability that the estimated denominator approaches zero.  Prior VD analyses identify the population mechanism and noise floors \citep{Huggins2021,Koczor2021PRX,Koczor2021NJP}, while circuit-level and hardware purification work studies implementation noise and special cases \citep{Teo2022,Vikstal2024,OBrien2023Purification}.  We place these effects inside a certified finite-budget MSE law.  Operationally, the denominator certificate is a physical resolution condition: the copy-overlap signal \(D_M\) must be large enough, relative to the available shot budget and circuit cost, to be distinguished from sampling fluctuations before a VD advantage can be claimed.

The organizing thread of the paper is therefore simple.  We first derive a local finite-shot law for each method, then certify the regime in which the law is valid, compare the resulting lower envelopes under a common resource budget, and finally test which parts of that certificate survive in simulation and hardware.

Our main finding is that neither method wins universally: the comparison is governed by operating windows whose location and scaling the theory predicts and certifies.  In exact white-box experiments, the predicted \(p^{-2}\) window scale is recovered with fitted exponent \(-1.97\), and the predicted sign pattern of all pairwise method comparisons holds in \(300/300\) cells.  Under gate-level device simulation and on archived IBM hardware, realistic interferometry overhead and denominator instability move the idealized VD window outside the tested resource range, and calibrated SV has the lower MSE in the tested QAOA instances---the same regime structure the windows predict.  The device-level conclusion is not simply that VD needs more shots: the operating window predicted by the ideal law is displaced beyond the available physical resource budget once implementation costs are charged.  Main results, assumptions, and limitations are stated in \cref{sec:main_results_assumptions}.

Related finite-budget and threshold viewpoints are complementary.  Shot-budget-aware regime identification has recently been explored for PEC in optimization benchmarking \citep{Demarty2026Benchmark}; the operating windows here are certified at the level of non-asymptotic MSE laws and include the quotient-estimator structure of VD.  Shot-estimation work for noisy circuits also separates statistical variance from a bias floor \citep{SeksariaPrabhakar2025}; here that structure is used to compare mitigation methods with explicit remainders and crossing certificates.  Error-mitigation thresholds for imperfectly characterized random circuits address robustness to noise-model disorder \citep{Niroula2025Thresholds}, whereas our windows fix the transcript model and vary the sampling budget.

\section{Main results and assumptions}
\label{sec:main_results_assumptions}

The contribution is not a new heuristic benchmark, but a certified comparison framework.  Its claims fall into seven connected parts.
\begin{enumerate}[leftmargin=2em]
    \item \textbf{VD quotient law.}  We derive the VD mean-squared-error law with explicit \(B^{-1}\) quotient bias and denominator-driven variance.
    \item \textbf{VD residual bias.}  We connect the VD bias-floor coefficient \(\delta_{\mathrm{VD}}\) directly to the first-order noise generator \(\mathcal L\), the observable \(O\), and the ideal pure state \(\rho_\star\).
    \item \textbf{Denominator certificate.}  We prove a concentration certificate showing when the VD quotient expansion is trustworthy.
    \item \textbf{SV local law.}  We give a precise first-order SV bias formula and separate variance-driven \(p/B\) dependence from quotient-bias terms, which vanish for Bernoulli postselection.
    \item \textbf{Crossing validity.}  We add a validity certificate for indifference curves, so a predicted crossing is accepted only when it falls inside the local-law regime.
    \item \textbf{Decision-theoretic sanity check.}  We prove a restricted lower bound: if two ideal targets induce physically indistinguishable noisy measurement transcripts at budget \(R\), no mitigation selector can estimate both below the corresponding two-point risk.
    \item \textbf{Selection structure.}  We prove two structural organizing results that do not depend on numerical experiments: an accepted-transcript efficiency proposition for postselected laws and a selection trichotomy that organizes two-method comparisons into a degenerate tie, uniform dominance, or a genuine tradeoff whose crossing is certified under an explicit remainder condition.
\end{enumerate}

We pair these mathematical claims with a tiered validation layer.  Level~1 is the coefficient-checking layer: exact white-box density-matrix calculations compare the closed forms against ground truth, with the strict validation outcome reported separately in \cref{sec:experimental_validation}.  Levels~2 and~3 are not coefficient-validation layers.  They test whether the finite-shot form, denominator concentration, crossing behavior, and implementation-overhead narrative remain visible under a Qiskit/Aer device model \citep{Qiskit2024} and on IBM hardware with calibrated readout \citep{Maciejewski2020Readout,Bravyi2021Readout}.

The assumptions are local and method-specific.  The noisy state is assumed to admit the first-order expansion in \cref{eq:noise_expansion}; the VD dominant-eigenvector formula assumes a pure ideal output and a simple leading eigenvalue of \(\rho_p\) on the considered noise range, which is automatic near \(p=0\) for a pure ideal state; quotient certificates assume bounded one-shot numerator-denominator variables and an explicit denominator-concentration threshold; SV statements assume a fixed symmetry projector and report the residual undetectable-sector bias rather than claiming generic bias removal.  We do not use the experimental layers to validate every analytic constant: Level~1 tests coefficient-level formulas, while Levels~2--3 test form, robustness, and operating-window behavior under device-model and hardware effects.

\section{Problem setup and local laws}

Let \(\rho_\star\) be the ideal output state and let \(O=O^\dagger\) be a bounded observable.  The target is
\begin{equation}
    \mu_\star=\Tr(O\rho_\star).
\end{equation}
The noisy output state at physical noise scale \(p\geq0\) is \(\rho_p\).  We assume the local expansion
\begin{equation}
    \rho_p=\rho_\star+p\Delta+\OO(p^2),
    \qquad
    \Delta=\mathcal L(\rho_\star),
    \label{eq:noise_expansion}
\end{equation}
where \(\Delta=\Delta^\dagger\) and \(\Tr\Delta=0\).  The \(\OO(p^2)\) term is in any fixed matrix norm; all Hilbert spaces considered here are finite-dimensional.  The trace condition follows from \(\Tr\rho_p=1\).

A mitigation method \(m\) produces an estimator \(\widehat\mu_m\).  Its performance is
\begin{equation}
    \MSE_m(p,B)=\E[(\widehat\mu_m-\mu_\star)^2]
    =\Bias(\widehat\mu_m)^2+\Var(\widehat\mu_m),
\end{equation}
where \(B\) denotes the number of independent statistical samples for that method.

\begin{definition}[Certified local finite-shot law]
\label{def:certified_law}
A method \(m\) has a certified local finite-shot law on a parameter set \(\mathcal P\) if there exist functions \(b_m(p)\), \(v_m(p)\), \(c_m(p)\), constants \(C_m(p)\), and a threshold \(B_{\min,m}(p)\) such that, for every \(p\in\mathcal P\) and \(B\geq B_{\min,m}(p)\),
\begin{equation}
    \MSE_m(p,B)=b_m(p)^2+\frac{v_m(p)+2b_m(p)c_m(p)}{B}+\rho_m(p,B),
    \qquad
    \abs{\rho_m(p,B)}\leq \frac{C_m(p)}{B^2}.
    \label{eq:certified_law}
\end{equation}
We write \(\widetilde v_m(p)=v_m(p)+2b_m(p)c_m(p)\).  This is the leading \(1/B\) coefficient of the MSE, not necessarily a pure variance coefficient; when \(b_m c_m<0\), the quotient-bias cross-term can reduce the local slope.  The finite-sample bias coefficient \(c_m\) is the same ratio-estimator effect familiar in survey sampling \citep{Cochran1977}.  The crossing theorems below keep the relevant sign conditions explicit rather than assuming \(\widetilde v_m\ge0\).  Some ratio estimators first produce a two-part bound
\begin{equation}
    \abs{\rho_m(p,B)}
    \leq \frac{C_m(p)}{B^2}+C_m^{\exp}(p)e^{-\gamma_m(p)B},
    \qquad \gamma_m(p)>0.
    \label{eq:certified_law_with_exp}
\end{equation}
Such a statement is converted into \cref{eq:certified_law} by increasing the threshold to any \(B_{\exp,m}(p)\) satisfying
\begin{equation}
    B_{\exp,m}(p)\geq \frac{2}{\gamma_m(p)},
    \qquad
    C_m^{\exp}(p)e^{-\gamma_m(p)B_{\exp,m}(p)}
    \leq \frac{C_{m,\mathrm{abs}}(p)}{B_{\exp,m}(p)^2},
    \label{eq:exponential_absorption_threshold}
\end{equation}
and replacing \(C_m\) by \(C_m+C_{m,\mathrm{abs}}\).  Since \(B^2e^{-\gamma_m(p)B}\) is decreasing for \(B\geq2/\gamma_m(p)\), this absorption then holds for every \(B\geq B_{\exp,m}(p)\).  Thus exponential bad-denominator terms are not silently hidden in the local law; they are either displayed explicitly or absorbed after an explicit sample-size threshold.
\end{definition}

The cross-term \(2b_mc_m/B\) is retained because it is the leading interaction between population bias and statistical quotient bias.  When a method is a plain sample mean, \(c_m=0\).  When the leading MSE coefficient \(\widetilde v_m(p)\) depends smoothly on \(p\), the local expansion of \(\widetilde v_m(p)/B\) generally contains \(p/B\) terms.  Those terms should not be hidden inside \(\OO(p^2/B)\) unless the first derivative vanishes.

For unmitigated estimation,
\begin{equation}
    \mu_0(p)=\Tr(O\rho_p),
    \qquad
    b_0(p)=\mu_0(p)-\mu_\star=\beta_0p+\OO(p^2),
    \qquad
    \beta_0=\Tr(O\Delta).
    \label{eq:unmitigated_bias}
\end{equation}
If the one-shot variance is \(\sigma_0^2(p)\), then
\begin{equation}
    \MSE_0(p,B)=b_0(p)^2+\frac{\sigma_0^2(p)}{B}.
\end{equation}

\section{Virtual distillation: population law}

For \(M\geq2\), idealized VD targets
\begin{equation}
    \mu_{\mathrm{VD},M}(p)=\frac{N_M(p)}{D_M(p)},
    \qquad
    N_M(p)=\Tr(O\rho_p^M),
    \qquad
    D_M(p)=\Tr(\rho_p^M).
    \label{eq:vd_population_law}
\end{equation}
Assume the spectral decomposition
\begin{equation}
    \rho_p=\sum_{j\geq1}\lambda_j(p)\ket{\phi_j(p)}\bra{\phi_j(p)},
    \qquad
    \lambda_1(p)>\lambda_2(p)\geq\cdots\geq0.
\end{equation}
Then
\begin{equation}
    \mu_{\mathrm{VD},M}(p)=
    \frac{\sum_j\lambda_j(p)^M O_{jj}(p)}{\sum_j\lambda_j(p)^M},
    \qquad
    O_{jj}(p)=\bra{\phi_j(p)}O\ket{\phi_j(p)}.
\end{equation}
VD suppresses the subdominant eigenvalues, but it converges to the observable expectation in the dominant eigenvector:
\begin{equation}
    \lim_{M\to\infty}\mu_{\mathrm{VD},M}(p)=O_{11}(p).
\end{equation}
The infinite-copy floor is therefore
\begin{equation}
    b_{\mathrm{VD},\infty}(p)=O_{11}(p)-\mu_\star.
\end{equation}

\begin{lemma}[Spectral leakage]
\label{lem:spectral_leakage_v2}
Let
\begin{equation}
    r_M(p)=\sum_{j>1}\left(\frac{\lambda_j(p)}{\lambda_1(p)}\right)^M.
\end{equation}
Then
\begin{equation}
    \abs{\mu_{\mathrm{VD},M}(p)-O_{11}(p)}
    \leq \frac{2\norm{O}_\infty r_M(p)}{1+r_M(p)}
    \leq2\norm{O}_\infty r_M(p).
\end{equation}
\end{lemma}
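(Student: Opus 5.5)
We will work directly from the spectral representation of \(\mu_{\mathrm{VD},M}(p)\) displayed just before the lemma. Dividing numerator and denominator by \(\lambda_1(p)^M>0\) (positive because \(\lambda_1>\lambda_2\geq0\)) and writing \(w_j=(\lambda_j/\lambda_1)^M\in[0,1]\), so that \(w_1=1\) and \(\sum_{j>1}w_j=r_M\), we obtain
\begin{equation}
    \mu_{\mathrm{VD},M}(p)=\frac{O_{11}+\sum_{j>1}w_jO_{jj}}{1+r_M}.
\end{equation}
Subtracting \(O_{11}=O_{11}(1+r_M)/(1+r_M)\) then gives the exact identity
\begin{equation}
    \mu_{\mathrm{VD},M}(p)-O_{11}(p)=\frac{\sum_{j>1}w_j\bigl(O_{jj}(p)-O_{11}(p)\bigr)}{1+r_M(p)}.
\end{equation}

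Next we bound the differences of diagonal elements. Each \(O_{jj}(p)=\bra{\phi_j}O\ket{\phi_j}\) is a real expectation of the Hermitian \(O\) in a unit vector, so \(\abs{O_{jj}}\leq\norm{O}_\infty\), and hence \(\abs{O_{jj}-O_{11}}\leq2\norm{O}_\infty\). The triangle inequality together with \(w_j\geq0\) then yields
\begin{equation}
    \abs{\mu_{\mathrm{VD},M}(p)-O_{11}(p)}\leq\frac{2\norm{O}_\infty\sum_{j>1}w_j}{1+r_M}=\frac{2\norm{O}_\infty r_M}{1+r_M}.
\end{equation}
The second inequality is immediate because \(r_M\geq0\) gives \(1+r_M\geq1\). (One could sharpen \(2\norm{O}_\infty\) to the spectral spread \(\lambda_{\max}(O)-\lambda_{\min}(O)\), but this is not needed.)

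There is no genuine obstacle here. The only points needing care are bookkeeping. First, repeated or zero eigenvalues \(\lambda_j\) for \(j>1\) are harmless, since they are simply counted in the sum defining \(r_M\). Second, the eigenvectors \(\phi_j(p)\) for a degenerate subdominant eigenvalue are not unique, but the sum \(\sum_{j}\lambda_j^M O_{jj}\) equals \(\Tr(O\rho_p^M)\) for any choice of orthonormal eigenbasis, so the identity above is basis-independent. Third, \(O_{11}\) itself is well defined because \(\lambda_1\) is simple by assumption.
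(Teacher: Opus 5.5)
Your proof is correct and follows the paper's own argument exactly: divide by \(\lambda_1^M\), write \(\mu_{\mathrm{VD},M}-O_{11}\) as \(\sum_{j>1}w_j(O_{jj}-O_{11})/(1+r_M)\), and bound each difference by \(2\norm{O}_\infty\). Your write-up is slightly more precise than the paper's wording, which calls this a ``weighted average'' when it is really a weighted sum with total weight \(r_M\).
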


\begin{proof}
Divide numerator and denominator by \(\lambda_1^M\).  The difference from \(O_{11}\) is a weighted average of \(O_{jj}-O_{11}\) over \(j>1\), divided by \(1+r_M\).  Since \(\abs{O_{jj}-O_{11}}\leq2\norm O_\infty\), the bound follows.
\end{proof}

\subsection{First-order endpoint coefficient in terms of the noise generator}

Assume now that \(\rho_\star=\ket{\psi_\star}\bra{\psi_\star}\) is pure.  Let
\begin{equation}
    Q_\star=I-\ket{\psi_\star}\bra{\psi_\star}.
\end{equation}
The eigenvalue \(1\) of \(\rho_\star\) is simple and separated from the zero eigenspace by a unit gap.  The \(\OO(p^2)\) operator term in \cref{eq:noise_expansion} changes the dominant eigenvector only at order \(\OO(p^2)\) by this unit gap.  Standard non-degenerate eigenvector perturbation \citep{Kato1995} gives
\begin{equation}
    \ket{\phi_1(p)}=\ket{\psi_\star}+p\ket{\eta}+\OO(p^2),
    \qquad
    \ket{\eta}=Q_\star\Delta\ket{\psi_\star},
    \qquad
    \ip{\psi_\star}{\eta}=0.
    \label{eq:eta_generator}
\end{equation}

\begin{proposition}[VD floor coefficient]
\label{prop:vd_delta_generator}
Under \cref{eq:noise_expansion} and \cref{eq:eta_generator},
\begin{equation}
    b_{\mathrm{VD},\infty}(p)=\delta_{\mathrm{VD}}p+\OO(p^2),
\end{equation}
with
\begin{equation}
    \delta_{\mathrm{VD}}
    =2\operatorname{Re}\bra{\psi_\star}OQ_\star\Delta\ket{\psi_\star}
    =2\operatorname{Re}\bra{\psi_\star}O(I-\ket{\psi_\star}\bra{\psi_\star})\mathcal L(\rho_\star)\ket{\psi_\star}.
    \label{eq:delta_vd_explicit}
\end{equation}
\end{proposition}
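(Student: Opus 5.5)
The plan is to expand $O_{11}(p)=\bra{\phi_1(p)}O\ket{\phi_1(p)}$ to first order in $p$ using the eigenvector expansion \cref{eq:eta_generator}, and then subtract $\mu_\star=\bra{\psi_\star}O\ket{\psi_\star}$. The definition $b_{\mathrm{VD},\infty}(p)=O_{11}(p)-\mu_\star$ then gives the claim directly.

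First, fix the phase and normalization convention for $\ket{\phi_1(p)}$. The eigenvector is determined only up to a phase, but $O_{11}$ is phase-invariant. We may therefore choose the phase so that $\ip{\psi_\star}{\phi_1(p)}$ is real and positive; this is the convention under which \cref{eq:eta_generator} holds with $\ip{\psi_\star}{\eta}=0$. Normalization is then automatic to first order: $\norm{\phi_1(p)}^2=1+2p\operatorname{Re}\ip{\psi_\star}{\eta}+\OO(p^2)=1+\OO(p^2)$. Any renormalization therefore only affects the $\OO(p^2)$ remainder.

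Next, substitute into the quadratic form:
\begin{equation}
    \bra{\phi_1}O\ket{\phi_1}
    =\bra{\psi_\star}O\ket{\psi_\star}
    +p\bigl(\bra{\psi_\star}O\ket{\eta}+\bra{\eta}O\ket{\psi_\star}\bigr)+\OO(p^2).
\end{equation}
Because $O=O^\dagger$, the two first-order terms are complex conjugates, so their sum equals $2\operatorname{Re}\bra{\psi_\star}O\ket{\eta}$. Inserting $\ket{\eta}=Q_\star\Delta\ket{\psi_\star}$ and $\Delta=\mathcal L(\rho_\star)$ yields both expressions in \cref{eq:delta_vd_explicit}. The remainder is uniform because $O$ is bounded and the Hilbert space is finite-dimensional, so $\OO(p^2)$ vector errors produce $\OO(p^2)$ scalar errors with constant depending on $\norm{O}_\infty$.

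The main obstacle is justifying \cref{eq:eta_generator} itself, in particular that $\eta$ carries no component along $\ket{\psi_\star}$ and that the $\OO(p^2)$ operator remainder in \cref{eq:noise_expansion} does not leak into first order. The paper takes this from Kato-type perturbation theory, using the unit gap between eigenvalue $1$ and the zero eigenspace. If a self-contained check is needed, I would use the Riesz projector $P_1(p)=\frac{1}{2\pi i}\oint(z-\rho_p)^{-1}\,dz$ on a circle of radius $1/2$ about $1$. A first-order resolvent expansion gives
\begin{equation}
    P_1(p)=\ket{\psi_\star}\bra{\psi_\star}+p\bigl(Q_\star\Delta\ket{\psi_\star}\bra{\psi_\star}+\ket{\psi_\star}\bra{\psi_\star}\Delta Q_\star\bigr)+\OO(p^2).
\end{equation}
Then, since $O_{11}(p)=\Tr(OP_1(p))$, the same formula follows without any phase choice, which sidesteps the convention issue entirely.
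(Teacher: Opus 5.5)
Your proposal is correct and matches the paper's proof: substitute \cref{eq:eta_generator} into $O_{11}(p)=\bra{\phi_1(p)}O\ket{\phi_1(p)}$, use $O=O^\dagger$ to combine the two cross terms into $2\operatorname{Re}\bra{\psi_\star}O\ket{\eta}$, and insert $\ket{\eta}=Q_\star\Delta\ket{\psi_\star}$. Your phase and normalization remarks, and the Riesz-projector computation of $\Tr(OP_1(p))$, go beyond the paper, which simply cites Kato for \cref{eq:eta_generator}. Your projector expansion is right, since the residues at $z=1$ pick out exactly the $Q_\star\Delta\ket{\psi_\star}\bra{\psi_\star}$ and $\ket{\psi_\star}\bra{\psi_\star}\Delta Q_\star$ terms, so it gives a phase-free, self-contained justification of that hypothesis.
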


\begin{proof}
By expansion of the dominant eigenvector,
\begin{align}
    O_{11}(p)
    &=\bra{\psi_\star+p\eta}O\ket{\psi_\star+p\eta}+\OO(p^2)\\
    &=\mu_\star+2p\operatorname{Re}\bra{\psi_\star}O\ket{\eta}+\OO(p^2).
\end{align}
Substitute \(\ket\eta=Q_\star\Delta\ket{\psi_\star}\).
\end{proof}

The same first-order coefficient holds for every fixed finite copy number \(M\geq2\).  Indeed, Weyl continuity \citep{HornJohnson2013} applied to \(\rho_p=\rho_\star+\OO(p)\) gives \(\lambda_1(p)=1+\OO(p)\) and \(\lambda_j(p)=\OO(p)\) for \(j>1\), so \(r_M(p)=\OO(p^M)\).  By \cref{lem:spectral_leakage_v2}, \(\mu_{\mathrm{VD},M}(p)-O_{11}(p)=\OO(p^M)=\OO(p^2)\) for \(M\geq2\), and therefore
\begin{equation}
    b_{\mathrm{VD},M}(p)=\delta_{\mathrm{VD}}p+\OO(p^2),
    \qquad M\geq2\ \text{fixed}.
    \label{eq:finite_M_vd_delta_same_v1}
\end{equation}

This formula prevents \(\delta_{\mathrm{VD}}\) from being a free symbol: it is the local first-order form of the dominant-eigenvector mismatch identified in \citet{Koczor2021NJP}, written directly as an observable-dependent projection of the first-order noise perturbation onto the orthogonal subspace of the ideal state.

\section{Virtual distillation: denominator concentration and quotient MSE}

Let \((X_{M,s},Y_{M,s})_{s=1}^B\) be independent copies of a one-sample pair satisfying
\begin{equation}
    \E X_M=N_M(p),
    \qquad
    \E Y_M=D_M(p),
\end{equation}
and define
\begin{equation}
    \widehat N_M=\frac1B\sum_{s=1}^B X_{M,s},
    \qquad
    \widehat D_M=\frac1B\sum_{s=1}^B Y_{M,s},
    \qquad
    \widehat\mu_{\mathrm{VD},M}=\frac{\widehat N_M}{\widehat D_M}.
\end{equation}
Write \(\sigma_{N,M}^2(p)=\Var(X_M)\), \(\sigma_{D,M}^2(p)=\Var(Y_M)\), and
\(\sigma_{ND,M}(p)=\Cov(X_M,Y_M)\) for the one-shot variance and covariance
coefficients used in the quotient law below.
The ratio is meaningful only when the denominator is statistically separated from zero.  This point is part of the operating-window theory, not a technical nuisance.

\begin{assumption}[Bounded one-shot VD estimator]
\label{ass:bounded_vd_pair_v4}
For fixed \((p,M)\), \(D_M(p)>0\), and the numerator-denominator one-shot pair satisfies
\begin{equation}
    \abs{X_M}\leq K_{N,M}(p),
    \qquad
    \abs{Y_M-D_M(p)}\leq K_{D,M}(p)
\end{equation}
almost surely.  Define centered variables \(\xi=X_M-N_M\) and \(\zeta=Y_M-D_M\), and signed one-shot moment constants
\begin{equation}
    m_{ab,M}(p)=\E[\xi^a\zeta^b].
\end{equation}
\end{assumption}

\begin{lemma}[Bernstein denominator certificate]
\label{lem:denom_bernstein_v4}
Under \cref{ass:bounded_vd_pair_v4},
\begin{equation}
    \Prob\!\left(\abs{\widehat D_M-D_M}\ge\frac{D_M}{2}\right)
    \leq
    2
    \exp\!\left[-\frac{B D_M(p)^2}{8\sigma_{D,M}^2(p)+\frac{4}{3}K_{D,M}(p)D_M(p)}\right].
    \label{eq:bernstein_denominator_v4}
\end{equation}
Consequently, to make this bilateral bad-denominator probability at most \(\varepsilon\), it is sufficient that
\begin{equation}
    B\geq
    B_{\mathrm{den},M}(p,\varepsilon)
    :=
    \left(\frac{8\sigma_{D,M}^2(p)}{D_M(p)^2}
    +\frac{4K_{D,M}(p)}{3D_M(p)}\right)
    \log\!\frac2\varepsilon.
    \label{eq:denom_sample_threshold_v4}
\end{equation}
\end{lemma}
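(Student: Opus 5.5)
This is a direct application of the two-sided Bernstein inequality to the centered, bounded, i.i.d.\ variables \(\zeta_s=Y_{M,s}-D_M(p)\), \(s=1,\dots,B\), from \cref{ass:bounded_vd_pair_v4}.

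\textbf{Setting up Bernstein.} These variables have mean zero, variance \(\sigma_{D,M}^2(p)\), and satisfy \(\abs{\zeta_s}\le K_{D,M}(p)\) almost surely. Bernstein's inequality for \(S_B=\sum_s\zeta_s\) states that for every \(t>0\),
\begin{equation}
    \Prob(\abs{S_B}\ge t)\le 2\exp\!\left[-\frac{t^2/2}{B\sigma_{D,M}^2+K_{D,M}t/3}\right].
\end{equation}
The event \(\abs{\widehat D_M-D_M}\ge D_M/2\) is exactly \(\abs{S_B}\ge BD_M/2\), so I take \(t=BD_M/2\).

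\textbf{Simplifying the exponent.} Substituting gives the numerator \(B^2D_M^2/8\) and the denominator \(B\sigma_{D,M}^2+BK_{D,M}D_M/6\). Multiplying top and bottom by \(8/B\) turns the exponent into
\begin{equation}
    -\frac{BD_M^2}{8\sigma_{D,M}^2+\tfrac43K_{D,M}D_M},
\end{equation}
which is \cref{eq:bernstein_denominator_v4}. The factor \(2\) comes from the union of the upper and lower tails.

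\textbf{Sample-size threshold.} Setting the right-hand side \(\le\varepsilon\) is equivalent to requiring
\begin{equation}
    B\ge\frac{8\sigma_{D,M}^2+\tfrac43K_{D,M}D_M}{D_M^2}\log\frac2\varepsilon.
\end{equation}
Dividing through by \(D_M^2\) gives exactly \(B_{\mathrm{den},M}(p,\varepsilon)\) in \cref{eq:denom_sample_threshold_v4}. Since the bound is monotone decreasing in \(B\), every larger \(B\) also qualifies.

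\textbf{Main obstacle.} There is no real difficulty here; the only care needed is bookkeeping the constants \(1/2\), \(1/3\), and \(8\) correctly. One could also note that \(K_{D,M}\) bounds the centered variable as the assumption states, so no factor-of-two loss from re-centering arises.
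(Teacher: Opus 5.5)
Your proposal is correct and follows the paper's own proof: Bernstein's inequality applied to both tails of the centered denominator average at threshold \(D_M/2\), a union bound for the factor \(2\), then solving the exponent for \(B\). Your constant bookkeeping checks out: \(t=BD_M/2\) gives \(\tfrac{BD_M^2}{8\sigma_{D,M}^2+\frac43K_{D,M}D_M}\), and you correctly note that the assumption already bounds the centered variable \(Y_M-D_M\), so no re-centering factor is lost.
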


\begin{proof}
Apply Bernstein's inequality to both tails of \(B^{-1}\sum_s(Y_{M,s}-D_M)\) with threshold \(D_M/2\), and take a union bound \citep{Boucheron2013}.
\end{proof}

The ratio \(\sigma_{D,M}^2/D_M^2\) in \cref{eq:denom_sample_threshold_v4} is the denominator-driven inflation parameter.  If \(D_M\) is small, VD may still have a favorable population bias but require many samples before the ratio estimator is stable.

For a non-asymptotic statement define the clipped quotient
\begin{equation}
    \widehat\mu_{\mathrm{VD},M}^{\mathrm{clip}}
    =\frac{\widehat N_M}{\max\{\widehat D_M,D_M/2\}}.
    \label{eq:clipped_quotient_v4}
\end{equation}
It equals the ordinary quotient on the good event \(\widehat D_M\geq D_M/2\).  The clipping is a proof device; in practice it is also the natural regularization when a random denominator is used.  The same proof works with any known floor \(\tau\in(0,D_M/2]\), with \(\tau\)-dependent constants; the choice \(D_M/2\) gives the cleanest certificate.

The certificate is stated in terms of explicit one-shot moment envelopes.
\Cref{def:quotient_constants_v4} in \cref{app:constants} assembles, from the
bounded pair of \cref{ass:bounded_vd_pair_v4}, finite moment envelopes
\(\mathfrak m_{ab,M}(p)\), expectation- and variance-remainder constants
\(A_{E,M}(p)\) and \(A_{V,M}(p)\), bad-event prefactors \(B_{E,M}(p)\) and
\(B_{V,M}(p)=2B_{E,M}(p)^2\), and the exponential bad-event term
\(E_{\mathrm{bad},M}(p,B)\); the ratio \(\omega_M(p)=K_{D,M}(p)/D_M(p)\)
controls the bad-event bookkeeping.  Only these roles are needed to read the
theorem below; the full assembly is displayed in \cref{app:constants}.
The classical delta-method form of the leading coefficients is recalled in
\cref{app:template}.

\begin{remark}[Tightness of the certified constants]
\label{rem:constant_tightness_v7}
The constants in \cref{eq:AV_constant_v4} are proof certificates, not sharp estimates of the physical sampling penalty.  The main looseness comes from the crude bounded-sign envelope for \(\mathfrak t^{(V)}_M\), the event-local bad-denominator sup-norms, and the fourth-moment envelopes used in \(A_{E,M}\) and \(A_{Q,M}\).  These steps are robust by design but can overestimate the true second moment by several orders of magnitude.  Consequently, the small numerical parameters used in the toy non-vacuity checks below are artifacts of a loose variance certificate, not physical thresholds.  The Efron--Stein certificate in \cref{lem:efron_stein_vd_v11} and the direct second-order calculation in \cref{prop:vd_second_order_variance_v11} are included precisely to separate what is needed for a rigorous certificate from what is expected to control the true sampling scale.  In the validation layer, the Level~1 white-box simulations provide the sharper empirical sampling scale against the exact closed forms; those numerical scales soften the practical reading of the extreme toy parameters, but they do not replace the certified constants in the theorem statements.
\end{remark}

\begin{theorem}[Non-asymptotic VD quotient law with explicit moment remainder]
\label{thm:vd_quotient_certified_v4}
Under \cref{ass:bounded_vd_pair_v4}, for every \(B\geq1\), the clipped VD estimator satisfies
\begin{align}
    \E[\widehat\mu_{\mathrm{VD},M}^{\mathrm{clip}}]
    &=\mu_{\mathrm{VD},M}(p)+\frac{c_{\mathrm{VD},M}(p)}{B}+r^{(E)}_{M}(p,B),
    \label{eq:vd_expectation_certified_v4}\\
    \Var(\widehat\mu_{\mathrm{VD},M}^{\mathrm{clip}})
    &=\frac{v_{\mathrm{VD},M}(p)}{B}+r^{(V)}_{M}(p,B),
    \label{eq:vd_variance_certified_v4}
\end{align}
where
\begin{align}
    c_{\mathrm{VD},M}(p)
    &=\frac{\mu_{\mathrm{VD},M}(p)\sigma_{D,M}^2(p)-\sigma_{ND,M}(p)}{D_M(p)^2},
    \label{eq:vd_c_v4}\\
    v_{\mathrm{VD},M}(p)
    &=\frac{\sigma_{N,M}^2(p)+\mu_{\mathrm{VD},M}(p)^2\sigma_{D,M}^2(p)
    -2\mu_{\mathrm{VD},M}(p)\sigma_{ND,M}(p)}{D_M(p)^2}
    \label{eq:vd_v_v4}\\
    &=\frac{\Var(X_M-\mu_{\mathrm{VD},M}(p)Y_M)}{D_M(p)^2}.
\end{align}
Moreover,
\begin{align}
    \abs{r^{(E)}_{M}(p,B)}
    &\leq \frac{A_{E,M}(p)}{B^2}+B_{E,M}(p)
    \exp\!\left[-\frac{B D_M(p)^2}{8\sigma_{D,M}^2(p)+\frac{4}{3}K_{D,M}(p)D_M(p)}\right],
    \label{eq:explicit_E_remainder_v4}\\
    \abs{r^{(V)}_{M}(p,B)}
    &\leq \frac{A_{V,M}(p)}{B^2}+E_{\mathrm{bad},M}(p,B).
    \label{eq:explicit_V_remainder_v4}
\end{align}
Therefore
\begin{equation}
    \MSE_{\mathrm{VD},M}(p,B)
    =b_{\mathrm{VD},M}(p)^2
    +\frac{v_{\mathrm{VD},M}(p)+2b_{\mathrm{VD},M}(p)c_{\mathrm{VD},M}(p)}{B}
    +\rho_{\mathrm{VD},M}(p,B),
    \label{eq:vd_mse_certified_v4}
\end{equation}
where \(b_{\mathrm{VD},M}(p):=\mu_{\mathrm{VD},M}(p)-\mu_\star\).  The constants \(B_{E,M}\) and \(B_{V,M}\) are those in \cref{def:quotient_constants_v4}.
The remainder obeys
\begin{equation}
    \abs{\rho_{\mathrm{VD},M}(p,B)}
    \leq
    \frac{C_{\mathrm{VD},M}(p)}{B^2}
    +C'_{\mathrm{VD},M}(p)
    \exp\!\left[-\frac{B D_M(p)^2}{8\sigma_{D,M}^2(p)+\frac{4}{3}K_{D,M}(p)D_M(p)}\right],
    \label{eq:vd_mse_remainder_v4}
\end{equation}
where one admissible explicit choice is
\begin{align}
    C_{\mathrm{VD},M}
    &=A_{V,M}+c_{\mathrm{VD},M}^2
      +2\abs{b_{\mathrm{VD},M}}A_{E,M}
      +2\abs{c_{\mathrm{VD},M}}A_{E,M}
      +2A_{E,M}^2,\\
    C'_{\mathrm{VD},M}
    &=B_{V,M}+2\abs{b_{\mathrm{VD},M}}B_{E,M}
      +2\abs{c_{\mathrm{VD},M}}B_{E,M}+2B_{E,M}^2.
\end{align}
These constants are not optimized; they are written only to make the certificate reproducible.  In the strict sense of \cref{def:certified_law}, choose any \(B_{\exp,\mathrm{VD},M}\) satisfying \cref{eq:exponential_absorption_threshold} with
\begin{equation}
    \gamma_{\mathrm{VD},M}(p)=\frac{D_M(p)^2}{8\sigma_{D,M}^2(p)+\frac43 K_{D,M}(p)D_M(p)},
    \qquad
    C_{\mathrm{VD},M}^{\exp}(p)=C'_{\mathrm{VD},M}(p),
\end{equation}
and absorb the exponential contribution into the \(B^{-2}\) remainder whenever
\[
    B\geq B_{\exp,\mathrm{VD},M}.
\]
The denominator scale \(B_{\mathrm{den},M}(p,\varepsilon)\) in \cref{eq:denom_sample_threshold_v4} is the optional high-probability scale that makes \(\Prob(G^c)\le\varepsilon\); it is not needed for the displayed exponential inequalities, which hold for every \(B\ge1\).
\end{theorem}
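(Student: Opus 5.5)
The argument splits on the good event \(G=\{\widehat D_M\ge D_M/2\}\) and expands the quotient algebraically on \(G\), so every remainder is an explicit finite moment rather than a delta-method limit. Write \(\bar\xi=\widehat N_M-N_M\), \(\bar\zeta=\widehat D_M-D_M\) and \(u=\bar\zeta/D_M\). The exact identity
\[
\frac{1}{1+u}=1-u+u^2-\frac{u^3}{1+u}
\]
holds on \(G\), where \(1+u\ge 1/2\). Hence
\[
\widehat\mu^{\mathrm{clip}}_{\mathrm{VD},M}=\frac{N_M+\bar\xi}{D_M}\Bigl(1-u+u^2-\frac{u^3}{1+u}\Bigr).
\]
On \(G^c\) the clipped estimator is \(\widehat N_M/(D_M/2)\), which is bounded by \(2K_{N,M}/D_M\) under \cref{ass:bounded_vd_pair_v4}. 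Taking expectations uses the i.i.d.\ structure:
\[
\E\bar\xi\bar\zeta=\frac{\sigma_{ND,M}}{B},\qquad \E\bar\zeta^2=\frac{\sigma^2_{D,M}}{B},\qquad \E\bar\zeta^2\bar\xi=\frac{m_{12,M}}{B^2},\qquad \E\bar\zeta^3=\frac{m_{03,M}}{B^2},
\]
and \(\E\bar\zeta^4=\OO(B^{-2})\) via \(3\sigma^4_{D,M}/B^2+m_{04,M}/B^3\). The \(1/B\) terms assemble into
\[
\frac{\mu_{\mathrm{VD},M}\sigma^2_{D,M}-\sigma_{ND,M}}{D_M^2}=c_{\mathrm{VD},M},
\]
giving \cref{eq:vd_expectation_certified_v4}. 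The cubic and quartic terms, together with the tail term \(u^3/(1+u)\) bounded by \(2\abs{u}^3\) on \(G\), are controlled by the moment envelopes \(\mathfrak m_{ab,M}\) and form \(A_{E,M}/B^2\). Cauchy--Schwarz or Hölder against \(\E\bar\zeta^4\) and \(\E\bar\xi^2\bar\zeta^4\) handles mixed products such as \(\bar\xi u^3\).

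The restriction to \(G\) matters because the polynomial part is integrated over \(G\), not over the whole space. I would replace it with the full-space expectation and pay for the difference \(\E[\text{poly}\,\1_{G^c}]\), together with the \(G^c\) contribution of the clipped estimator. Both are bounded by a sup-norm bound on \(G^c\) times \(\Prob(G^c)\). \Cref{lem:denom_bernstein_v4} then gives the exponential term with prefactor \(B_{E,M}\), whose size is governed by \(\omega_M=K_{D,M}/D_M\) since \(\abs u\le\omega_M\) almost surely.

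For the variance, I would compute \(\E[(\widehat\mu^{\mathrm{clip}})^2]\) the same way. Squaring the expansion to second order gives
\[
\mu^2_{\mathrm{VD},M}+\frac{2\mu_{\mathrm{VD},M}c_{\mathrm{VD},M}}{B}+\frac{\E\bigl(\bar\xi-\mu_{\mathrm{VD},M}\bar\zeta\bigr)^2}{D_M^2}+\OO(B^{-2}),
\]
where the third term equals \(v_{\mathrm{VD},M}/B\) by linearity. I then subtract the square of \cref{eq:vd_expectation_certified_v4}. The \(c^2_{\mathrm{VD},M}/B^2\) and cross remainders are absorbed into \(A_{V,M}\), and the bad-event terms into \(E_{\mathrm{bad},M}\), with squared prefactors \(B_{V,M}=2B^2_{E,M}\).

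The MSE identity \cref{eq:vd_mse_certified_v4} is then pure bookkeeping. I would write
\[
\MSE=\bigl(b_{\mathrm{VD},M}+c_{\mathrm{VD},M}/B+r^{(E)}_M\bigr)^2+\Var,
\]
expand the square, and bound \((x+y)^2\le 2x^2+2y^2\) on the product of the two remainder pieces. This produces exactly the displayed \(C_{\mathrm{VD},M}\) and \(C'_{\mathrm{VD},M}\); squared exponentials are dominated by the single exponential since \(\Prob(G^c)\le 2\). The final absorption into the strict form of \cref{def:certified_law} follows directly from \cref{eq:exponential_absorption_threshold}.

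The main obstacle is the variance remainder. It requires fourth- to sixth-order mixed moments of \((\bar\xi,\bar\zeta)\) with explicit \(B^{-2}\) scaling, and careful handling of sign-indefinite terms \(\mathfrak t^{(V)}_M\). My first approach would be to bound these crudely using boundedness, \(\abs{\bar\zeta}\le K_{D,M}\) and \(\abs{\bar\xi}\le 2K_{N,M}\), together with Rosenthal/Marcinkiewicz-type identities for centered i.i.d.\ sums, \(\E\bar\zeta^{2k}\le C_k\,B^{-k}\max(\sigma^{2k},K^{2k}B^{1-k})\), accepting loose constants as \cref{rem:constant_tightness_v7} anticipates.
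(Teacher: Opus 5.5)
Your architecture is the paper's. You split on a good denominator event, expand the quotient by an exact Neumann identity, and evaluate the low-order monomials exactly. You charge \(G^c\) by a sup-norm bound (using \(\abs{u}\le\omega_M\)) times the Bernstein probability. You then expand \((b+c/B+r^{(E)})^2\), which reproduces \(C_{\mathrm{VD},M}\) and \(C'_{\mathrm{VD},M}\) term by term. Using the one-sided event \(\{\widehat D_M\ge D_M/2\}\) instead of \(\{\abs{\bar\zeta}\le D_M/2\}\) is harmless, because the identity is algebraic and \((1+u)^{-1}\le 2\) there.

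The gap is that your expansion stops one order too early for the expectation remainder. You keep \(u^3/(1+u)\) as the tail and bound it by \(2\abs{u}^3\). Multiplied by \(\mu_{\mathrm{VD},M}\), which is generically nonzero, this gives an upper bound of order \(\E\abs{\bar\zeta}^3/D_M^3\). By Lyapunov, \(\E\abs{\bar\zeta}^3\ge(\E\bar\zeta^2)^{3/2}=\sigma_{D,M}^3B^{-3/2}\). So this route gives an \(O(B^{-3/2})\) remainder, not \(A_{E,M}/B^2\). Notice that the identity \(\E\bar\zeta^3=m_{03,M}/B^2\), which you list, never actually enters your expansion.

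The repair is the paper's fourth-order identity \cref{eq:explicit_neumann_terms_v5}. Split \(u^3/(1+u)=u^3-u^4/(1+u)\). Then the \(\mu\)-weighted cubic \(-\mu_M\bar\zeta^3/D_M^3\) is handled by the signed moment \(\mathfrak m_{03,M}\); the cancellation in the signed third moment is what gives \(B^{-2}\). Only \(\bar\xi\bar\zeta^3/D_M^4\) (via \(\mathfrak m_{13,M}\)) and the geometric tail \(2K_{N,M}\abs{\bar\zeta}^4/D_M^5\) (via \(\mathfrak m_{04,M}\)) are bounded in absolute value. Together with \(\mathfrak m_{12,M}\), these are exactly the four pieces of \(A_{E,M}\).

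The same signed-versus-absolute distinction matters in the variance step. The cross term \(2\Cov(L_M,Q_M)\) is third order, so the crude Cauchy--Schwarz or Rosenthal envelopes you propose give only \(\sqrt{\Var(L_M)\Var(Q_M)}=O(B^{-3/2})\). It must be computed exactly:
\[
\E[L_MQ_M]=\frac{2\mu_Mm_{12,M}-m_{21,M}-\mu_M^2m_{03,M}}{B^2D_M^3}.
\]
This is what \(A_{LQ,M}\) in \cref{def:quotient_constants_v4} encodes. Crude envelopes are safe only for the terms involving the tail \(T_M\) collected in \(\mathfrak t^{(V)}_M\), where \(\sqrt{\Var(L_M)\Var(T_M)}\) is already \(O(B^{-2})\).

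Also, computing \(\E[(\widehat\mu^{\rm clip})^2]\) and \((\E\widehat\mu^{\rm clip})^2\) separately makes you carry the \(\mu\)-weighted cubic terms to \(O(B^{-2})\) twice, only for them to cancel. Working with the centred decomposition \(\Var(L_M+Q_M+T_M)\) plus the \(G^c\)-supported correction, as the paper does, avoids this.
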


Proof in \cref{app:proofs}.

\begin{remark}[Ordinary versus clipped quotient]
If the implementation guarantees the needed inverse-moment control for \(\widehat D_M^{-1}\), the same leading expansion applies to the ordinary quotient.  Without such a guarantee, the clipped quotient is the mathematically safe estimator.  The two estimators coincide on \(\{\widehat D_M\ge D_M/2\}\), so their leading operating-window laws agree whenever the ordinary quotient has controlled tails on the complementary event.  For sign-valued denominator streams, \(\widehat D_M=0\) can occur with positive probability for some finite \(B\) unless the distribution is degenerate; in that case a floor or clipping rule is required before the finite-sample MSE is a well-defined certified object.
\end{remark}

\subsection{Sharper variance diagnostics for the VD quotient}
\label{subsec:vd_sharper_variance_v11}

The certified law above is worst-case by construction.  We add a sharper,
fully non-asymptotic Efron--Stein certificate for the total variance of the
clipped quotient.  It does not recover the exact coefficient
\(v_{\mathrm{VD},M}\), but it avoids the large bounded-moment remainder
envelope used in \(A_{V,M}\).  A complementary asymptotic second-order
variance calculation, which identifies the moment combination controlling the
true \(B^{-2}\) correction, is given in
\cref{prop:vd_second_order_variance_v11} in \cref{app:constants}.

\begin{lemma}[Efron--Stein total-variance certificate for clipped VD]
\label{lem:efron_stein_vd_v11}
Under \cref{ass:bounded_vd_pair_v4}, the clipped quotient in \cref{eq:clipped_quotient_v4} satisfies
\begin{equation}
    \Var\!\left(\widehat\mu_{\mathrm{VD},M}^{\mathrm{clip}}\right)
    \le
    \frac{8K_{N,M}^2\bigl(D_M+2K_{D,M}\bigr)^2}{D_M^4}\,\frac1B.
    \label{eq:efron_stein_vd_bound_v11}
\end{equation}
Consequently, for \(D_M\ge1/2\) and bounded Hadamard-test signs \(K_{N,M},K_{D,M}=\OO(1)\), the clipped VD quotient has a non-asymptotic total-variance certificate with a constant often of order \(10^2\)--\(10^3\) in the bounded-sign cases considered below, rather than the much larger \(B^{-2}\)-remainder envelope in \cref{eq:AV_constant_v4}.
\end{lemma}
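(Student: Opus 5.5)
We apply the Efron--Stein inequality to the clipped quotient viewed as a function of the \(B\) independent pairs, \(f(Z_1,\dots,Z_B)=\widehat N_M/\max\{\widehat D_M,D_M/2\}\) with \(Z_s=(X_{M,s},Y_{M,s})\). With \(Z'_s\) an independent copy of \(Z_s\) and \(f^{(s)}\) the value after replacing \(Z_s\) by \(Z'_s\), Efron--Stein gives
\begin{equation*}
    \Var(f)\le\frac12\sum_{s=1}^B\E\bigl[(f-f^{(s)})^2\bigr].
\end{equation*}
It therefore suffices to prove a deterministic bounded-differences estimate \(\abs{f-f^{(s)}}\le L/B\) with \(L=4K_{N,M}(D_M+2K_{D,M})/D_M^2\). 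Summing then yields \(\Var(f)\le\frac12\cdot B\cdot L^2/B^2=8K_{N,M}^2(D_M+2K_{D,M})^2/(D_M^4B)\), which is exactly \cref{eq:efron_stein_vd_bound_v11}.

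For the difference bound we write \(g(d)=\max\{d,D_M/2\}\). Replacing one sample moves \(\widehat N_M\) by \(\abs{X_{M,s}-X'_{M,s}}/B\le 2K_{N,M}/B\). By \cref{ass:bounded_vd_pair_v4} it moves \(\widehat D_M\) by at most \(\abs{Y_{M,s}-Y'_{M,s}}/B\le 2K_{D,M}/B\), and \(g\) is 1-Lipschitz, so \(g\) moves by the same amount. We decompose
\begin{equation*}
    \frac{n}{g}-\frac{n'}{g'}=\frac{n-n'}{g}+n'\,\frac{g'-g}{g\,g'}.
\end{equation*}
Both \(g,g'\ge D_M/2\), and \(\abs{n'}\le K_{N,M}\) because each \(\abs{X_M}\le K_{N,M}\). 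This gives
\begin{equation*}
    \abs{f-f^{(s)}}\le\frac{2K_{N,M}}{B}\cdot\frac{2}{D_M}+K_{N,M}\cdot\frac{2K_{D,M}}{B}\cdot\frac{4}{D_M^2}
    =\frac{4K_{N,M}(D_M+2K_{D,M})}{D_M^2B}.
\end{equation*}
This is precisely \(L/B\). The bound is deterministic and uses no good-event restriction, which is why it holds for every \(B\ge1\) with no exponential term.

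For the consequence clause we substitute \(D_M\ge1/2\) and bounded Hadamard-test signs. For example, \(K_{N,M}\le\norm{O}_\infty\) and \(K_{D,M}\le 2\) give a constant \(8K_{N,M}^2(D_M+2K_{D,M})^2/D_M^4\le 8\cdot 16\cdot(4.5)^2\cdot\norm{O}_\infty^2\approx 2.6\times10^3\norm{O}_\infty^2\) at worst, and smaller values for larger \(D_M\). This accounts for the stated order \(10^2\)--\(10^3\). The comparison with \cref{eq:AV_constant_v4} is then only a remark on the size of the envelope.

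The only point needing care is that the clipping floor \(D_M/2\) is deterministic, since \(D_M\) is a population quantity. This is what makes \(g\) a fixed 1-Lipschitz map bounded below and gives a uniform bound on \(1/(gg')\). If one instead used the ordinary quotient, the difference bound would fail near \(\widehat D_M=0\), and the argument would need the good/bad event split of \cref{thm:vd_quotient_certified_v4}. I expect no other obstacle, apart from checking which one-shot bound (\(\abs{X_M}\le K_{N,M}\) versus centered) enters \(n'\); with the uncentered bound as stated in \cref{ass:bounded_vd_pair_v4} the constants match exactly.
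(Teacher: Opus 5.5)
Your proof is correct and follows the paper's argument essentially verbatim: you derive the same bounded-differences estimate \(\abs{f-f^{(s)}}\le 4K_{N,M}(D_M+2K_{D,M})/(D_M^2B)\) from the deterministic floor \(D_M/2\) and \(\abs{\widehat N_M}\le K_{N,M}\), and you feed it into Efron--Stein in the same way. Your explicit decomposition \(n/g-n'/g'=(n-n')/g+n'(g'-g)/(gg')\) spells out the Lipschitz constants \(2/D_M\) in \(n\) and \(4K_{N,M}/D_M^2\) in \(d\), which the paper asserts without derivation.
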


\begin{proof}
Let \(f\) be the clipped quotient as a function of the \(B\) independent pairs \((X_s,Y_s)\).  Replace only the \(j\)-th pair and denote the resulting function by \(f^{(j)}\).  The sample numerator changes by at most \(2K_{N,M}/B\), while the sample denominator changes by at most \(2K_{D,M}/B\).  Since the clipped denominator is always at least \(D_M/2\), the map \((n,d)\mapsto n/\max\{d,D_M/2\}\) is Lipschitz with constants \(2/D_M\) in \(n\) and at most \(4K_{N,M}/D_M^2\) in \(d\), using \(\abs{\widehat N_M}\le K_{N,M}\).  Hence
\begin{equation}
    \abs{f-f^{(j)}}
    \le
    \frac{4K_{N,M}}{BD_M}
    +\frac{8K_{N,M}K_{D,M}}{BD_M^2}
    =
    \frac{4K_{N,M}(D_M+2K_{D,M})}{BD_M^2}.
\end{equation}
The Efron--Stein inequality \citep{Boucheron2013} gives
\begin{equation}
    \Var(f)\le \frac12\sum_{j=1}^B \E\big[(f-f^{(j)})^2\big]
    \le \frac{B}{2}\left(\frac{4K_{N,M}(D_M+2K_{D,M})}{BD_M^2}\right)^2,
\end{equation}
which is \cref{eq:efron_stein_vd_bound_v11}.
\end{proof}

\section{Copy-number choice for VD}

The copy number \(M\) is not a free improvement knob.  Increasing \(M\) suppresses subdominant spectral components, but it can also reduce \(D_M=\Tr(\rho_p^M)\), which increases the denominator-driven variance.

\begin{lemma}[Denominator spectral bounds]
\label{lem:denom_spectral_bounds_v4}
Let \(r_M(p)=\sum_{j>1}(\lambda_j/\lambda_1)^M\).  Then
\begin{equation}
    D_M(p)=\lambda_1(p)^M\bigl(1+r_M(p)\bigr),
    \qquad
    \lambda_1(p)^M\leq D_M(p).
    \label{eq:dm_spectral_bounds_v4}
\end{equation}
\end{lemma}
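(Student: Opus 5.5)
The identity follows directly from the spectral decomposition of \(\rho_p\) assumed in the VD population law. Since \(\rho_p=\sum_j\lambda_j(p)\ket{\phi_j(p)}\bra{\phi_j(p)}\) with orthonormal eigenvectors, functional calculus gives \(\rho_p^M=\sum_j\lambda_j^M\ket{\phi_j}\bra{\phi_j}\). Taking the trace therefore yields
\begin{equation}
    D_M(p)=\Tr(\rho_p^M)=\sum_{j\geq1}\lambda_j(p)^M .
\end{equation}

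The next step is to factor out the dominant eigenvalue. This requires \(\lambda_1(p)>0\), which holds because \(\lambda_1>\lambda_2\geq0\) by the assumed ordering. Dividing out \(\lambda_1^M\) gives
\begin{equation}
    D_M=\lambda_1^M\Bigl(1+\sum_{j>1}(\lambda_j/\lambda_1)^M\Bigr)=\lambda_1^M\bigl(1+r_M\bigr),
\end{equation}
which is the first claim of \cref{eq:dm_spectral_bounds_v4}, with \(r_M\) exactly as in \cref{lem:spectral_leakage_v2}.

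For the inequality, every \(\lambda_j\geq0\), so each ratio \(\lambda_j/\lambda_1\) is nonnegative. Hence \(r_M(p)\geq0\), and the factorization immediately gives \(D_M\geq\lambda_1^M\).

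No step here is a real obstacle. The only points requiring care are nonnegativity of the spectrum, which holds because \(\rho_p\) is a density matrix (positive semidefinite), and strict positivity of \(\lambda_1\), which is needed to divide. Both are guaranteed by the standing spectral assumptions.
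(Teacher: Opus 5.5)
Your proof is correct and follows the paper's argument: write \(D_M=\sum_j\lambda_j^M\) from the spectral decomposition, factor out \(\lambda_1^M\), and use \(r_M\ge0\). Your explicit checks that \(\lambda_1>0\) (needed to divide) and \(\lambda_j\ge0\) (from positive semidefiniteness) spell out details the paper's one-line proof leaves implicit.
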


\begin{proof}
Use \(D_M=\sum_j\lambda_j^M\) and factor out \(\lambda_1^M\).
\end{proof}

\subsection{Concrete independent derangement/Hadamard-test implementation}

To anchor the variance constants in a concrete implementation, we fix the following idealized estimator.  Let \(U_M\) be the cyclic derangement acting on \(M\) copies.  The denominator \(D_M=\Tr(\rho_p^M)\) is estimated by a controlled-\(U_M\) Hadamard test \citep{Ekert2002Direct}.  For a Pauli observable \(O\), the numerator \(N_M=\Tr(O\rho_p^M)\) is estimated by an independent controlled observable-derangement Hadamard test.  Indeed, the observable-derangement operator is a product of a Pauli unitary and \(U_M\), hence unitary, so the standard Hadamard test applies.  Each test returns a binary outcome in \(\{\pm1\}\).  We use \(B\) denominator shots and \(B\) numerator shots; hence the physical resource is proportional to \(R_M=2\kappa_M B\), where \(\kappa_M\) is the circuit resource per Hadamard test.

For this implementation,
\begin{equation}
    X_M\in\{\pm1\},
    \qquad
    Y_M\in\{\pm1\},
    \qquad
    \E X_M=N_M,
    \qquad
    \E Y_M=D_M,
\end{equation}
and the two streams are independent.  Therefore
\begin{equation}
    \sigma_{N,M}^2=1-N_M^2,
    \qquad
    \sigma_{D,M}^2=1-D_M^2,
    \qquad
    \sigma_{ND,M}=0.
    \label{eq:hadamard_variances_v4}
\end{equation}
The VD quotient constants become
\begin{align}
    c_{\mathrm{VD},M}^{\mathrm{Had}}(p)
    &=\frac{\mu_{\mathrm{VD},M}(p)\bigl(1-D_M(p)^2\bigr)}{D_M(p)^2},
    \label{eq:hadamard_c_v4}\\
    v_{\mathrm{VD},M}^{\mathrm{Had}}(p)
    &=\frac{1-N_M(p)^2+\mu_{\mathrm{VD},M}(p)^2\bigl(1-D_M(p)^2\bigr)}{D_M(p)^2}.
    \label{eq:hadamard_v_v4}
\end{align}
For Pauli \(O\) with \(\|O\|_\infty\le1\), positivity of \(\rho_p^M\) gives \(\abs{N_M}\le \Tr(\rho_p^M)=D_M\), hence \(\abs{\mu_{\mathrm{VD},M}}\le1\).  The denominator certificate is explicit:
\begin{equation}
    B\geq
    \left(
    \frac{8(1-D_M^2)}{D_M^2}+\frac{8}{3D_M}
    \right)\log\!\frac2\varepsilon
    \label{eq:hadamard_Bmin_v4}
\end{equation}
is sufficient for \(\Prob(\abs{\widehat D_M-D_M}\ge D_M/2)\leq\varepsilon\).

\begin{remark}[One-sided operational convention for \(B_{\mathrm{den}}\)]
\label{rem:bden_onesided_convention}
The event monitored by the numerical diagnostics in this work is the one-sided
clip event \(\{\widehat D_M<D_M/2\}\), for which the one-sided Bernstein bound
holds without the bilateral prefactor \(2\):
\begin{equation}
    \Prob\!\left(\widehat D_M<\frac{D_M}{2}\right)
    \leq
    \exp\!\left[-\frac{B D_M(p)^2}{8\sigma_{D,M}^2(p)+\frac{4}{3}K_{D,M}(p)D_M(p)}\right].
    \label{eq:bernstein_denominator_onesided}
\end{equation}
Accordingly, every \(B_{\mathrm{den}}\) value and \(B/B_{\mathrm{den}}\)
coordinate reported in the Level~1--3 diagnostics uses the one-sided threshold
obtained from \cref{eq:denom_sample_threshold_v4} (equivalently
\cref{eq:hadamard_Bmin_v4}) by replacing \(\log(2/\varepsilon)\) with
\(\log(1/\varepsilon)\), evaluated at \(\varepsilon=10^{-3}\); at
\(B=B_{\mathrm{den}}\) the one-sided bound
\cref{eq:bernstein_denominator_onesided} then equals \(\varepsilon\) exactly.
The one-sided threshold coincides with the bilateral
\(B_{\mathrm{den},M}(p,\varepsilon)\) evaluated at failure probability
\(2\varepsilon\), and is smaller than the bilateral threshold at the same
\(\varepsilon\) by the factor
\(\log(1/\varepsilon)/\log(2/\varepsilon)\simeq0.91\) for
\(\varepsilon=10^{-3}\).  The ``Bernstein bound'' curves in the Level~2
denominator figure are likewise the one-sided bound
\cref{eq:bernstein_denominator_onesided}.
\end{remark}

\begin{proposition}[VD implementation variance inflation]
\label{prop:implementation_variance_inflation_v4}
For the independent Hadamard-test implementation above,
\begin{equation}
    v_{\mathrm{VD},M}^{\mathrm{Had}}(p)
    \leq \frac{2}{D_M(p)^2}
    \leq 2\lambda_1(p)^{-2M}.
    \label{eq:implementation_upper_v4}
\end{equation}
If, in addition, \(1-N_M(p)^2\geq v_->0\) on the considered parameter range, then
\begin{equation}
    v_{\mathrm{VD},M}^{\mathrm{Had}}(p)\geq v_-\lambda_1(p)^{-2M}(1+r_M(p))^{-2}.
    \label{eq:implementation_lower_v4}
\end{equation}
\end{proposition}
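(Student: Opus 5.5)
The plan is to work directly from the closed form \cref{eq:hadamard_v_v4} and bound its numerator above and below using the elementary facts already collected for the Hadamard-test implementation. Then we convert the resulting bounds in \(D_M\) into bounds in \(\lambda_1\) via \cref{lem:denom_spectral_bounds_v4}.

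\textbf{Upper bound.} Write the numerator of \cref{eq:hadamard_v_v4} as
\[
    n_M(p)=\bigl(1-N_M^2\bigr)+\mu_{\mathrm{VD},M}^2\bigl(1-D_M^2\bigr).
\]
Each bracket is at most \(1\).

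For the second bracket, we need \(0<D_M\le1\). This follows because \(D_M=\sum_j\lambda_j^M\) with \(\lambda_j\in[0,1]\) and \(\sum_j\lambda_j=1\), so \(\sum_j\lambda_j^M\le\sum_j\lambda_j=1\). Positivity of \(D_M\) is part of \cref{ass:bounded_vd_pair_v4}.

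Next, \(\abs{\mu_{\mathrm{VD},M}}\le1\) was noted just before \cref{eq:hadamard_Bmin_v4}, using \(\abs{N_M}\le D_M\) for Pauli \(O\). Hence \(0\le 1-N_M^2\le1\) and \(0\le\mu_{\mathrm{VD},M}^2(1-D_M^2)\le1\), so \(n_M\le2\) and \(v^{\mathrm{Had}}_{\mathrm{VD},M}\le 2/D_M^2\).

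The second inequality in \cref{eq:implementation_upper_v4} follows from \(\lambda_1^M\le D_M\) in \cref{eq:dm_spectral_bounds_v4}. This gives \(D_M^{-2}\le\lambda_1^{-2M}\), using \(\lambda_1>0\), which holds because \(\lambda_1\) is the strictly largest eigenvalue of a unit-trace state.

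\textbf{Lower bound.} Since the second term of \(n_M\) is nonnegative (because \(D_M\le1\)), we have \(n_M\ge1-N_M^2\ge v_-\). Then we use the exact identity \(D_M=\lambda_1^M(1+r_M)\) from \cref{lem:denom_spectral_bounds_v4}, which gives \(D_M^{-2}=\lambda_1^{-2M}(1+r_M)^{-2}\). Combining the two yields \cref{eq:implementation_lower_v4}.

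\textbf{Main obstacle.} Everything is routine, and the only step needing care is the sign of \(1-D_M^2\), i.e.\ verifying \(D_M\le1\). Without it the dropped term in the lower bound could be negative, and the upper bound would fail. I will settle it explicitly by the spectral argument above rather than quoting purity inequalities.
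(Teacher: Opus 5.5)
Your proposal is correct and follows the paper's own argument. For the upper bound you bound the numerator of the closed-form Hadamard variance by \(2\) and use \(D_M\ge\lambda_1^M\); for the lower bound you keep only \((1-N_M^2)/D_M^2\) and use \(D_M=\lambda_1^M(1+r_M)\). One small correction to your ``main obstacle'' remark: the upper bound does not actually need \(D_M\le1\), since \(\mu_{\mathrm{VD},M}^2(1-D_M^2)\le\mu_{\mathrm{VD},M}^2\le1\) whatever the sign of \(1-D_M^2\). Moreover, nonnegativity of both brackets is automatic, because they are the variances of the \(\pm1\)-valued Hadamard outcomes.
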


\begin{proof}
The upper bound follows from \cref{eq:hadamard_v_v4} and \(D_M\geq\lambda_1^M\).  The lower bound uses the numerator-variance term \((1-N_M^2)/D_M^2\) and \(D_M=\lambda_1^M(1+r_M)\).
\end{proof}

The next proposition is a model-dependent refinement rather than a compiled-circuit noise theorem; \cref{rem:impl_noise_interpretation_v11} states its intended scope.

\begin{proposition}[Multiplicative implementation-noise shift of the VD law]
\label{prop:impl_noise_shift_v11}
Suppose the controlled-derangement/Hadamard tests used to estimate the population quantities are followed by effective sign contractions
\begin{equation}
    \widetilde N_M=\eta_{N,M}N_M,
    \qquad
    \widetilde D_M=\eta_{D,M}D_M,
    \qquad
    \eta_{N,M},\eta_{D,M}>0,
\end{equation}
with \(\theta_M^{\rm impl}:=\eta_{N,M}/\eta_{D,M}\).  Then the implemented population quotient is
\begin{equation}
    \mu_{\mathrm{VD},M}^{\rm impl}=\theta_M^{\rm impl}\frac{N_M}{D_M},
\end{equation}
so the implemented population bias is
\begin{equation}
    b_{\mathrm{VD},M}^{\rm impl}
    =\left(\theta_M^{\rm impl}-1\right)\frac{N_M}{D_M}+b_{\mathrm{VD},M}.
\end{equation}
If the same contraction affects numerator and denominator, \(\theta_M^{\rm impl}=1\), the population quotient is unchanged and implementation noise appears first as sampling inflation.  In the independent sign model with one-shot variables in \(\{\pm1\}\),
\begin{equation}
    v_{\mathrm{VD},M}^{\rm impl}
    =\frac{1-\eta_{N,M}^2N_M^2+\big(\theta_M^{\rm impl}N_M/D_M\big)^2(1-\eta_{D,M}^2D_M^2)}{\eta_{D,M}^2D_M^2}.
    \label{eq:impl_noise_vd_variance_v11}
\end{equation}
Consequently, in the common-contraction case \(\eta_{N,M}=\eta_{D,M}=\eta_M\), the population quotient and hence the bias gap are unchanged.  The denominator part of the sampling coefficient carries an \(\eta_M^{-2}\)-type inflation, but the full coefficient is not generally a pure multiplicative rescaling of the ideal one because the Bernoulli variances are also replaced by \(1-\eta_M^2N_M^2\) and \(1-\eta_M^2D_M^2\).
\end{proposition}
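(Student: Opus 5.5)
The plan is a direct substitution of the contracted population quantities into the population quotient law \cref{eq:vd_population_law} and into the Hadamard-test sampling coefficient \cref{eq:hadamard_v_v4}.

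\emph{Step 1: population quotient.} The implemented population quotient is \(\widetilde N_M/\widetilde D_M=\eta_{N,M}N_M/(\eta_{D,M}D_M)=\theta_M^{\rm impl}N_M/D_M\); this uses \(\eta_{D,M}D_M>0\). Subtracting \(\mu_\star\) and inserting \(\pm N_M/D_M\) gives
\[
b_{\mathrm{VD},M}^{\rm impl}=(\theta_M^{\rm impl}-1)N_M/D_M+\bigl(N_M/D_M-\mu_\star\bigr),
\]
and the last bracket is \(b_{\mathrm{VD},M}\) by definition. When \(\theta_M^{\rm impl}=1\) the first term vanishes, so the quotient and the bias are unchanged.

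\emph{Step 2: sampling coefficient.} In the independent sign model, the one-shot variables are \(\pm1\)-valued with means \(\widetilde N_M\) and \(\widetilde D_M\). Exactly as in \cref{eq:hadamard_variances_v4}, this gives \(\sigma_N^2=1-\eta_{N,M}^2N_M^2\), \(\sigma_D^2=1-\eta_{D,M}^2D_M^2\) and \(\sigma_{ND}=0\). Substituting these, together with \(\mu=\mu^{\rm impl}_{\mathrm{VD},M}\) and \(D=\eta_{D,M}D_M\), into the delta-method coefficient \cref{eq:vd_v_v4} of \cref{thm:vd_quotient_certified_v4} yields \cref{eq:impl_noise_vd_variance_v11} directly. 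One should check that \cref{ass:bounded_vd_pair_v4} holds: the variables are bounded and the denominator is positive, so the theorem applies verbatim to the implemented pair.

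\emph{Step 3: common contraction.} Setting \(\eta_{N,M}=\eta_{D,M}=\eta_M\) gives \(\theta_M^{\rm impl}=1\). The coefficient then becomes
\[
\frac{1-\eta_M^2N_M^2+(N_M/D_M)^2(1-\eta_M^2D_M^2)}{\eta_M^2D_M^2}.
\]
This shows the \(\eta_M^{-2}\) prefactor together with the modified Bernoulli variances. To show it is not a pure rescaling, compare it with \(\eta_M^{-2}v^{\rm Had}_{\mathrm{VD},M}\): the numerators differ by \((1-\eta_M^2)(N_M^2+N_M^2)\), which is generally nonzero.

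The only delicate point is interpretive: the contraction is assumed to act on the means only, with the outcomes staying \(\pm1\). Hence the variance formula follows from the Bernoulli structure rather than from any circuit-level noise derivation. Beyond this, the argument is routine algebra.
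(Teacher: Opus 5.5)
Your proposal is correct and is essentially the paper's proof. You take the ratio of the contracted means for the quotient and the bias, then substitute the contracted Bernoulli variances \(1-\eta_{N,M}^2N_M^2\) and \(1-\eta_{D,M}^2D_M^2\), zero covariance, \(\mu=\theta_M^{\rm impl}N_M/D_M\), and denominator \(\eta_{D,M}D_M\) into \cref{eq:vd_v_v4}; your explicit numerator gap \(2(1-\eta_M^2)N_M^2\) relative to \(\eta_M^{-2}v^{\mathrm{Had}}_{\mathrm{VD},M}\) usefully makes concrete the non-rescaling claim that the paper only asserts.
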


\begin{proof}
The first two displays follow by taking the ratio of the contracted means.  For the variance coefficient, substitute the contracted means into the quotient variance formula \cref{eq:vd_v_v4} with independent numerator and denominator signs, so the covariance term is zero.  The denominator in \cref{eq:vd_v_v4} becomes \((\eta_{D,M}D_M)^2\), giving \cref{eq:impl_noise_vd_variance_v11}.  If \(\eta_{N,M}=\eta_{D,M}=\eta_M\), then the population quotient and hence the bias gap are unchanged.  The denominator in the quotient-variance formula contributes an \(\eta_M^{-2}\)-type inflation, but \cref{eq:impl_noise_vd_variance_v11} also replaces the Bernoulli variance factors by \(1-\eta_M^2N_M^2\) and \(1-\eta_M^2D_M^2\).  Thus the leading crossing budget should be computed from the displayed coefficient rather than by applying a universal scalar multiplier to the ideal variance.
\end{proof}

\begin{remark}[Interpretation]
\label{rem:impl_noise_interpretation_v11}
This model records the algebraic effect of multiplicative readout or controlled-test contractions on a VD quotient.  It is narrower than a circuit-level noise analysis: a compiled derangement implementation may create correlated numerator-denominator errors, coherent control errors, or nonmultiplicative distortions.  The proposition is a falsifiable baseline for the experimental section: common multiplicative test noise cancels in the population ratio but modifies the sampling coefficient through \cref{eq:impl_noise_vd_variance_v11}; asymmetric test noise creates an additional first-order bias.
\end{remark}

\begin{theorem}[Logarithmic copy-number scale for implemented VD]
\label{thm:mstar_log_scale_v4}
Fix \(p\) and suppose that, for \(M\) in an admissible range before the dominant-eigenvector floor takes over,
\begin{equation}
    A_-^2q^{2M}
    \leq
    \bigl(b_{\mathrm{VD},M}(p)-b_{\mathrm{VD},\infty}(p)\bigr)^2
    \leq
    A_+^2q^{2M},
    \qquad
    q=\frac{\lambda_2(p)}{\lambda_1(p)}<1, \qquad 0<\lambda_2(p)<\lambda_1(p),
    \label{eq:bias_two_sided_mstar_v4}
\end{equation}
and the Hadamard-test variance has the two-sided scaling in \cref{prop:implementation_variance_inflation_v4}.  Then every minimizer of the leading implemented VD objective
\begin{equation}
    \mathcal E_M(B)=A^2q^{2M}+\frac{V\lambda_1^{-2M}}{B}
\end{equation}
lies within an \(\OO(1)\) additive distance of
\begin{equation}
    M_B=\frac{\log B}{2\abs{\log\lambda_2(p)}}.
    \label{eq:mstar_log_lambda2_v4}
\end{equation}
Since \(\lambda_1(p)=1+\OO(p)\) in the local pure-state regime,
\begin{equation}
    M_B=\Theta\!\left(\frac{\log B}{\abs{\log\!\bigl(\lambda_2(p)/\lambda_1(p)\bigr)}}\right)
\end{equation}
up to local \(\lambda_1\)-dependent constants.  The objective \(\mathcal E_M\) is the \(M\)-dependent part of the leading MSE in the admissible range: there \(\abs{b_{\mathrm{VD},\infty}}\lesssim q^M\), so the cross-term \(2b_{\mathrm{VD},\infty}(b_{\mathrm{VD},M}-b_{\mathrm{VD},\infty})\) is absorbed into the constant multiplying \(q^{2M}\).  Once \(\abs{b_{\mathrm{VD},\infty}}\) is comparable to the leakage term, the optimal copy number saturates because additional copies increase variance without reducing the floor.
\end{theorem}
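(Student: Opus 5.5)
The plan is to reduce the statement to an elementary one-dimensional optimization. The key observation is that \(\mathcal E_M(B)\) is a sum of a decreasing and an increasing exponential in \(M\).

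Write \(q^{2M}=e^{-\alpha M}\) with \(\alpha=2\abs{\log q}=2(\log\lambda_1-\log\lambda_2)>0\). Write \(\lambda_1^{-2M}=e^{\beta M}\) with \(\beta=2\abs{\log\lambda_1}\). Since \(0<\lambda_2<\lambda_1\) and \(\sum_j\lambda_j=1\), we have \(\lambda_1<1\) and hence \(\beta>0\). The crucial identity is
\[
\alpha+\beta=-2\log\lambda_2=2\abs{\log\lambda_2}.
\]
This is where the denominator \(2\abs{\log\lambda_2}\) in \cref{eq:mstar_log_lambda2_v4} comes from, rather than \(\abs{\log q}\).

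First I treat \(M\) as a continuous variable, with \(f(M)=a e^{-\alpha M}+b e^{\beta M}\), \(a=A^2\) and \(b=V/B\). This function is strictly convex. Its unique stationary point solves \(\alpha a e^{-\alpha M}=\beta b e^{\beta M}\), so
\[
M^\ast=\frac{\log B+\log\bigl(\alpha A^2/(\beta V)\bigr)}{2\abs{\log\lambda_2}}
=M_B+\OO(1).
\]
By convexity, the integer minimizers lie in \(\{\lfloor M^\ast\rfloor,\lceil M^\ast\rceil\}\), which adds at most one to the offset. At \(M^\ast\) the optimal value is \(\mathcal E^\ast\asymp B^{-\alpha/(\alpha+\beta)}\), with constants depending only on \(A,V,\alpha,\beta\).

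Second, I pass from the model objective to the actual leading objective, which is only sandwiched. By \cref{eq:bias_two_sided_mstar_v4} the bias term lies between \(A_-^2q^{2M}\) and \(A_+^2q^{2M}\). By \cref{prop:implementation_variance_inflation_v4} the variance lies between \(v_-\lambda_1^{-2M}(1+r_M)^{-2}\) and \(2\lambda_1^{-2M}\). Here \(r_M\le d-1\) in finite dimension (in fact \(r_M\le1\) for \(M\ge2\)), so the lower factor is a constant. Hence
\[
\underline c\, f(M)\le\mathcal E_M\le\overline c\, f(M).
\]
Let \(\widehat M\) be any minimizer and set \(\Delta=\abs{\widehat M-M^\ast}\). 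Then
\[
\underline c f(\widehat M)\le\mathcal E_{\widehat M}\le\mathcal E_{\lceil M^\ast\rceil}\le \overline c f(\lceil M^\ast\rceil)\le \overline c' \mathcal E^\ast.
\]
On the other hand, away from \(M^\ast\) one exponential dominates, giving \(f(M^\ast\pm\Delta)\ge \mathcal E^\ast\cdot c\,e^{\min(\alpha,\beta)\Delta}\). Combining the two bounds yields \(\Delta\le \log(\overline c'/(c\,\underline c))/\min(\alpha,\beta)+1=\OO(1)\), uniformly in \(B\).

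Third, the \(\Theta\)-restatement follows because \(\log\lambda_1=\OO(p)\) in the local regime. This gives \(\abs{\log\lambda_2}=\abs{\log(\lambda_2/\lambda_1)}+\OO(p)\), so the two scales agree up to \(\lambda_1\)-dependent constants. The cross-term remark follows from \(\abs{b_{\mathrm{VD},\infty}}\lesssim q^M\): it allows \(2b_{\mathrm{VD},\infty}(b_{\mathrm{VD},M}-b_{\mathrm{VD},\infty})\) to be absorbed into \(A_\pm\). I also need to check that \(M^\ast\) lies in the admissible range; otherwise the minimizer sits at the saturation boundary, which is exactly the final sentence of the statement.

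The main obstacle I expect is the uniformity of the \(\OO(1)\) constant. It scales like \(1/\min(\alpha,\beta)\) together with \(\log\) of the constant ratios. Since \(\beta=\OO(p)\) is small near \(p=0\), the offset is bounded for fixed \(p\) but not uniformly as \(p\to0\). I would state explicitly that the additive constant depends on \(p\) through \(\alpha,\beta,A_\pm,v_-\).
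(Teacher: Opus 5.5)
Your proposal is correct and follows the paper's route: you balance the decaying leakage term against the growing denominator-driven variance, and your identity $\alpha+\beta=2\abs{\log\lambda_2}$ is exactly the paper's $q\lambda_1=\lambda_2$. Your convexity and exponential-growth argument also proves what the paper only asserts, namely that the two-sided constants and integer rounding move the minimizer by a $B$-independent amount, and it rightly shows this amount scales like $1/\min(\alpha,\beta)$, so it is not uniform as $p\to0$. The only slip is the parenthetical claim $r_M\le1$ for $M\ge2$, which fails near the maximally mixed state and needs, e.g., $\lambda_1\ge1/2$ (then $\sum_{j>1}\lambda_j^M\le(1-\lambda_1)^M\le\lambda_1^M$); it is harmless, because $r_M\le d-1$ already gives the constant you need.
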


\begin{proof}
Balance the decreasing leakage term against the increasing denominator-driven variance term:
\begin{equation}
    q^{2M}\asymp B^{-1}\lambda_1^{-2M}.
\end{equation}
Equivalently, \((q\lambda_1)^{2M}\asymp B^{-1}\).  Since \(q\lambda_1=\lambda_2\), \cref{eq:mstar_log_lambda2_v4} follows.  The two-sided constants in \cref{eq:bias_two_sided_mstar_v4} and \cref{prop:implementation_variance_inflation_v4} shift the minimizer only by an additive constant independent of \(B\).  Integer rounding also changes \(M\) by at most a constant.
\end{proof}

\begin{remark}
The exact logarithmic denominator contains \(\abs{\log\lambda_2}\), because the variance inflation is driven by \(D_M\sim\lambda_1^M\) while leakage is driven by \((\lambda_2/\lambda_1)^M\).  In the small-noise regime \(\lambda_1\approx1\), this is equivalent to the more informal \(\log B/\abs{\log\!\bigl(\lambda_2/\lambda_1\bigr)}\) scaling.  This distinction matters because it identifies which part of VD controls the cost.
\end{remark}

\section{Symmetry verification}

Let \(P\) be the projector onto the valid symmetry sector and assume
\begin{equation}
    P\rho_\star P=\rho_\star.
\end{equation}
For a symmetry-compatible observable, SV estimates
\begin{equation}
    \mu_{\mathrm{SV}}(p)=\frac{\Tr(OP\rho_pP)}{a(p)},
    \qquad
    a(p)=\Tr(P\rho_p).
    \label{eq:sv_population_v2}
\end{equation}
The first-order projected state is
\begin{equation}
    \frac{P\rho_pP}{\Tr(P\rho_p)}
    =\rho_\star+p\left(P\Delta P-\rho_\star\Tr(P\Delta P)\right)+\OO(p^2).
    \label{eq:sv_projected_state_expansion}
\end{equation}

A precise detectable-undetectable split can be defined by
\begin{equation}
    \Delta_{\mathrm{undet}}=P\Delta P,
    \qquad
    P\Delta_{\mathrm{det}}P=0,
    \qquad
    \Delta=\Delta_{\mathrm{undet}}+\Delta_{\mathrm{det}}
\end{equation}
with \(\Delta_{\mathrm{det}}=\Delta-P\Delta P\) for this first-order analysis.  The split is canonical once \(P\) is fixed: the detectable part is exactly the part that has no component inside the postselected block.

\begin{proposition}[SV first-order bias]
\label{prop:sv_delta_precise}
Under \cref{eq:noise_expansion} and \(P\rho_\star P=\rho_\star\),
\begin{equation}
    a(p)=1+p\Tr(P\Delta)+\OO(p^2),
\end{equation}
and
\begin{equation}
    b_{\mathrm{SV}}(p)=\mu_{\mathrm{SV}}(p)-\mu_\star
    =\delta_{\mathrm{SV}}p+\OO(p^2),
\end{equation}
where
\begin{equation}
    \delta_{\mathrm{SV}}
    =\Tr(OP\Delta P)-\mu_\star\Tr(P\Delta P).
    \label{eq:delta_sv_precise}
\end{equation}
If one writes \(a(p)=1-\alpha_{\mathrm{det}}p+\OO(p^2)\), then \(\alpha_{\mathrm{det}}=-\Tr(P\Delta)\).
\end{proposition}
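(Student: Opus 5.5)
The plan is a direct first-order expansion of numerator and denominator in \cref{eq:sv_population_v2}, using \cref{eq:noise_expansion} together with \(P\rho_\star P=\rho_\star\), and then a first-order quotient expansion.

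\textbf{Acceptance probability.} By linearity and cyclicity of the trace, \(P^2=P\), and the finite-dimensional \(\OO(p^2)\) remainder (bounded under any fixed norm after multiplication by the bounded operator \(P\)),
\begin{equation}
    a(p)=\Tr(P\rho_\star)+p\Tr(P\Delta)+\OO(p^2).
\end{equation}
The invariance \(P\rho_\star P=\rho_\star\) gives \(\Tr(P\rho_\star)=\Tr(P\rho_\star P)=\Tr\rho_\star=1\). This proves the expansion of \(a\). The identification \(\alpha_{\mathrm{det}}=-\Tr(P\Delta)\) is then just coefficient matching. One may note in passing that \(\Tr(P\Delta)=\Tr(P\Delta P)=-\Tr(\Delta_{\mathrm{det}})\), since \(\Tr\Delta=0\). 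This justifies the name: the first-order rejection rate is the trace weight of the detectable part.

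\textbf{Numerator.} Similarly,
\begin{equation}
    \Tr(OP\rho_pP)=\Tr(OP\rho_\star P)+p\Tr(OP\Delta P)+\OO(p^2)=\mu_\star+p\Tr(OP\Delta P)+\OO(p^2),
\end{equation}
again using \(P\rho_\star P=\rho_\star\). The bound on the remainder uses \(\norm{O}_\infty<\infty\).

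\textbf{Quotient.} Since \(a(p)=1+\OO(p)\) is bounded away from zero for small \(p\), we have \(1/a(p)=1-p\Tr(P\Delta)+\OO(p^2)\). Multiplying the two expansions gives
\begin{equation}
    \mu_{\mathrm{SV}}(p)=\mu_\star+p\bigl(\Tr(OP\Delta P)-\mu_\star\Tr(P\Delta)\bigr)+\OO(p^2).
\end{equation}
Finally, replace \(\Tr(P\Delta)\) by \(\Tr(P\Delta P)\) using cyclicity and idempotence. This yields \(\delta_{\mathrm{SV}}\) as in \cref{eq:delta_sv_precise}. The result is consistent with the projected-state expansion \cref{eq:sv_projected_state_expansion}: tracing that expansion against \(O\) gives the same coefficient, and this can serve as a cross-check.

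No step is a genuine obstacle. The only points needing care are uniformity of the \(\OO(p^2)\) terms and invertibility of \(a(p)\). Both are handled by finite dimensionality and by restricting to \(p\) small enough that \(a(p)\ge1/2\).
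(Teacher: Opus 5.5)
Your proposal is correct and follows essentially the same route as the paper. Like the paper, you expand \(P\rho_pP\) and \(a(p)=\Tr(P\rho_p)\) to first order using \(P\rho_\star P=\rho_\star\), invert \(a(p)=1+p\Tr(P\Delta P)+\OO(p^2)\) by a first-order expansion, and read off \(\delta_{\mathrm{SV}}\). Your added care about remainder uniformity, invertibility of \(a(p)\), and the identity \(\Tr(P\Delta)=\Tr(P\Delta P)=-\Tr(\Delta_{\mathrm{det}})\) is accurate and only makes the paper's terse argument more explicit.
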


\begin{proof}
Use \(P\rho_pP=\rho_\star+pP\Delta P+\OO(p^2)\) and
\begin{equation}
    \frac{1}{1+p\Tr(P\Delta P)+\OO(p^2)}
    =1-p\Tr(P\Delta P)+\OO(p^2).
\end{equation}
Substitute into \(\Tr(O\,P\rho_pP)/\Tr(P\rho_p)\) and subtract \(\mu_\star\).
\end{proof}

\begin{remark}[Depolarizing noise carries no first-order SV bias in a fixed sector]
\label{rem:depol_sv_zero}
Let \(P\) project onto a fixed Hamming-weight sector, let \(O\) be
\(Z\)-diagonal with \([O,P]=0\), and let \(P\rho_\star P=\rho_\star\).  For
single-qubit depolarizing noise with arbitrary per-qubit rates \(c_i\), the
generator is
\(\Delta=\sum_i c_i\bigl[(I_i/2)\otimes\Tr_i\rho_\star-\rho_\star\bigr]\).
For a basis string \(z\) inside the sector, the diagonal entry of
\((I_i/2)\otimes\Tr_i\rho_\star\) at \(z\) is \(\tfrac12[q(z^{i\to0})+q(z^{i\to1})]
=\tfrac12 q(z)\), where \(q\) is the (sector-supported) diagonal of
\(\rho_\star\): flipping bit \(i\) moves \(z\) out of the sector, so only the
\(z\)-term survives.  Inside the sector the replacement term is therefore
proportional to the diagonal of \(\rho_\star\) itself, and since only diagonals
enter traces against the diagonal \(O\) and \(P\),
\(\delta_{\mathrm{SV}}=\sum_i c_i\bigl[(\tfrac12\mu_\star-\mu_\star\tfrac12)
-(\mu_\star-\mu_\star)\bigr]=0\) identically.  No centering or trace condition
on \(O\) is required: the subtraction \(-\mu_\star\Tr(P\Delta P)\) built into
\(\delta_{\mathrm{SV}}\) cancels any component of \(P\Delta P\) proportional
to \(\rho_\star\) automatically.  The diagonality of \(O\) is essential,
however, not cosmetic: inside the sector,
\((I_i/2)\otimes\Tr_i\rho_\star\) matches \(\tfrac12\rho_\star\) only on the
diagonal---coherences between sector strings differing at bit \(i\) are
dephased---so the cancellation is not claimed for general
symmetry-compatible observables.  Thus a first-order SV bias
floor requires sector-preserving \emph{structured} noise---such as the
heterogeneous amplitude-damping component of the fixed channel---whereas the
sector-preserving residue of depolarizing noise is proportional to the ideal
state and leaves the postselected expectation unbiased at first order.  This
zero-bias statement is therefore not a hardware-noise assumption: \(T_1/T_2\)-type
channels, amplitude-damping components, coherent calibration errors, or other
inhomogeneous sector-preserving components can produce a nonzero projected
generator \(P\Delta P-\rho_\star\Tr(P\Delta P)\), and must be evaluated through
\cref{eq:delta_sv_precise}.  The pre-specified channel-sensitivity check in
\cref{sec:experimental_validation} confirms only the depolarizing cancellation
to machine precision on the fixed ensemble.
\end{remark}

Let \(A_s\in\{0,1\}\) be the acceptance indicator and \(W_s\) the measured value when accepted.  Put \(Z_s=A_sW_s\).  Then
\begin{equation}
    \widehat\mu_{\mathrm{SV}}=\frac{\overline Z}{\overline A},
    \qquad
    \E A=a(p),
    \qquad
    \E Z=a(p)\mu_{\mathrm{SV}}(p).
\end{equation}

\begin{theorem}[Certified SV postselection quotient law]
\label{thm:sv_quotient_law_v2}
Assume \(a(p)>0\), \(A\in\{0,1\}\), and \(\abs{W}\le K_W\) on accepted shots.  Put \(Z=AW\), \(\mu_{\mathrm{SV}}=\E Z/a\), and define centered variables
\begin{equation}
    \xi_{\mathrm{SV}}=Z-a\mu_{\mathrm{SV}},
    \qquad
    \zeta_{\mathrm{SV}}=A-a.
\end{equation}
Let \(\widehat\mu_{\mathrm{SV}}^{\rm clip}=\overline Z/\max\{\overline A,a/2\}\).  Then, for every \(B\ge1\),
\begin{equation}
    \Prob\left(\abs{\overline A-a}\ge a/2\right)
    \le 2\exp\left[-\frac{B a(p)}{12}\right].
    \label{eq:sv_acceptance_bernstein_v5}
\end{equation}
Moreover,
\begin{align}
    \E[\widehat\mu_{\mathrm{SV}}^{\rm clip}]
    &=\mu_{\mathrm{SV}}(p)+\frac{c_{\mathrm{SV}}(p)}{B}+r^{(E)}_{\mathrm{SV}}(p,B),\\
    \Var(\widehat\mu_{\mathrm{SV}}^{\rm clip})
    &=\frac{v_{\mathrm{SV}}(p)}{B}+r^{(V)}_{\mathrm{SV}}(p,B),
\end{align}
where
\begin{align}
    c_{\mathrm{SV}}(p)
    &=\frac{\mu_{\mathrm{SV}}(p)\Var(A)-\Cov(Z,A)}{a(p)^2}=0,
    \label{eq:sv_quotient_bias_zero_v14}\\
    v_{\mathrm{SV}}(p)
    &=\frac{\Var(Z-\mu_{\mathrm{SV}}(p)A)}{a(p)^2}.
\end{align}
The equality \(c_{\mathrm{SV}}(p)=0\) is exact, because \(Z=AW\) and hence
\(\Cov(Z,A)=\E[ZA]-\E Z\,\E A=a\mu_{\mathrm{SV}}(1-a)=\mu_{\mathrm{SV}}\Var(A)\).
There are explicit bounded-moment constants \(A_{E,\mathrm{SV}}(p)\), \(A_{V,\mathrm{SV}}(p)\) obtained from \cref{def:quotient_constants_v4} by the replacement
\begin{equation}
    (N_M,D_M,X_M,Y_M,K_{N,M},K_{D,M})
    \mapsto
    (a\mu_{\mathrm{SV}},a,Z,A,K_W,1).
\end{equation}
With this conservative substitution, put
\begin{align}
    \omega_{\mathrm{SV}}(p)&=\frac1{a(p)},\\
    B_{E,\mathrm{SV}}(p)
    &=\frac{2K_W}{a(p)}
    \left(5+3\omega_{\mathrm{SV}}(p)+3\omega_{\mathrm{SV}}(p)^2
    +3\omega_{\mathrm{SV}}(p)^3\right),\\
    B_{V,\mathrm{SV}}(p)&=2B_{E,\mathrm{SV}}(p)^2.
\end{align}
Thus
\begin{align}
    \abs{r^{(E)}_{\mathrm{SV}}(p,B)}
    &\le \frac{A_{E,\mathrm{SV}}(p)}{B^2}
    +B_{E,\mathrm{SV}}(p)e^{-Ba(p)/12},\\
    \abs{r^{(V)}_{\mathrm{SV}}(p,B)}
    &\le \frac{A_{V,\mathrm{SV}}(p)}{B^2}
    +B_{V,\mathrm{SV}}(p)e^{-Ba(p)/12}.
\end{align}
Consequently
\begin{equation}
    \MSE_{\mathrm{SV}}(p,B)
    =b_{\mathrm{SV}}(p)^2+
    \frac{v_{\mathrm{SV}}(p)}{B}
    +\rho_{\mathrm{SV}}(p,B),
    \label{eq:sv_certified_mse_v5}
\end{equation}
with \(\abs{\rho_{\mathrm{SV}}(p,B)}\le C_{\mathrm{SV}}(p)B^{-2}+C'_{\mathrm{SV}}(p)e^{-Ba(p)/12}\).  One admissible explicit choice is
\begin{align}
    C_{\mathrm{SV}}
    &=A_{V,\mathrm{SV}}+2\abs{b_{\mathrm{SV}}}A_{E,\mathrm{SV}}
      +2A_{E,\mathrm{SV}}^2,\\
    C'_{\mathrm{SV}}
    &=B_{V,\mathrm{SV}}+2\abs{b_{\mathrm{SV}}}B_{E,\mathrm{SV}}
      +2B_{E,\mathrm{SV}}^2.
\end{align}
If \(\sigma_{\mathrm{acc}}^2(p)=\Var(W\mid A=1)\), then
\begin{equation}
    v_{\mathrm{SV}}(p)=\frac{\sigma_{\mathrm{acc}}^2(p)}{a(p)}.
\end{equation}
Therefore, if \(\sigma_{\mathrm{acc}}^2(p)=\sigma_\star^2+\sigma_1p+\OO(p^2)\) and \(a(p)=1-\alpha_{\mathrm{det}}p+\OO(p^2)\), then
\begin{align}
    \MSE_{\mathrm{SV}}(p,B)
    &=\delta_{\mathrm{SV}}^2p^2
    +\frac{\sigma_\star^2+(\sigma_1+\alpha_{\mathrm{det}}\sigma_\star^2)p}{B}
    +\OO(p^2/B)+\mathcal R_{\mathrm{SV}}(p,B),
\end{align}
where no separate \(\OO(p/B)\) quotient-bias cross term appears: for Bernoulli postselection the coefficient \(c_{\mathrm{SV}}\) vanishes identically by \cref{eq:sv_quotient_bias_zero_v14}.  The certified theorem-level statement is \(\abs{\mathcal R_{\mathrm{SV}}(p,B)}\le C_{\mathrm{SV}}(p)B^{-2}+C'_{\mathrm{SV}}(p)e^{-Ba(p)/12}\).
\end{theorem}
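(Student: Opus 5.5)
The plan is to treat SV as a special case of the bounded quotient machinery already established for VD in \cref{thm:vd_quotient_certified_v4}, under the substitution \((N_M,D_M,X_M,Y_M,K_{N,M},K_{D,M})\mapsto(a\mu_{\mathrm{SV}},a,Z,A,K_W,1)\). First I check that \cref{ass:bounded_vd_pair_v4} holds. Indeed \(\abs{Z}=\abs{AW}\le K_W\), and \(\abs{A-a}\le\max\{a,1-a\}\le1\), so \(K_D=1\). Also \(a>0\) by hypothesis.

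The acceptance bound \cref{eq:sv_acceptance_bernstein_v5} comes from \cref{lem:denom_bernstein_v4} with \(\sigma_D^2=a(1-a)\le a\) and \(K_D=1\). The exponent denominator is then \(8a(1-a)+\tfrac43a\le 8a+\tfrac43a\). This alone gives only \(Ba/(28/3)\) up to the factor; to reach \(Ba/12\) I would simply weaken the bound, since \(8a(1-a)+\tfrac43 a\le 12a\) and \(Ba^2/(12a)=Ba/12\). The expectation and variance expansions, with remainders \(A_{E,\mathrm{SV}},A_{V,\mathrm{SV}}\) and exponential prefactors, are then read off \cref{thm:vd_quotient_certified_v4}. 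The stated \(B_{E,\mathrm{SV}}\) is the substituted \cref{def:quotient_constants_v4} prefactor with \(\omega=K_D/D=1/a\). Replacing the sharper exponent by \(e^{-Ba/12}\) only enlarges the bound, so the remainder statements follow.

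Next come the algebraic identities. For \(c_{\mathrm{SV}}\), I use \(A^2=A\), so \(ZA=Z\). Then \(\Cov(Z,A)=\E Z-\E Z\,a=a\mu_{\mathrm{SV}}(1-a)=\mu_{\mathrm{SV}}\Var(A)\), which gives \(c_{\mathrm{SV}}=0\). The MSE constants are the VD ones with the \(c\)-terms deleted.

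For \(v_{\mathrm{SV}}\), write \(Z-\mu_{\mathrm{SV}}A=A(W-\mu_{\mathrm{SV}})\), which has mean zero. Its second moment is \(\E[A(W-\mu_{\mathrm{SV}})^2]=a\,\sigma_{\mathrm{acc}}^2\), because \(\mu_{\mathrm{SV}}=\E[W\mid A=1]\). Hence \(v_{\mathrm{SV}}=\sigma^2_{\mathrm{acc}}/a\).

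Finally I expand \(1/a=1+\alpha_{\mathrm{det}}p+\OO(p^2)\). Multiplying by \(\sigma_\star^2+\sigma_1p\) gives the \(p/B\) coefficient \(\sigma_1+\alpha_{\mathrm{det}}\sigma_\star^2\). The squared bias \(\delta_{\mathrm{SV}}^2p^2\) comes from \cref{prop:sv_delta_precise}, with the \(\OO(p^3)\) term absorbed as appropriate.

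The main obstacle I expect is bookkeeping: checking that the appendix constants specialize exactly to the displayed \(B_{E,\mathrm{SV}}\). In particular, the worst case \(K_D=1\) versus the sharper \(\max(a,1-a)\) must be handled conservatively, and the VD exponent must legitimately be weakened to \(Ba/12\) uniformly in \(a\in(0,1]\). I would first verify the inequality \(8(1-a)+4/3\le12\), then confirm that every appendix term depends on \(\omega\) monotonically so that the substitution is valid.
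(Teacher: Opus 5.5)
Your proposal is correct and follows the paper's route. You specialize the VD quotient theorem under the stated substitution, use $ZA=Z$ to get $c_{\mathrm{SV}}=0$, factor $Z-\mu_{\mathrm{SV}}A=A(W-\mu_{\mathrm{SV}})$ to get $v_{\mathrm{SV}}=\sigma_{\mathrm{acc}}^2/a$, and Taylor-expand $1/a(p)$.

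The only real difference is the acceptance tail. You take it from the Bernstein lemma with $\sigma_D^2=a(1-a)$ and $K_D=1$, which gives rate $3Ba/28\ge Ba/12$ and lets the VD remainders be weakened to $e^{-Ba/12}$ verbatim. The paper instead uses the multiplicative Chernoff bound with $\delta=1/2$. Your planned monotonicity check in $\omega$ is unnecessary, since $K_D=1$ is already a valid almost-sure bound and the VD theorem applies directly.
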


\begin{proof}
The acceptance denominator is Bernoulli.  The two-sided multiplicative Chernoff bound \citep{Boucheron2013} gives \(\Prob(\abs{\overline A-a}\ge \delta a)\le 2e^{-Ba\delta^2/3}\) for \(0<\delta<1\); taking \(\delta=1/2\) gives \cref{eq:sv_acceptance_bernstein_v5}.  Conditional on the good event \(\abs{\overline A-a}\le a/2\), apply the same explicit Neumann expansion used in \cref{thm:vd_quotient_certified_v4} to \(g(z,a)=z/a\).  Because \(A\in\{0,1\}\) and \(\abs Z\le K_W\), the moment envelopes are finite and the bad-event contribution is exponentially small.  The variance identity follows from
\begin{equation}
    Z-\mu_{\mathrm{SV}}A=A(W-\mu_{\mathrm{SV}}),
\end{equation}
so \(\Var(Z-\mu_{\mathrm{SV}}A)=a(p)\sigma_{\mathrm{acc}}^2(p)\).  The displayed local expansion follows from Taylor expansion of \(1/a(p)\) and from \cref{prop:sv_delta_precise}.
\end{proof}

\begin{remark}[Position relative to symmetry expansion]
This SV section is derivative in mechanism from symmetry-based mitigation theory, especially \citet{Cai2021}.  Cai already compares VD-type symmetry expansion against SV at the level of infidelity and sampling cost.  The comparison certified here is different in kind: finite-shot MSE laws with explicit remainders, denominator-concentration certificates, and crossing self-consistency conditions.
\end{remark}

\section{PEC and CDR in the selector}

The rigorous selector should include only methods whose local constants are specified.  PEC can be inserted cleanly under standard first-order cancellation assumptions.  Suppose the PEC estimator is unbiased through first order, so
\begin{equation}
    b_{\mathrm{PEC}}(p)=\OO(p^2),
\end{equation}
and let \(\gamma(p)=1+\gamma_1p+\OO(p^2)\) denote the quasiprobability one-norm overhead \citep{Temme2017,VanDenBerg2023PEC}.  Then a local PEC law has the form
\begin{equation}
    \MSE_{\mathrm{PEC}}(p,B)
    =
    \OO(p^4)
    +\frac{\gamma(p)^2\sigma_{\mathrm{PEC}}^2(p)}{B}
    +\OO(B^{-2}).
\end{equation}
Thus PEC fits the selector with \(s_{\mathrm{PEC}}\approx\kappa_{\mathrm{PEC}}\gamma^2\sigma_{\mathrm{PEC}}^2\), where \(\kappa_{\mathrm{PEC}}\) accounts for resource normalization.

CDR is different.  Its bias and variance depend on the training distribution, regression model, target observable, and allocation of shots between training and target circuits \citep{Czarnik2021,Lowe2021}.  Therefore CDR should not be included in the rigorous lower envelope unless a separate local law is proven for the chosen training model.  Here, CDR remains an admissible future component, not a theorem-level member of the selector.

\section{Certified operating-window crossings}

Let \(R\) be a common physical resource budget.  If method \(m\) consumes \(\kappa_m(p)\) resource units per effective statistical sample, then \(B_m=R/\kappa_m(p)\).  A certified law becomes
\begin{equation}
    \MSE_m(p,R)=b_m(p)^2+\frac{s_m(p)}{R}+\rho_m(p,R),
    \qquad
    \abs{\rho_m(p,R)}\leq\frac{C_m(p)}{R^2}
\end{equation}
for \(R\geq R_{\min,m}(p)\), where \(s_m=\kappa_m\widetilde v_m\).

\begin{remark}[Resource normalization]
\label{rem:resource_normalization_v13}
Here \(R\) denotes the physical resource being charged in the comparison.  If an estimator uses an effective statistical sample count \(B_m\), all streams, copies, and circuit calls required to produce those samples must be included in \(\kappa_m\).  In particular, the independent VD Hadamard-test implementation of \cref{prop:implementation_variance_inflation_v4} uses one numerator stream and one denominator stream, so a paired VD sample budget \(B\) corresponds to physical cost \(R_{\mathrm{VD},M}=2\kappa_M B\).  Equivalently, when \(R\) counts physical Hadamard-test calls, \(\kappa_{\mathrm{VD},M}=2\kappa_M\).  Toy-model formulas below state explicitly when \(R\) is a paired-shot budget and when it is a physical call budget.
\end{remark}

\begin{theorem}[Certified crossing perturbation]
\label{thm:certified_crossing}
Consider two methods \(i,j\) at fixed \(p\).  Suppose
\begin{equation}
    \MSE_i=b_i^2+\frac{s_i}{R}+\rho_i(R),
    \qquad
    \MSE_j=b_j^2+\frac{s_j}{R}+\rho_j(R),
    \qquad
    \abs{\rho_m(R)}\leq\frac{C_m}{R^2}
\end{equation}
for \(R\geq R_{\min}\), and suppose the remainders are continuous.  Let
\begin{equation}
    g=b_j^2-b_i^2>0,
    \qquad
    \Delta s=s_i-s_j>0,
    \qquad
    R_0=\frac{\Delta s}{g},
    \qquad
    C=C_i+C_j.
\end{equation}
If \(R_0(1-\eta)\geq R_{\min}\) and
\begin{equation}
    \eta=\frac{4C}{gR_0^2}<\frac12,
    \label{eq:eta_crossing_condition}
\end{equation}
then there exists at least one true crossing \(R^\star\) in
\begin{equation}
    R^\star\in[R_0(1-\eta),R_0(1+\eta)].
    \label{eq:crossing_interval}
\end{equation}
In particular every certified crossing in this interval obeys
\begin{equation}
    \abs{R^\star-R_0}\leq \frac{4C}{gR_0}.
    \label{eq:crossing_shift_bound}
\end{equation}
If, in addition, \(\rho_i-\rho_j\) is differentiable and
\begin{equation}
    \sup_{R\in[R_0(1-\eta),R_0(1+\eta)]}
    \abs{(\rho_i-\rho_j)'(R)}
    <\frac{\Delta s}{R_0^2(1+\eta)^2},
    \label{eq:crossing_monotonicity_condition_v5}
\end{equation}
then the crossing in the interval is unique.
\end{theorem}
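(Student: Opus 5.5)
The plan is to reduce everything to a sign change of the MSE difference and apply the intermediate value theorem. Define, for \(R\geq R_{\min}\),
\begin{equation}
    F(R)=\MSE_i(R)-\MSE_j(R)=h(R)+\rho(R),
    \qquad
    h(R)=\frac{\Delta s}{R}-g=g\left(\frac{R_0}{R}-1\right),
    \qquad
    \rho=\rho_i-\rho_j .
\end{equation}
By the triangle inequality, \(\abs{\rho(R)}\leq C/R^2\). The function \(F\) is continuous because the remainders are assumed continuous. A true crossing is a zero of \(F\).

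Write \(R_\pm=R_0(1\pm\eta)\). The hypothesis \(R_-\geq R_{\min}\) makes both local laws valid on all of \([R_-,R_+]\). I will check the signs at the two endpoints, treating the leading term and the remainder separately and using \(\eta=4C/(gR_0^2)\), i.e.\ \(C/R_0^2=g\eta/4\).

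\emph{Left endpoint.} The leading term is \(h(R_-)=g\eta/(1-\eta)\). Since \(\eta<1/2\) gives \((1-\eta)^2>1/4\), the remainder satisfies
\[
\abs{\rho(R_-)}\leq \frac{C}{R_0^2(1-\eta)^2}\leq \frac{4C}{R_0^2}=g\eta .
\]
For \(\eta>0\) we have \(g\eta/(1-\eta)>g\eta\), so \(F(R_-)>0\).

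\emph{Right endpoint.} The leading term is \(h(R_+)=-g\eta/(1+\eta)\). The remainder satisfies
\[
\abs{\rho(R_+)}\leq \frac{C}{R_0^2(1+\eta)^2}=\frac{g\eta}{4(1+\eta)^2},
\]
which is strictly smaller than \(g\eta/(1+\eta)\). Hence \(F(R_+)<0\).

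\emph{Degenerate case.} If \(\eta=0\), then \(C=0\), so \(\rho\equiv0\). Then \(F=h\) vanishes exactly at \(R_0\), and the interval \cref{eq:crossing_interval} degenerates to that point.

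With the endpoint signs established, the intermediate value theorem yields a zero \(R^\star\in[R_-,R_+]\). The shift bound \cref{eq:crossing_shift_bound} is then immediate, since every point of the interval satisfies \(\abs{R^\star-R_0}\leq\eta R_0=4C/(gR_0)\).

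For uniqueness, differentiate on the interval:
\[
F'(R)=-\frac{\Delta s}{R^2}+\rho'(R)\leq-\frac{\Delta s}{R_0^2(1+\eta)^2}+\sup_{[R_-,R_+]}\abs{\rho'},
\]
because \(R\leq R_+\). By \cref{eq:crossing_monotonicity_condition_v5} this is strictly negative. So \(F\) is strictly decreasing on \([R_-,R_+]\) and has at most one zero there.

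The only delicate step is the strictness of the left-endpoint inequality. This is where the factor \(4\) in the definition of \(\eta\) and the condition \(\eta<1/2\) are used: together they make \((1-\eta)^{-2}\leq4\), which matches the remainder bound to \(g\eta\), while the leading term \(g\eta/(1-\eta)\) strictly exceeds \(g\eta\). The rest is routine bookkeeping.
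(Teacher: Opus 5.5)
Your proof is correct and follows essentially the same route as the paper: compare the leading difference $-g+\Delta s/R$ at $R_0(1\pm\eta)$ against the $C/R^2$ remainder, apply the intermediate value theorem for existence, and use strict negativity of $F'$ on the interval for uniqueness. The one difference is that you spell out details the paper leaves to a single sentence: the endpoint inequalities (via $(1-\eta)^{-2}<4$ on the left and $4(1+\eta)>1$ on the right) and the degenerate case $\eta=0$.
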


\begin{proof}
The leading difference is
\begin{equation}
    F_0(R)=\left(b_i^2-b_j^2\right)+\frac{s_i-s_j}{R}
    =-g+\frac{\Delta s}{R},
\end{equation}
with root \(R_0\).  At \(R_+=R_0(1+\eta)\),
\begin{equation}
    F_0(R_+)=-g\frac{\eta}{1+\eta},
\end{equation}
and at \(R_-=R_0(1-\eta)\),
\begin{equation}
    F_0(R_-)=g\frac{\eta}{1-\eta}.
\end{equation}
The remainder difference has magnitude at most \(C/R^2\).  Since \(\eta=4C/(gR_0^2)\) and \(\eta<1/2\), the signs at \(R_+\) and \(R_-\) are preserved.  Continuity then gives existence by the intermediate value theorem.  If \cref{eq:crossing_monotonicity_condition_v5} holds, then
\begin{equation}
    F'(R)=-\frac{\Delta s}{R^2}+(\rho_i-\rho_j)'(R)<0
\end{equation}
throughout the interval, so the crossing there is unique.
\end{proof}

The leading indifference formula
\begin{equation}
    R_{i\leftrightarrow j}(p)=\frac{s_i(p)-s_j(p)}{b_j(p)^2-b_i(p)^2}
    \label{eq:leading_crossing_formula}
\end{equation}
is therefore not accepted automatically.  It must pass two checks: \(R_{i\leftrightarrow j}\geq R_{\min}\) and \(4C/[gR_{i\leftrightarrow j}^2]\ll1\).  This is the self-consistency condition missing from a purely formal crossing curve.

\begin{theorem}[Certified selection trichotomy at fixed \(p\)]
\label{thm:selection_trichotomy_v10}
At fixed \(p\), suppose two methods \(i,j\) admit certified local laws
\begin{equation}
    \MSE_m(R)=b_m^2+\frac{s_m}{R}+\rho_m(R),
    \qquad
    \abs{\rho_m(R)}\le \frac{C_m}{R^2},
    \qquad R\ge R_{\min},
\end{equation}
with continuous remainders.  Let
\begin{equation}
    g=b_j^2-b_i^2,
    \qquad
    \Delta s=s_i-s_j,
    \qquad
    C=C_i+C_j.
\end{equation}
Then exactly one of the following leading-order regimes applies, with the indicated certified interpretation.
\begin{enumerate}[leftmargin=2.2em]
\item[\textnormal{(0)}] \textbf{Degenerate leading tie.}  If \(g=0\) and \(\Delta s=0\), the two leading local laws are identical.  No leading-order dominance or crossing is certified; the comparison depends on the signed remainders, higher-order coefficients, or a sharper local expansion.
\item[\textnormal{(I)}] \textbf{Uniform dominance of \(i\).}  If \(g>0\) and \(\Delta s\le0\), then method \(i\) has strictly smaller leading bias and no larger sampling coefficient, and \(i\) wins for all
\begin{equation}
    R> \max\left\{R_{\min},\sqrt{\frac{C}{g}}\right\}.
\end{equation}
If \(g=0\) and \(\Delta s<0\), then \(i\) has equal leading bias and strictly smaller sampling coefficient, and \(i\) wins for all
\begin{equation}
    R> \max\left\{R_{\min},\frac{C}{\abs{\Delta s}}\right\}.
\end{equation}
At equality in either displayed threshold, the certificate gives weak non-inferiority rather than strict dominance.
\item[\textnormal{(II)}] \textbf{Uniform dominance of \(j\).}  If \(g<0\) and \(\Delta s\ge0\), or if \(g=0\) and \(\Delta s>0\), the symmetric statement holds after swapping \(i\) and \(j\).
\item[\textnormal{(III)}] \textbf{Genuine tradeoff with a certified crossing.}  If \(g\Delta s>0\), set
\begin{equation}
    R_0=\frac{\Delta s}{g}>0,
    \qquad
    \eta=\frac{4C}{\abs g R_0^2}.
\end{equation}
If \(\eta<1/2\) and \(R_0(1-\eta)\ge R_{\min}\), then at least one true crossing lies in
\begin{equation}
    [R_0(1-\eta),R_0(1+\eta)].
\end{equation}
If the monotonicity condition \cref{eq:crossing_monotonicity_condition_v5} also holds, that crossing is unique in the certified interval.  The lower-bias method wins above the certified upper endpoint, and the lower-sampling method wins on the certified neighborhood below the crossing controlled by the same remainder bound; no claim is made for arbitrarily small \(R\), where the \(C/R^2\) term can dominate.
\end{enumerate}
\end{theorem}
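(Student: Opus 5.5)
The proof works with the difference function
\begin{equation}
    F(R)=\MSE_i(R)-\MSE_j(R)=-g+\frac{\Delta s}{R}+\bigl(\rho_i(R)-\rho_j(R)\bigr),
    \qquad \abs{\rho_i-\rho_j}\le \frac{C}{R^2},
\end{equation}
with \(F<0\) meaning that \(i\) wins.

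\textbf{Exactly one case applies.} The four cases partition the sign pairs \((\operatorname{sgn} g,\operatorname{sgn}\Delta s)\):
\begin{itemize}[leftmargin=2em]
    \item \(g=0=\Delta s\) is case (0);
    \item \(g>0,\ \Delta s\le0\) and \(g=0,\ \Delta s<0\) are case (I);
    \item \(g<0,\ \Delta s\ge0\) and \(g=0,\ \Delta s>0\) are case (II);
    \item every remaining sign pair has \(g\neq0\), \(\Delta s\neq0\) with equal signs, i.e.\ \(g\Delta s>0\), which is case (III).
\end{itemize}
No pair is covered twice. In case (0) the leading part of \(F\) vanishes identically, so \(F=\rho_i-\rho_j\) and nothing beyond \(\abs F\le C/R^2\) can be said.

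\textbf{Case (I).} If \(g>0\) and \(\Delta s\le0\), then \(\Delta s/R\le0\), so
\begin{equation}
    F(R)\le -g+\frac{C}{R^2}<0 \quad\text{whenever } R>\sqrt{C/g}\ \text{and}\ R\ge R_{\min}.
\end{equation}
If \(g=0\) and \(\Delta s<0\), then
\begin{equation}
    F(R)\le -\frac{\abs{\Delta s}}{R}+\frac{C}{R^2}=\frac{1}{R}\left(\frac{C}{R}-\abs{\Delta s}\right)<0 \quad\text{for } R>\frac{C}{\abs{\Delta s}}.
\end{equation}
At equality in either threshold, the same inequality gives only \(F\le0\), which is the weak non-inferiority claim.

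\textbf{Case (II).} This follows from (I) by exchanging \(i\) and \(j\), which maps \(g\mapsto-g\) and \(\Delta s\mapsto-\Delta s\) and leaves \(C\) unchanged.

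\textbf{Case (III), \(g>0\).} This is exactly \cref{thm:certified_crossing}, so existence in \([R_0(1-\eta),R_0(1+\eta)]\) and uniqueness under \cref{eq:crossing_monotonicity_condition_v5} can be quoted directly.

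\textbf{Case (III), \(g<0\), \(\Delta s<0\).} Swap the labels \(i\leftrightarrow j\). This gives \(g'=\abs g>0\) and \(\Delta s'=\abs{\Delta s}>0\), while \(R_0=\Delta s/g\) and \(\eta=4C/(\abs g R_0^2)\) are unchanged. \Cref{thm:certified_crossing} then applies verbatim. The monotonicity hypothesis is symmetric under the swap, because it involves \(\abs{(\rho_i-\rho_j)'}\) and \(\abs{\Delta s}\); I read \(\Delta s\) in \cref{eq:crossing_monotonicity_condition_v5} as \(\abs{\Delta s}\) in this subcase.

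\textbf{Case (III), who wins on each side (main obstacle).} This step is not contained in \cref{thm:certified_crossing}, so I argue it by hand, again taking \(g>0\). The lower-bias method is \(i\), since \(b_i^2<b_j^2\).

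For \(R\ge R_+=R_0(1+\eta)\), write
\begin{equation}
    F(R)\le -g\,\frac{R-R_0}{R}+\frac{C}{R^2}.
\end{equation}
Requiring \(F(R)<0\) is equivalent to \(g(R-R_0)R>C\). The left side is increasing in \(R\), so it suffices to check it at \(R_+\):
\begin{equation}
    g\,\eta R_0\cdot R_0(1+\eta)=4C(1+\eta)>C.
\end{equation}
So \(i\) wins on all of \([R_+,\infty)\).

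For \(R\in[R_0(1-\eta),R_-]\), every such \(R\) satisfies \(R\ge R_0(1-\eta)\ge R_{\min}\), so the remainder bound applies there. Then
\begin{equation}
    F(R)\ge g\,\frac{R_0-R}{R}-\frac{C}{R^2}>0
    \quad\text{whenever}\quad
    g(R_0-R)R>C.
\end{equation}
At \(R=R_-\) this reads \(g\eta R_0\cdot R_0(1-\eta)=4C(1-\eta)\ge 2C>C\), using \(\eta<1/2\). The function \(R\mapsto(R_0-R)R\) is increasing on \(R\le R_0/2\) and decreasing on \([R_0/2,R_0]\), so the inequality \(g(R_0-R)R>C\) holds on a connected interval ending at \(R_-\) and extending down to the smaller root of \(g(R_0-R)R=C\), intersected with \([R_{\min},\infty)\). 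On that interval \(j\) wins, and I take it as the explicit ``certified neighborhood below the crossing.''

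Below that neighborhood, \(C/R^2\) can dominate the leading difference. That is why the statement makes no claim for arbitrarily small \(R\).
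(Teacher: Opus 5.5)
Your proof is correct and follows the paper's route: the same difference $\MSE_i-\MSE_j$ with remainder bound $C/R^2$, direct sign bounds for case (I), case (II) obtained by swapping labels, and relabelling in case (III) so that \cref{thm:certified_crossing} applies verbatim. Your explicit check of which method wins on each side of the crossing goes slightly beyond the paper, which leaves that part implicit. Two small fixes: the interval you wrote as $[R_0(1-\eta),R_-]$ is a single point and should read $[R_{\min},R_-]$, as your final intersection with $[R_{\min},\infty)$ already indicates; and the strict win at $R=R_+$ itself needs $C>0$.
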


\begin{proof}
Write
\begin{equation}
    \Phi(R)=\MSE_i(R)-\MSE_j(R)=-g+\frac{\Delta s}{R}+\rho_i(R)-\rho_j(R),
    \qquad
    \abs{\rho_i(R)-\rho_j(R)}\le\frac{C}{R^2}.
\end{equation}
Case (0) is immediate from \(g=\Delta s=0\).  Cases (I) and (II) follow because the leading comparison has the same sign for all large enough \(R\), and the \(C/R^2\) remainder cannot overturn it once \(R\) exceeds the displayed threshold.  In case (III), the leading root is \(R_0=\Delta s/g\).  After relabelling if needed so that \(g>0\) and \(\Delta s>0\), the hypotheses reduce exactly to \cref{thm:certified_crossing}.  That theorem gives existence in the certified interval and uniqueness under \cref{eq:crossing_monotonicity_condition_v5}.
\end{proof}

\begin{corollary}[Local location of operating windows]
\label{cor:window_location_v10}
Suppose
\begin{equation}
    b_m(p)=\delta_m p+\OO(p^2),
    \qquad
    s_m(p)=\sigma_m^2+s_m^{(1)}p+\OO(p^2).
\end{equation}
Assume \(\delta_i^2\ne\delta_j^2\), so that \(b_j(p)^2-b_i(p)^2=\Theta(p^2)\).
If \(\sigma_i^2\ne\sigma_j^2\), a genuine bias--variance tradeoff has a leading window at scale \(R_0=\Theta(p^{-2})\).  If \(\sigma_i^2=\sigma_j^2\) but \(s_i^{(1)}\ne s_j^{(1)}\), the leading window is instead at scale \(R_0=\Theta(p^{-1})\).  Thus the \(1/p\) crossings in the closed toy models below are not accidental; they arise because the two compared laws have matched leading shot cost and differ first at order \(p\) in their sampling coefficients.
\end{corollary}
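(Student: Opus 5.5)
The argument substitutes the local expansions into the leading indifference formula \cref{eq:leading_crossing_formula}, $R_0=\Delta s/g$ with $g=b_j^2-b_i^2$ and $\Delta s=s_i-s_j$, and reads off the orders of numerator and denominator separately.

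\textbf{Denominator.} Squaring $b_m(p)=\delta_m p+\OO(p^2)$ gives $b_m(p)^2=\delta_m^2p^2+\OO(p^3)$. Hence
\[
g(p)=(\delta_j^2-\delta_i^2)p^2+\OO(p^3).
\]
Since $\delta_i^2\ne\delta_j^2$, there is a $p_0>0$ such that for $0<p<p_0$ the $\OO(p^3)$ term is at most half the leading term in magnitude. On that range $\abs{g(p)}$ lies between $\tfrac12\abs{\delta_j^2-\delta_i^2}p^2$ and $\tfrac32\abs{\delta_j^2-\delta_i^2}p^2$, and the sign of $g$ is the sign of $\delta_j^2-\delta_i^2$. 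This gives $g=\Theta(p^2)$ with a fixed sign.

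\textbf{Numerator.} From the expansion of $s_m$,
\[
\Delta s(p)=(\sigma_i^2-\sigma_j^2)+(s_i^{(1)}-s_j^{(1)})p+\OO(p^2).
\]
There are two cases.

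\begin{itemize}
\item If $\sigma_i^2\ne\sigma_j^2$, then for $p$ small enough $\Delta s=\Theta(1)$ with a fixed sign.
\item If $\sigma_i^2=\sigma_j^2$ but $s_i^{(1)}\ne s_j^{(1)}$, the constant term cancels exactly. Then $\Delta s=(s_i^{(1)}-s_j^{(1)})p+\OO(p^2)=\Theta(p)$, again with a fixed sign once $p$ is small.
\end{itemize}

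\textbf{Ratio and regime.} A genuine tradeoff means $g\Delta s>0$, which is case (III) of \cref{thm:selection_trichotomy_v10}. In that case the ratio gives $R_0=\Theta(1)/\Theta(p^2)=\Theta(p^{-2})$ in the first case and $R_0=\Theta(p)/\Theta(p^2)=\Theta(p^{-1})$ in the second. Because the signs are fixed for small $p$, the sign of $g\Delta s$ is also fixed near $p=0$. So whether the comparison is a tradeoff or a dominance is itself locally constant, and the corollary only asserts the scale in the tradeoff case.

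The \(1/p\) remark then follows: the toy models have matched leading shot cost, as in the SV law of \cref{thm:sv_quotient_law_v2}, where the constant term is $\sigma_\star^2$ and the differences appear at order $p$.

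\textbf{Main obstacle.} The point needing care is whether $R_0$ is only the leading location or a certified crossing. The corollary as stated is about leading location, so I would present the $\Theta$ claims for $R_0$ as the proof. I would then remark that certification requires $\eta=4C/(\abs g R_0^2)<1/2$ and $R_0(1-\eta)\ge R_{\min}$ from \cref{thm:certified_crossing}. Substituting the scalings gives $\eta=\Theta(Cp^2)$ in the $p^{-2}$ case and $\eta=\Theta(C)$ in the $p^{-1}$ case, assuming $C$ stays bounded as $p\to0$. So the $p^{-2}$ window is automatically certified for small $p$, while the $p^{-1}$ window needs an explicit smallness condition on $C$. I would state this as a caveat rather than claim it as part of the corollary.
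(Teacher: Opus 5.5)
Your proposal is correct and follows the paper's proof. Both substitute the expansions into $g=b_j^2-b_i^2=(\delta_j^2-\delta_i^2)p^2+\OO(p^3)$ and $\Delta s=s_i-s_j$, then split on whether the constant term $\sigma_i^2-\sigma_j^2$ of $\Delta s$ cancels, which gives $R_0=\Theta(p^{-2})$ or $R_0=\Theta(p^{-1})$. Your explicit small-$p$ sign control and your certification caveat ($\eta=\Theta(Cp^2)$ versus $\eta=\Theta(C)$ for bounded $C$) are sound additions that the paper's one-line proof leaves implicit.
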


\begin{proof}
Substitute the displayed expansions into \(g=b_j^2-b_i^2\) and \(\Delta s=s_i-s_j\).  If the constant term of \(\Delta s\) is nonzero, then \(R_0=\Delta s/g=\Theta(p^{-2})\).  If that constant term cancels, then \(\Delta s=\Theta(p)\) while \(g=\Theta(p^2)\), giving \(R_0=\Theta(p^{-1})\).
\end{proof}

\begin{corollary}[Uniform finite-interval sandwich]
\label{cor:finite_interval_sandwich}
Suppose on \(p\in[0,p_{\max}]\) one has computable bounds
\begin{equation}
    0<G_-(p)\leq b_j(p)^2-b_i(p)^2\leq G_+(p),
    \qquad
    0<S_-(p)\leq s_i(p)-s_j(p)\leq S_+(p).
\end{equation}
Then the leading crossing satisfies
\begin{equation}
    \frac{S_-(p)}{G_+(p)}\leq R_{i\leftrightarrow j}(p)\leq\frac{S_+(p)}{G_-(p)}.
\end{equation}
If the certified-remainder condition of \cref{thm:certified_crossing} holds uniformly on that interval, this becomes a two-sided finite-\(p\) operating-window certificate, not merely a Taylor expansion at \(p=0\).

\end{corollary}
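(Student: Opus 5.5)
The plan is to work pointwise in \(p\).  Fix \(p\in[0,p_{\max}]\) with \(p\) in the range where the hypotheses are stated, so that both \(g(p)=b_j(p)^2-b_i(p)^2\) and \(\Delta s(p)=s_i(p)-s_j(p)\) are strictly positive.  Then the leading indifference formula \cref{eq:leading_crossing_formula}, \(R_{i\leftrightarrow j}(p)=\Delta s(p)/g(p)\), is a well-defined positive number.  The two-sided bound comes from monotonicity of the quotient in each argument.  For positive numbers, \(x\mapsto x/y\) is increasing in \(x\), and \(y\mapsto x/y\) is decreasing in \(y\).  Chaining these gives
\[
\frac{S_-}{G_+}\le\frac{\Delta s}{G_+}\le\frac{\Delta s}{g}\le\frac{\Delta s}{G_-}\le\frac{S_+}{G_-}.
\]
Each step uses only \(G_->0\) and \(S_->0\), which ensure that no sign flips occur and no division by zero happens.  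This proves the displayed sandwich for every such \(p\).

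For the certified upgrade, I would invoke \cref{thm:certified_crossing} at each \(p\) in the interval.  Its hypotheses are \(g>0\), \(\Delta s>0\), \(R_0(1-\eta)\ge R_{\min}\), and \(\eta=4C/(gR_0^2)<1/2\) with \(R_0=R_{i\leftrightarrow j}(p)\).  The first two follow from the sandwich hypotheses, and the last two are exactly the ``uniform certified-remainder condition'' assumed in the statement.  The theorem then yields a true crossing \(R^\star(p)\in[R_0(1-\eta),R_0(1+\eta)]\).  Combining this with the sandwich gives
\[
(1-\eta(p))\frac{S_-(p)}{G_+(p)}\le R^\star(p)\le(1+\eta(p))\frac{S_+(p)}{G_-(p)}
\]
for every \(p\).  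Here \(\eta(p)<1/2\) may be replaced by a uniform bound \(\bar\eta\) if the condition is imposed uniformly.  Because the bounds involve only the computable functions \(G_\pm\) and \(S_\pm\) at finite \(p\), this is a genuine finite-\(p\) window certificate rather than a Taylor statement at \(p=0\).

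The only delicate point is the interpretation of ``uniformly.''  To make the certificate computable, I would verify the conditions of \cref{thm:certified_crossing} through the sandwich bounds themselves: \(4C(p)G_+(p)^2/(G_-(p)S_-(p)^2)<1/2\) implies \(\eta(p)<1/2\), and \((1-\eta)S_-/G_+\ge R_{\min}(p)\) implies \(R_0(1-\eta)\ge R_{\min}\).  Both implications follow from the same monotonicity used in the first step.  No continuity in \(p\) is required, since every argument is pointwise.
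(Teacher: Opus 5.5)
Your proof is correct and follows the paper's intended argument. The paper gives no separate proof because the sandwich is just the monotonicity of \(R_{i\leftrightarrow j}=\Delta s/g\) in its positive numerator and denominator, and the certified upgrade is a pointwise application of \cref{thm:certified_crossing}. Your computable check of \(\eta\) via \(g\ge G_-\) and \(R_0\ge S_-/G_+\) is the same device the paper uses in \cref{prop:toy_certified_nonempty_v5}; the identity \(\eta=4Cg/\Delta s^2\le 4CG_+/S_-^2\) would give a slightly sharper bound, which can then replace \(\eta\) in the \(R_{\min}\) check.
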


\section{Restricted lower-bound sanity check}
\label{sec:restricted_lower_bound}

The selector constructed above compares concrete mitigation procedures.  It does not by itself prove that the selected method is globally optimal among all conceivable estimators.  A full optimality theorem would require specifying what information about the noise channel, ideal state family, measurements, and adaptive strategies is available.  Universal sampling-cost lower bounds for error mitigation, typically exponential in circuit noise, have been established \citep{Takagi2022Limits,Takagi2023Sampling,Tsubouchi2023Cost,Quek2024Limits}.  The bound here is narrower and complementary: it benchmarks a declared transcript model at fixed budget, rather than bounding all mitigation strategies.  It identifies when an operating window is limited by statistical distinguishability rather than by the chosen menu of methods.

Consider a local two-point model indexed by \(\theta\in\{+,-\}\).  The ideal values are
\begin{equation}
    \mu_\theta^\star=\Tr(O\rho_\theta^\star),
    \qquad
    \abs{\mu_+^\star-\mu_-^\star}=2h.
\end{equation}
At physical noise level \(p\), a resource budget \(R\), and a fixed admissible measurement and mitigation protocol induce transcript distributions \(\mathbb P_+^{(R)}\) and \(\mathbb P_-^{(R)}\).  These transcript distributions include all random circuits, accepted or rejected shots, auxiliary calibration samples, and classical postprocessing randomness used by the protocol.

\begin{theorem}[Two-point finite-budget lower bound]
\label{thm:two_point_lower_bound}
For any estimator \(\widehat\mu\) of the ideal value based on the transcript,
\begin{equation}
    \begin{aligned}
    \sup_{\theta\in\{+,-\}}
    \mathbb E_\theta\left[(\widehat\mu-\mu_\theta^\star)^2\right]
    &\geq
    \frac{h^2}{2}
    \left(1-\operatorname{TV}\left(\mathbb P_+^{(R)},\mathbb P_-^{(R)}\right)\right),
    \end{aligned}
    \label{eq:tv_lower_bound}
\end{equation}
where \(\operatorname{TV}\) denotes total variation distance.  Consequently, if
\begin{equation}
    \KL\left(\mathbb P_+^{(R)}\,\|\,\mathbb P_-^{(R)}\right)\leq R I_p h^2,
    \label{eq:local_kl_condition}
\end{equation}
then
\begin{equation}
    \sup_{\theta}
    \mathbb E_\theta\left[(\widehat\mu-\mu_\theta^\star)^2\right]
    \geq
    \frac{h^2}{2}\left(1-\sqrt{\frac{R I_p h^2}{2}}\right)_+.
    \label{eq:kl_lower_bound}
\end{equation}
Choosing \(h=(2RI_p)^{-1/2}\), whenever this perturbation lies inside the local model, gives the sampling lower bound
\begin{equation}
    \sup_\theta
    \mathbb E_\theta\left[(\widehat\mu-\mu_\theta^\star)^2\right]
    \geq
    \frac{1}{8RI_p}.
    \label{eq:one_over_r_lower_bound}
\end{equation}
\end{theorem}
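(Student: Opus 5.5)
This is Le Cam's two-point method; I would argue in three steps.

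\emph{Step 1: reduce estimation to testing.} Fix any estimator \(\widehat\mu\), write \(\mu_\pm=\mu_\pm^\star\), and set \(m=(\mu_++\mu_-)/2\). Define the test \(\psi=+\) if \(\widehat\mu\) is closer to \(\mu_+\) than to \(\mu_-\), and \(\psi=-\) otherwise. If \(\psi\neq\theta\), then \(\abs{\widehat\mu-\mu_\theta}\ge h\). Markov's inequality therefore gives
\[
\E_\theta[(\widehat\mu-\mu_\theta)^2]\ge h^2\,\mathbb P_\theta(\psi\neq\theta).
\]
Bound the supremum over \(\theta\) by the average over \(\theta\in\{+,-\}\). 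The quantity \(\mathbb P_+(\psi=-)+\mathbb P_-(\psi=+)\) is the sum of the two error probabilities of a test, and it is minimized by the likelihood-ratio test with value \(1-\operatorname{TV}(\mathbb P_+^{(R)},\mathbb P_-^{(R)})\). This gives
\[
\sup_\theta\E_\theta[(\widehat\mu-\mu_\theta^\star)^2]\ge \frac{h^2}{2}\bigl(1-\operatorname{TV}\bigr),
\]
which is \cref{eq:tv_lower_bound}. The transcript includes all randomization, so \(\widehat\mu\) is a measurable function of the transcript and the argument applies without change. Randomized estimators are handled by absorbing the external randomness into the transcript, as the theorem's setup allows.

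\emph{Step 2: pass from TV to KL.} Pinsker's inequality gives \(\operatorname{TV}\le\sqrt{\KL/2}\le\sqrt{RI_ph^2/2}\) under \cref{eq:local_kl_condition}. Since TV is at most \(1\), we also have \(1-\operatorname{TV}\ge0\), and the positive part is harmless. This yields \cref{eq:kl_lower_bound}.

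\emph{Step 3: optimize over \(h\).} Substitute \(h^2=1/(2RI_p)\). Then \(RI_ph^2/2=1/4\), so the bracket equals \(1-1/2=1/2\), and the bound becomes \(\frac{h^2}{2}\cdot\frac12=\frac{1}{8RI_p}\). This is \cref{eq:one_over_r_lower_bound}, provided the perturbation lies in the local model, as the statement assumes.

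The only step needing care is Step 1. The separation must be strict, in the sense that \(\abs{\widehat\mu-\mu_\theta}\ge h\) whenever the test errs; ties at the midpoint are harmless because the inequality is non-strict. One must also use the exact identity \(\inf_\psi[\mathbb P_+(\psi\neq+)+\mathbb P_-(\psi\neq-)]=1-\operatorname{TV}\), which holds for any pair of measures on a common transcript space. Everything else is routine.
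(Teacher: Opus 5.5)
Your proposal is correct and follows the paper's proof essentially step for step. Both use the midpoint test with Markov's inequality, then Le Cam's identity that the minimal sum of the two testing errors equals \(1-\operatorname{TV}\), together with bounding the supremum by the average; both finish with Pinsker's inequality and the substitution \(h^2=1/(2RI_p)\).
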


\begin{proof}
Threshold \(\widehat\mu\) at the midpoint between \(\mu_-^\star\) and \(\mu_+^\star\).  Under hypothesis \(\theta\), this test can err only if \(\abs{\widehat\mu-\mu_\theta^\star}\ge h\), so Markov's inequality gives
\begin{equation}
    \mathbb P_\theta(\mathrm{err})
    \le \frac{\mathbb E_\theta[(\widehat\mu-\mu_\theta^\star)^2]}{h^2}.
\end{equation}
Le Cam's two-point testing argument \citep{LeCam1973,Tsybakov2009} gives that the minimum possible sum of the two testing errors is \(1-\operatorname{TV}(\mathbb P_+^{(R)},\mathbb P_-^{(R)})\).  Taking the larger of the two risks to be at least their average gives \cref{eq:tv_lower_bound}.  Pinsker's inequality gives
\begin{equation}
    \operatorname{TV}(\mathbb P_+^{(R)},\mathbb P_-^{(R)})
    \leq
    \sqrt{\frac12\KL(\mathbb P_+^{(R)}\|\mathbb P_-^{(R)})},
\end{equation}
which yields \cref{eq:kl_lower_bound}.  The choice of \(h\) gives \cref{eq:one_over_r_lower_bound}.
\end{proof}

\begin{corollary}[Structural non-identifiability floor]
\label{cor:nonidentifiability_floor}
If two distinct ideal values satisfy \(\abs{\mu_+^\star-\mu_-^\star}=2h\) but induce the same noisy transcript distribution for every budget \(R\), then every estimator obeys
\begin{equation}
    \sup_{\theta\in\{+,-\}}
    \mathbb E_\theta\left[(\widehat\mu-\mu_\theta^\star)^2\right]
    \geq \frac{h^2}{2}.
\end{equation}
Thus a budget-independent floor is unavoidable whenever the ideal functional is not identifiable from the noisy experiment and side information allowed to the protocol.
\end{corollary}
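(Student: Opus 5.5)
The plan is to obtain \cref{cor:nonidentifiability_floor} as the degenerate case of \cref{thm:two_point_lower_bound}, specifically of the total-variation bound \cref{eq:tv_lower_bound}. We should not go through the KL or Pinsker route, because that route only gives a useful bound at a tuned \(h\). The TV form holds for arbitrary fixed \(h\) and any pair of hypotheses, and that is exactly what we need here.

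First, fix any budget \(R\) and any estimator \(\widehat\mu\) that is a measurable function of the transcript, together with any allowed side information. The side information must be identical under both hypotheses, because it is part of the declared protocol. By hypothesis, \(\mathbb P_+^{(R)}=\mathbb P_-^{(R)}\) as distributions on the full transcript space, and this includes auxiliary calibration samples and internal randomness. Hence \(\operatorname{TV}(\mathbb P_+^{(R)},\mathbb P_-^{(R)})=0\). Substituting into \cref{eq:tv_lower_bound} gives \(\sup_\theta\E_\theta[(\widehat\mu-\mu^\star_\theta)^2]\ge h^2/2\) directly.

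As a sanity check that does not rely on Le Cam, the common law \(\mathbb P\) gives a direct argument. Set \(m=(\mu_+^\star+\mu_-^\star)/2\). Then \((x-\mu_+^\star)^2+(x-\mu_-^\star)^2=2(x-m)^2+2h^2\ge 2h^2\) pointwise. Taking \(\E_{\mathbb P}\) and bounding the maximum by the average gives \(\sup_\theta\E_\theta[(\widehat\mu-\mu_\theta^\star)^2]\ge h^2\). This is even stronger than the claimed \(h^2/2\), which confirms the stated bound is conservative. I would note this in one line.

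Finally, observe that the bound does not depend on \(R\), since the hypothesis holds for every budget. This is what makes it a budget-independent floor, and it yields the interpretive sentence about non-identifiability. The only real subtlety, and the step I expect to need care, is stating precisely that \(\widehat\mu\) may depend only on quantities whose law is hypothesis-independent. That is, the "side information allowed to the protocol" must be included in the transcript, as in the setup preceding \cref{thm:two_point_lower_bound}. Otherwise an estimator could use side information that distinguishes the two hypotheses, and the equality of transcript laws would not control it.
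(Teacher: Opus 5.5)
Your proposal is correct and follows the paper's route: the corollary is the total-variation bound \cref{eq:tv_lower_bound} of \cref{thm:two_point_lower_bound} specialized to $\operatorname{TV}(\mathbb P_+^{(R)},\mathbb P_-^{(R)})=0$, and budget independence comes from the hypothesis holding for every $R$. Your supplementary midpoint argument is also valid and shows the stated constant $h^2/2$ is loose by a factor of two, since under a common transcript law the bound improves to $h^2$.
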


\begin{remark}[How this lower bound should be used]
\label{rem:lower_bound_role}
This is not a universal no-go theorem for QEM\@.  It is a restricted benchmark for a declared statistical model.  Its value is diagnostic.  If a VD/SV/PEC operating window lies far above \cref{eq:one_over_r_lower_bound}, the method menu or constants may be improvable.  If it approaches the two-point lower bound inside the same model, the selector is close to statistically optimal in that restricted sense.  The corollary also clarifies why bias floors matter: some floors are not artifacts of a mitigation method but consequences of information that the noisy transcript does not contain.
\end{remark}

Explicit constants instantiating this bound for a one-qubit Bernoulli
transcript (\(I_p^{\rm Bern}=4(1-p)^2\), giving the concrete floor
\(1/[32R(1-p)^2]\)), together with an accepted-transcript Fisher-efficiency
statement for the SV estimator, are developed in
\cref{prop:explicit_ip_bernoulli_v5,prop:accepted_transcript_efficiency_v10};
the selector toy model saturates them up to the stated conservative factors.

\section{Toy models with closed constants}
\label{sec:toy_models_overview}

Three fully explicit toy models instantiate every constant in the framework
and are developed in \cref{app:toy_models}.  First, a one-qubit Bernoulli
selector toy admits closed-form quotient constants and a certified
finite-\(p\) sandwich, and saturates the restricted two-point lower bound up
to the explicit factors of \cref{sec:restricted_lower_bound}
(\cref{prop:selector_lecam_constant_factor_v8}).  Second, a fixed-sector toy
with \(\delta_{\mathrm{SV}}\neq0\) yields an operational Efron--Stein
certificate with numerical constants
(\cref{prop:sector_es_operational_certificate_v12}) and an explicit VD-versus-SV
crossing (\cref{prop:sector_toy_vd_sv_crossing_v5}).  Third, a generic
heterogeneous-noise toy exhibits simultaneously nonvanishing
\(\delta_{\mathrm{VD}}\) and \(\delta_{\mathrm{SV}}\), the case probed by the
Level~1 experiments E3--E4.

\section{Experimental validation}
\label{sec:experimental_validation}

The numerical evidence is organized as a three-level hierarchy.  The hierarchy matters: coefficient-level claims are tested only in Level~1, where the simulator is an exact white-box density-matrix model and the closed-form quantities can be compared directly against ground truth.  Level~2 and Level~3 test form, robustness, and operational feasibility under more realistic circuit assumptions; they are not used to validate the constants \(\delta_{\mathrm{VD}}\), \(\delta_{\mathrm{SV}}\), \(c_{\mathrm{VD}}\), \(c_{\mathrm{SV}}\), or the certified remainder coefficients.

\begin{table}[!htbp]
\centering
\small
\begin{tabular}{@{}>{\raggedright\arraybackslash}p{0.16\linewidth}>{\raggedright\arraybackslash}p{0.30\linewidth}>{\raggedright\arraybackslash}p{0.44\linewidth}@{}}
\toprule
Layer & Model & Claim supported \\
\midrule
Level~1 & Exact white-box density-matrix QAOA instances & Coefficient and structural checks against closed-form ground truth. \\
Level~2 & Seeded Qiskit/Aer gate-level simulation with synthetic thermal and readout noise & Form and robustness of the operating-window laws under realistic state-preparation and VD-interferometry overhead. \\
Level~3 & Archived IBM \texttt{ibm\_marrakesh} hardware counts with calibrated readout & Qualitative feasibility and inter-instance robustness on hardware; no coefficient-level inference. \\
\bottomrule
\end{tabular}
\caption{Experimental claim hierarchy.}
\label{tab:validation_hierarchy_v1}
\end{table}

\subsection{Level 1: white-box coefficient checks}

Level~1 is the main reproducible numerical layer.  It uses fixed QAOA instances in a fixed Hamming-weight sector \citep{Farhi2014QAOA,Hadfield2019QAOA} and compares the fitted finite-shot laws with exact density-matrix quantities.  In this experimental subsection, \(\widehat{\beta}^{\rm fit}_0,\widehat{\beta}^{\rm fit}_1,\widehat{\beta}^{\rm fit}_2\) denote WLS coefficients in the empirical regression of MSE against \(1/B\); in particular, \(\widehat{\beta}^{\rm fit}_0\) estimates \(b^2\) and \(\widehat{\beta}^{\rm fit}_1\) estimates the leading MSE coefficient \(\widetilde v=v+2bc\).  These fit coefficients are distinct from the local unmitigated-bias coefficient \(\beta_0=\Tr(O\Delta)\) in \cref{eq:unmitigated_bias}.  The pre-specified primary validation criteria \citep{Nosek2018} were deliberately strict: under the amended v2 run, P1 did not satisfy the all-instance validation rule.  This is a strict validation outcome, not evidence against the quotient law.  The VD curvature signal was present in every instance, with \(\Delta\mathrm{AIC}>2\) \citep{Akaike1974,BurnhamAnderson2002} in \(10/10\) and the fitted curvature \(\widehat{\beta}^{\rm fit}_2\) excluding zero in \(10/10\).  The closed-form leading MSE coefficient \(\widetilde v\) was covered by the \(\widehat{\beta}^{\rm fit}_1\) interval in \(8/10\) v2 instances rather than \(10/10\), while the same closed form was covered in \(10/10\) in the v1 run.  The two v2 \(\widehat{\beta}^{\rm fit}_1\) misses, \texttt{g6\_k2\_c000} and \texttt{g6\_k2\_c002}, have standardized deviations \(z=2.30\) and \(z=2.01\), respectively, which are within the calibration-noise band.

The post-run ensemble calibration gives the corresponding context.  If each interval has nominal \(95\%\) coverage and the coefficient is correct, a \(10/10\) conjunction succeeds with probability only \(0.95^{10}\simeq0.60\); observing \(8/10\) covered intervals has binomial lower-tail probability about \(0.086\).  The pooled residual statistic for v2 \(\widehat{\beta}^{\rm fit}_1\) is also consistent with nominal ensemble coverage, with \(\chi^2\) \(p\simeq0.16\) and mean \(z=-0.575\).  Moreover, P1 contains multiple all-instance interval checks: even just two independent \(95\%\)-coverage \(10/10\) checks would pass with probability about \((0.95^{10})^2\simeq0.36\).  The calibration therefore treats Level~1 as strong structural and diagnostic evidence, while preserving the archive record that the primary all-instance rule was over-strict for this validation family.

\begin{table}[!htbp]
\centering
\small
\begin{tabular}{@{}>{\raggedright\arraybackslash}p{0.18\linewidth}>{\raggedright\arraybackslash}p{0.25\linewidth}>{\raggedright\arraybackslash}p{0.47\linewidth}@{}}
\toprule
Check & Status & Main result \\
\midrule
P1/E1 VD quotient law & All-instance criterion not met & VD \(\Delta\mathrm{AIC}>2\) in \(10/10\); VD curvature \(\widehat{\beta}^{\rm fit}_2\) excludes zero in \(10/10\); closed-form \(\widetilde v\) covered by \(\widehat{\beta}^{\rm fit}_1\) in \(8/10\) in v2 and \(10/10\) in v1. \\
P2/E1 intercept & Descriptive after primary criterion & Closed-form \(b^2\) covered by \(\widehat{\beta}^{\rm fit}_0\) in \(9/10\). \\
P3--P4/E2 \(c_{\mathrm{VD}}\) vs \(c_{\mathrm{SV}}\) & Descriptive after primary criterion & \(c_{\mathrm{VD}}\) passes \(30/30\) instance-\(p\) cells; \(c_{\mathrm{SV}}\) is within the fixed TOST margin \citep{Schuirmann1987} in \(29/30\). \\
P5--P6/E3--E4 first-order slopes & Descriptive after primary criterion & Non-near-zero \(\delta_{\mathrm{VD}}\) cases pass \(7/7\); non-near-zero \(\delta_{\mathrm{SV}}\) cases pass \(9/9\). \\
P7/E5 denominator concentration & Secondary check met & Median bad-denominator probability drops from about \(0.091\) at \(0.125B_{\mathrm{den}}\) to \(9\times10^{-5}\) at \(B_{\mathrm{den}}\) and effectively zero at \(2B_{\mathrm{den}}\). \\
P9/E7 crossing band & Secondary check mixed & The operational self-consistency parameter satisfies \(\eta<1/2\) in \(10/10\), but the fitted crossing lies in the operational band in \(6/10\). \\
\bottomrule
\end{tabular}
\caption{Level~1 validation summary.  The table translates the archived status label into plain language, while the ensemble calibration shows that the missed all-instance criterion is consistent with nominal coverage rather than evidence against the quotient law.  The \(B_{\mathrm{den}}\) scale in the P7 row is the one-sided threshold of \cref{rem:bden_onesided_convention} at \(\varepsilon=10^{-3}\).}
\label{tab:level1_adjudication_v1}
\end{table}

\begin{table}[!htbp]
\centering
\small
\begin{tabular}{@{}>{\raggedright\arraybackslash}p{0.18\linewidth}>{\raggedright\arraybackslash}p{0.23\linewidth}>{\raggedright\arraybackslash}p{0.18\linewidth}>{\raggedright\arraybackslash}p{0.31\linewidth}@{}}
\toprule
Experiment & Paper result & Check & Secondary outcome \\
\midrule
E6/P8 & \cref{thm:mstar_log_scale_v4} & Copy-log scale mechanism & The pre-specified \(M^\star(B)\) regression is not resolvable because the floor is reached in only a few copies.  The mechanism checks pass: \(b_M-b_\infty\sim q^M\) has slope \(-5.23\) versus \(-5.04\), and \(\widetilde v_M\sim\lambda_1^{-2M}\) has slope \(0.1033\) versus \(0.1032\). \\
E8 & \cref{cor:window_location_v10} & Window location, leading law \(C=0\) & VD2-versus-unmitigated recovers the \(p^{-2}\) scale in \(10/10\) instances, with mean exponent \(-1.966\) and mean \(R^2\simeq0.9999\).  SV-versus-unmitigated has no valid positive crossing in \(0/10\): the leading variances match at \(p=0\) and the order-\(p\) split also favors SV, giving uniform dominance. \\
E9 & \cref{thm:selection_trichotomy_v10} & Selection trichotomy, leading law \(C=0\) & Predicted and observed signs match in \(300/300\) cells.  VD-versus-unmitigated gives \(100/100\) crossings; SV-versus-unmitigated gives \(99/100\) dominance cells; VD-versus-SV gives \(73\) dominance cells and \(27\) crossings. \\
E10 & \cref{prop:impl_noise_shift_v11} & Implementation-noise baseline & Common-contraction bias invariance passes \(10/10\), the variance formula passes \(10/10\) at tolerance \(0.02\), and the asymmetric-bias shift passes \(10/10\) at \(4\) SE.  Mean variance inflation is \(1.24\) for \(\eta=0.9\) and \(1.57\) for \(\eta=0.8\). \\
E11 & \cref{prop:selector_lecam_constant_factor_v8} & Le Cam constants in the deriving toy & This is an arithmetic consistency check, not independent QAOA evidence.  MSE divided by the conservative bound is \(\le64\) in \(72/72\), MSE divided by the local bound is \(\le32\) in \(72/72\), and \(sJ=1\) to \(10^{-12}\) in \(72/72\). \\
E12 & \cref{lem:efron_stein_vd_v11,prop:sector_es_operational_certificate_v12} & Efron--Stein operational certificate & The sampled certificate \(\Var_{\mathrm{emp}}\le ES/B\) holds in \(60/60\) cells.  The median ES constant is about \(1.03\times10^3\), compared with the formal \(10^8\) scale reference; under the ES law, SV dominates VD in \(5/10\) instances. \\
\bottomrule
\end{tabular}
\caption{Secondary Level~1 theorem-validation family.  These checks are descriptive white-box comparisons against closed-form quantities and do not enter the primary family of pre-specified validation checks.}
\label{tab:level1_secondary_v1}
\end{table}

The secondary theorem-validation family in \cref{tab:level1_secondary_v1} is separate from the primary validation family.  It is still Level~1 white-box evidence against closed-form ground truth, not coefficient validation on a device.  The family validates the leading-window location and selection-trichotomy patterns at \(C=0\), including \(300/300\) sign matches and \(R^2\simeq0.9999\) for the \(p^{-2}\) VD2-versus-unmitigated scaling, but it does not test the formal certified remainders.  It also checks the multiplicative implementation-noise baseline, the Le Cam constants \(64/32\) in the toy where the bound is derived, with Fisher saturation \(sJ=1\), and the Efron--Stein operational certificate in \(60/60\) sampled cells.  The Efron--Stein constants are on an operational \(10^3\) scale rather than the formal \(10^8\) reference scale, while the E6 copy-log check is reported only mechanistically because \(M^\star(B)\) itself is not resolvable in the fixed configuration.

The pre-specified resource-normalization and observable-convention sensitivity
check (E7) completes the Level~1 family.  Across the eight pre-specified
convention combinations---four resource definitions (paired sample, Hadamard
call, two-qubit-gate proxy, wall-clock proxy) and two observable
normalizations---no pairwise comparison changes its large-resource winner in
any of the ten fixed instances: VD-versus-unmitigated retains a genuine
leading crossing in \(80/80\) cells, and the VD-versus-SV classification is
identical under every convention.  The only convention-dependent feature is
whether SV-versus-unmitigated appears as uniform SV dominance or as a finite
crossing: the wall-clock proxy, which charges SV a small postselection
bookkeeping cost, converts \(6\) of \(20\) instance--normalization cells from
dominance to a crossing without changing the large-resource winner.  Crossing
conclusions therefore do not flip under the declared resource or observable
conventions.

The two remaining pre-specified robustness checks close the same way.  Replacing
the clipping floor \(D_M/2\) by \(D_M/3\) or \(2D_M/3\) leaves the fitted
leading coefficient unchanged on the fixed ensemble: with common random
numbers the three clipped estimators coincide on every draw without a clip
event, the clip rate is zero at the largest budgets, and the fitted
\(\widehat{\beta}^{\rm fit}_1\) values agree within \(2.4\%\) relative
deviation across all instances and thresholds.  Under a heterogeneous
depolarizing channel, the white-box first-order slopes follow the closed-form
generators in \(10/10\) instances, with maximum absolute error
\(5\times10^{-11}\) for \(\delta_{\mathrm{VD}}\); \(\delta_{\mathrm{SV}}\)
vanishes to machine precision, exactly as \cref{rem:depol_sv_zero} predicts,
while \(\delta_{\mathrm{VD}}\) remains generic.

\begin{figure}[!htbp]
\centering
\includegraphics[width=0.92\linewidth]{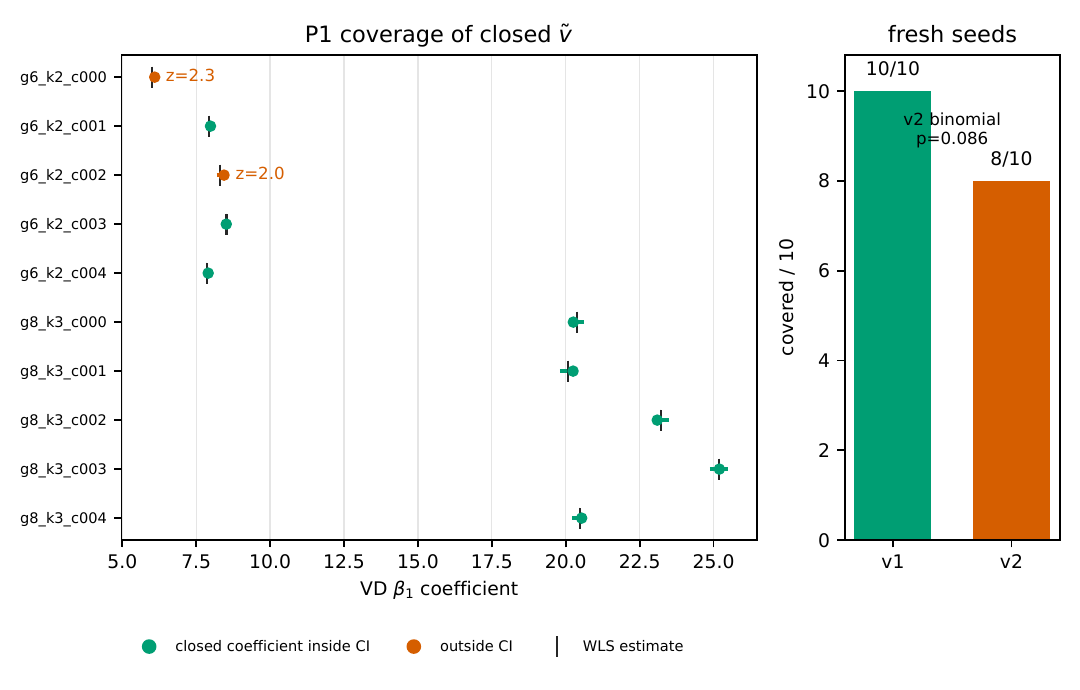}
\caption{Level~1 coverage audit for the P1 leading VD MSE coefficient.  The horizontal intervals are WLS \(95\%\) intervals for \(\widehat{\beta}^{\rm fit}_1\), the dots are the closed-form \(\widetilde v\), and orange marks the two v2 instances outside the interval.  The same coefficient covered \(10/10\) instances in v1 and \(8/10\) in v2; the v2 outcome has binomial lower-tail probability \(0.086\) under nominal \(95\%\) coverage.}
\label{fig:coverage}
\end{figure}

This distinction is important for interpretation.  A genuine coefficient mismatch in Level~1 would be evidence of a theory, implementation, or estimator error.  The observed pattern is different: the VD curvature, \(c_{\mathrm{VD}}\ne0\) versus \(c_{\mathrm{SV}}\approx0\) distinction, first-order generator slopes, and denominator scale all appear with the expected structure.  \Cref{fig:coverage} shows the ensemble-coverage diagnosis behind the \(8/10\) P1 outcome.  The limitation is the calibration of the all-instances validation rule, not the closed-form coefficient.

\begin{figure}[!htbp]
\centering
\includegraphics[width=0.48\linewidth]{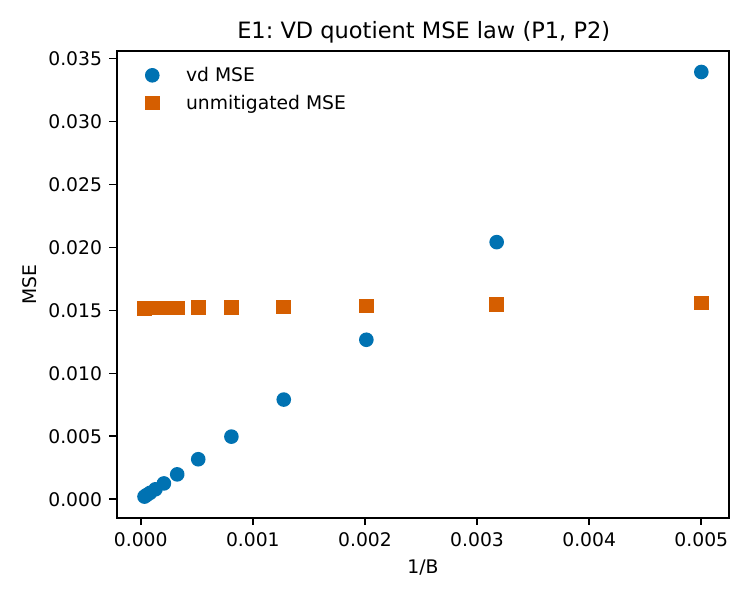}\hfill
\includegraphics[width=0.48\linewidth]{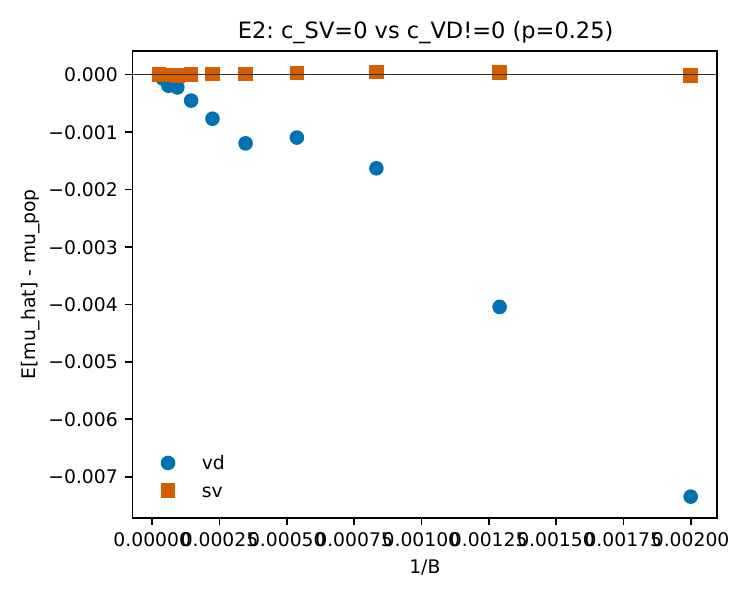}
\caption{Level~1 quotient-law and \(c\)-coefficient diagnostics from the amended v2 run.  Left: E1 MSE versus \(1/B\) for the primary instance used in the figure, illustrating the VD quotient curvature.  Right: E2 separates the nonzero VD quotient-bias slope from the SV slope centered near zero.}
\label{fig:coeff}
\end{figure}

\begin{figure}[t]
\centering
\includegraphics[width=0.48\linewidth]{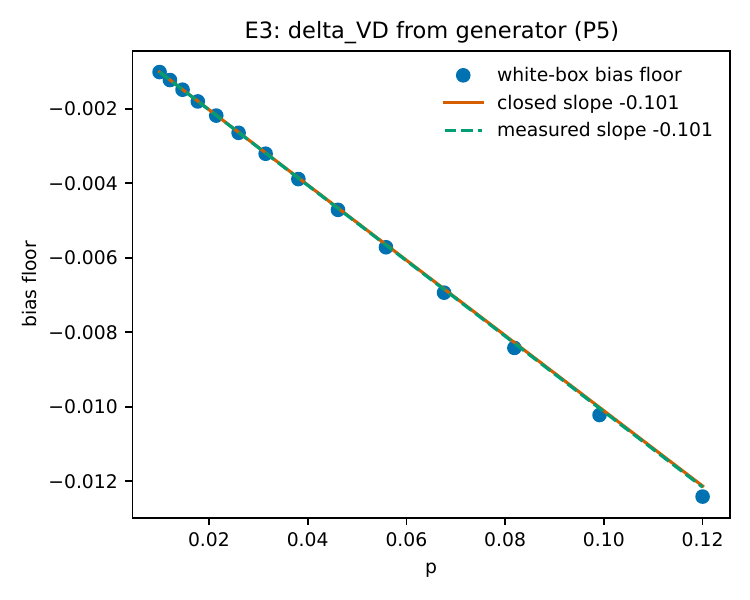}\hfill
\includegraphics[width=0.48\linewidth]{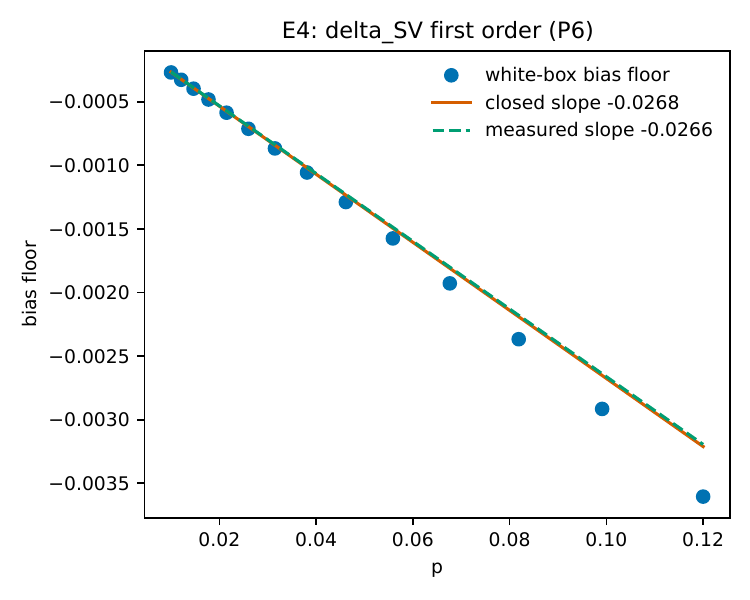}
\caption{Level~1 first-order bias-floor checks from the amended v2 run.  The measured white-box slopes track the generator formulas for \(\delta_{\mathrm{VD}}\) and \(\delta_{\mathrm{SV}}\) on the primary plotted instance.}
\label{fig:delta}
\end{figure}

\begin{figure}[t]
\centering
\includegraphics[width=0.80\linewidth]{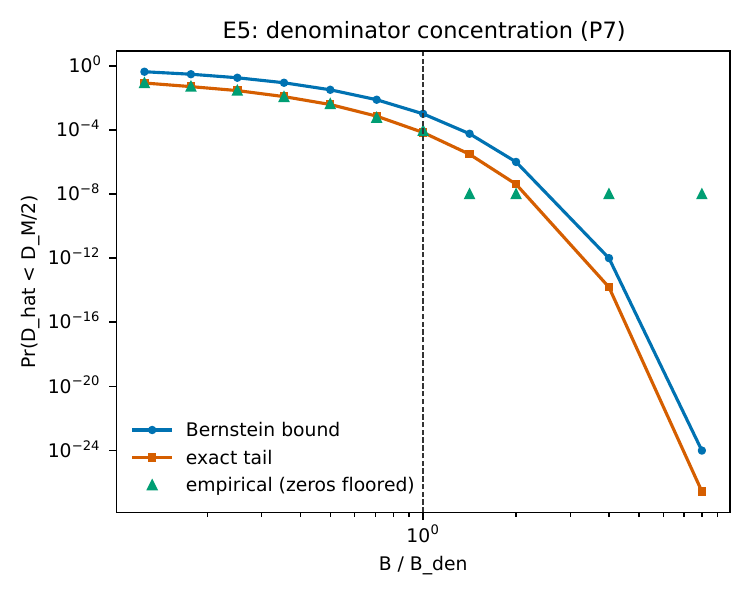}
\caption{Level~1 denominator-concentration diagnostic from the amended v2 run.  The bad-denominator probability collapses around the predicted scale \(B_{\mathrm{den}}\), matching the P7 interpretation in \cref{tab:level1_adjudication_v1}.}
\label{fig:denominator}
\end{figure}

\begin{figure}[t]
\centering
\includegraphics[width=0.80\linewidth]{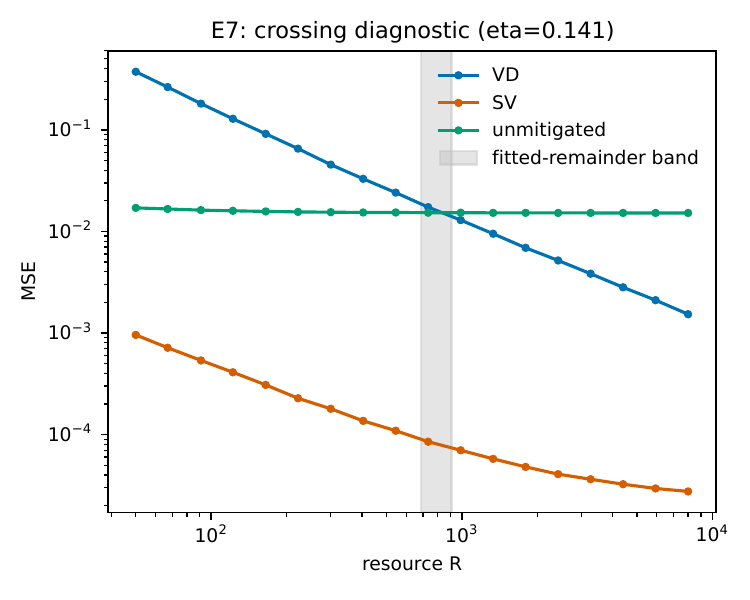}
\caption{Level~1 crossing diagnostic from the amended v2 run.  The shaded region is the fitted-remainder operational band around the leading crossing scale used for P9; this is a diagnostic fitted-\(C\) band, not the formal conservative certificate.}
\label{fig:crossing}
\end{figure}

\Cref{fig:coeff,fig:delta} display the coefficient-level diagnostics associated with P1--P6, while \cref{fig:denominator,fig:crossing} display the denominator and crossing diagnostics associated with P7 and P9.

\subsection{Level 2: device-simulation operating windows}

Level~2 asks whether the finite-shot operating-window laws survive in form under a gate-level Qiskit/Aer model \citep{Qiskit2024}.  The model uses Trotterized QAOA state preparation with a synthetic thermal-plus-readout noise model.  The reference value is the noiseless Level~2 circuit, not the exact Level~1 dense-mixer state.  Thus Level~2 is a device-form validation layer, not a coefficient-validation layer.  The pre-specified Level~2 validation metrics P10 and P13 are the device-level counterparts of the white-box checks E8 (window-location scaling) and E12 (Efron--Stein certificate), respectively.

In the high-power exact-\(\rho\) harness, the observed finite-shot structure is clear: method-dependent population-bias floors plus sampling contributions that decrease with \(B\).  Pooled over the \(n=6\) and \(n=8\) instances, the median absolute empirical bias is about \(0.104\) for the unmitigated estimator, \(0.0145\) for SV, and \(0.0549\) for an optimistic state-noise-only \(M=2\) VD contrast.  The corresponding pooled median MSE values are \(0.0110\), \(3.65\times10^{-4}\), and \(0.0332\).  These pooled summaries do not establish a universal ordering: the \(n=6\) and \(n=8\) regimes differ, and the \(n=8\) cases have much smaller \(D_2\), close to a denominator-concentration stress test.

The optimistic VD rows are best-case contrasts: they omit the cost of implementing the two-copy quotient on a device.  A realistic \(n=6\) follow-up includes Hadamard/SWAP interferometry, controlled-SWAP decomposition, gate noise, and readout.  In that circuit-level test, realistic VD has \(0/5\) bias-floor wins over SV, \(0/5\) large-\(B\) MSE wins at paired-sample resource, and \(0/5\) large-\(B\) MSE wins under the two-qubit-gate proxy.  The mechanism is visible in the overhead columns: for \texttt{g6\_k2\_c000}, realistic interferometry raises the VD bias floor by about \(2.8\) times relative to the optimistic state-noise-only contrast, and the median inflation over the five \(n=6\) instances is about \(3.3\) times.

The denominator and window diagnostics show the same operating-window mechanism.  Exact denominator-tail calculations give a median \(\Prob(\widehat D_M<D_M/2)\) of about \(0.12\) at \(0.1B_{\mathrm{den}}\), \(9.2\times10^{-5}\) at \(B_{\mathrm{den}}\), and \(1.8\times10^{-32}\) at \(10B_{\mathrm{den}}\).  Here and throughout, \(B_{\mathrm{den}}\) is the one-sided threshold of \cref{rem:bden_onesided_convention}.  This Level~2 denominator sweep uses a different low-\(B\) grid point from the Level~1 P7 summary, \(0.1B_{\mathrm{den}}\) rather than \(0.125B_{\mathrm{den}}\), but both probe the same threshold scale.  The selected VD-versus-SV comparison is SV-dominant or large-resource SV-winning in \(10/10\) instances under the two-qubit proxy.  At the same time, the optimistic \(n=6\) state-noise-only contrast has a VD large-resource window in \(2/5\) comparisons, showing that VD distillation itself is not intrinsically weak; the realistic interferometry overhead removes that window in this device-simulation regime.  Finally, the controlled small-gate-noise variant recovers the P10 scaling in form.  The median VD-optimal-versus-unmitigated exponent is \(-1.95\), close to the predicted \(-2\).  For SV versus unmitigated, only \(2/10\) instances have positive matched crossings; their exponents are \(-0.99\) and \(-0.89\), consistent with the predicted \(-1\), while the other instances are reported as having insufficient positive crossings rather than being forced into the fit.

\begin{table}[!htbp]
\centering
\small
\begin{tabular}{@{}>{\raggedright\arraybackslash}p{0.18\linewidth}>{\raggedright\arraybackslash}p{0.22\linewidth}>{\raggedright\arraybackslash}p{0.18\linewidth}>{\raggedright\arraybackslash}p{0.32\linewidth}@{}}
\toprule
Prediction & Level~2 result & State & Key evidence \\
\midrule
Finite-shot MSE shape & Bias floors plus finite-shot variance are visible & Validated in form & Pooled medians: absolute bias \(0.104\) for unmitigated and \(0.0145\) for SV; MSE \(0.011\) for unmitigated and \(3.65\times10^{-4}\) for SV. \\
P7/E5 denominator concentration & Bad-denominator probability collapses around \(B_{\mathrm{den}}\) & Validated in form & Median exact \(\Prob(\widehat D<D/2)\) is about \(0.12\) at \(0.1B_{\mathrm{den}}\), \(9.2\times10^{-5}\) at \(B_{\mathrm{den}}\), and \(1.8\times10^{-32}\) at \(10B_{\mathrm{den}}\). \\
P9/E7 crossings & SV dominates selected realistic/a-fortiori VD across the ensemble & Validated in form, with operating-window caveat & SV is the large-resource winner versus unmitigated in \(10/10\); selected VD-versus-SV is SV-dominant or large-resource SV-winning in \(10/10\); the optimistic \(n=6\) VD window appears in \(2/5\). \\
P10/E8 folded sweep & Global unitary folding is outside the small-noise asymptotic regime & Limitation reported & Folded SV-versus-unmitigated has median slope \(0.063\) versus the \(-1\) asymptotic prediction; VD-optimal-versus-SV has insufficient positive crossings. \\
P10/E8 small-noise variant & Direct small gate-noise scaling recovers P10 in form & Validated in form & VD-optimal-versus-unmitigated has median exponent \(-1.95\) with \(R^2\simeq0.9999\); SV-versus-unmitigated gives \(-0.99\) and \(-0.89\) only in the \(2/10\) instances with positive crossings. \\
P13/E12 Efron--Stein device bound & The ES threshold is not an operational VD-over-SV window & Validated in form as non-operational & Selected VD has finite two-qubit-proxy \(R_{\mathrm{ES}}\) rows in \(0/10\) and dominates at \(R_{\mathrm{ES}}\) in \(0/10\); the optimistic contrast has maximum finite \(R_{\mathrm{ES}}\simeq5.2\times10^9\). \\
\bottomrule
\end{tabular}
\caption{Level~2 prediction-to-result map.  Level~2 validates operating-window laws in form under the synthetic Aer thermal/readout model documented in the artifact, not Level~1 coefficients.  The optimistic VD rows are best-case state-noise-only contrasts; realistic \(n=6\) VD includes Hadamard/SWAP interferometry, controlled-SWAP decomposition, gate noise, and readout.}
\label{tab:level2_summary_v1}
\end{table}

\begin{figure}[!htbp]
\centering
\includegraphics[width=0.82\linewidth]{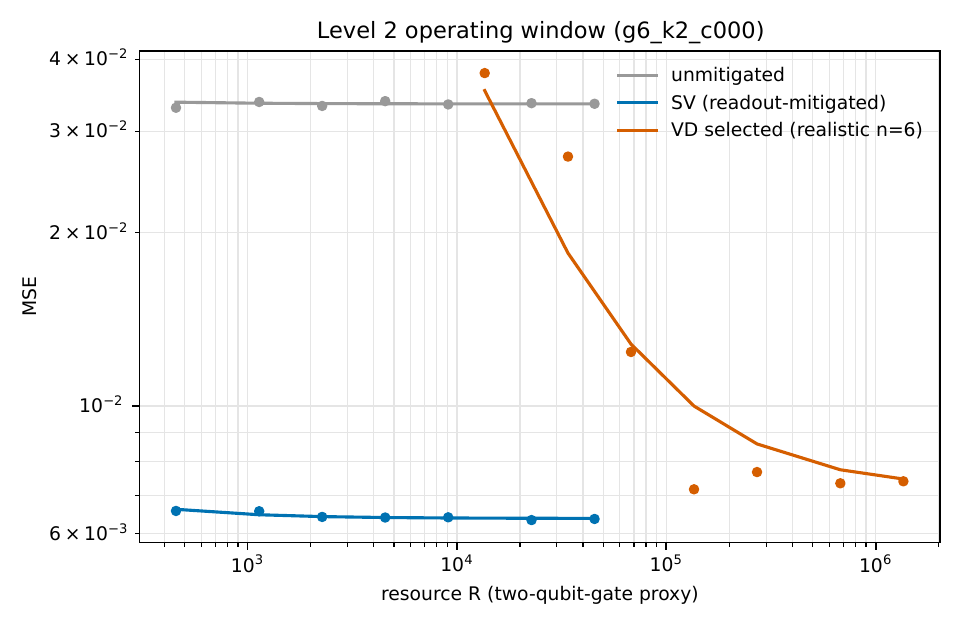}
\caption{Representative Level~2 operating-window curve for \texttt{g6\_k2\_c000} under the two-qubit-gate proxy.  Lines are fitted law curves and points are sampled MSE values from the E7/P9 Level~2 analysis.  The selected VD curve is the realistic \(n=6\) interferometry-including device circuit; SV remains below both unmitigated and selected VD on the plotted resource range.}
\label{fig:level2_window}
\end{figure}

\begin{figure}[!htbp]
\centering
\includegraphics[width=0.78\linewidth]{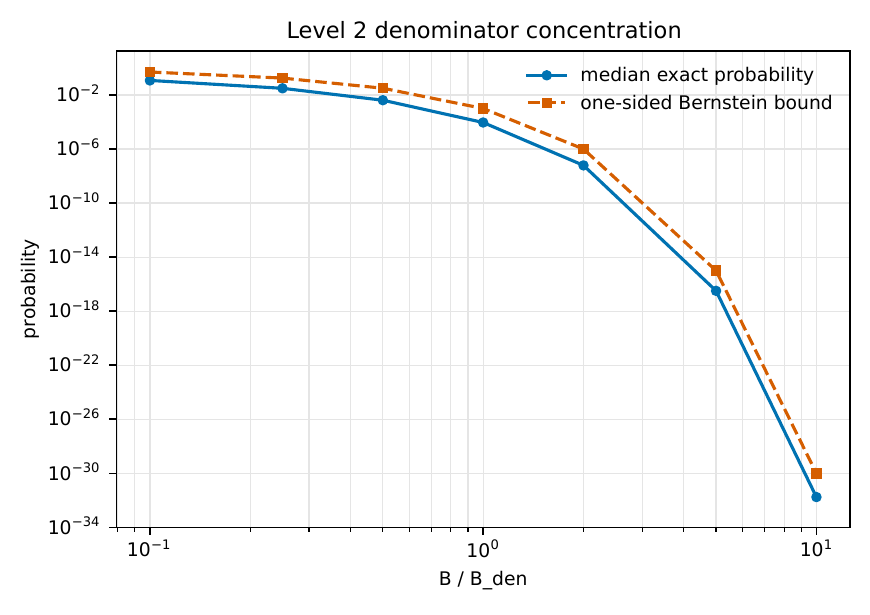}
\caption{Level~2 denominator concentration in the E5/P7 device-simulation layer.  The plot shows the median exact bad-denominator probability and the median one-sided Bernstein bound of \cref{eq:bernstein_denominator_onesided} versus \(B/B_{\mathrm{den}}\), computed from the same Level~2 denominator table used in \cref{tab:level2_summary_v1}.}
\label{fig:level2_denominator}
\end{figure}

\Cref{tab:level2_summary_v1,fig:level2_window,fig:level2_denominator} summarize the Level~2 evidence without changing its claim level.  The device model is synthetic because no \texttt{FakeBackendV2} provider was available in the installed Level~2 environment, and the \(n=8\) cases remain an extreme-decoherence stress regime with \(D_2\simeq0.007\).  The folded P10 sweep uses global unitary folding \citep{GiurgicaTiron2020} with \(\lambda\in\{1,3,5,7\}\), which is a device-depth amplification diagnostic rather than the analytic small-\(p\) knob.  It does not recover the P10 exponents: readout floors, base noise, and global depth amplification dominate, giving an SV-versus-unmitigated median slope about \(0.063\) rather than \(-1\).  This is reported as a regime limitation, not as a failure of the theorem; it is why the separate small-gate-noise variant is the appropriate asymptotic P10 check.

The Level~2 Efron--Stein/P13 analysis is the device-simulation counterpart of \cref{lem:efron_stein_vd_v11,prop:sector_es_operational_certificate_v12}.  It is mathematically meaningful as a variance certificate, but it does not create a practical VD-over-SV operating point in this device-simulation regime.  For selected VD, finite two-qubit-proxy \(R_{\mathrm{ES}}\) rows occur in \(0/10\) instances and selected VD dominates at \(R_{\mathrm{ES}}\) in \(0/10\); even the optimistic contrast reaches only non-operational finite thresholds, with maximum \(R_{\mathrm{ES}}\simeq5.2\times10^9\).  The ES constants are much larger than in the Level~1 white-box certificate, about \(1.6\times10^6\)--\(3.9\times10^6\) for realistic \(n=6\) selected VD and \(4.8\times10^{10}\)--\(6.5\times10^{10}\) for the \(n=8\) a-fortiori optimistic rows, so the certificate is not a route to operational VD dominance here.

\subsection{Level 3: hardware robustness}

Level~3 is an archive-only IBM-hardware robustness layer with a primary \texttt{ibm\_marrakesh} run and an independent cross-backend replicate on \texttt{ibm\_kingston}.  The analysis uses readout calibration \citep{Maciejewski2020Readout,Bravyi2021Readout} and cluster bootstrap over calibration epochs \citep{EfronTibshirani1993}.  It is qualitative hardware evidence: it tests whether the Level~2 operating-window story remains visible on real hardware, but it does not estimate the analytic coefficients.

\begin{table}[!htbp]
\centering
\small
\begin{tabular}{@{}lrrrrl@{}}
\toprule
Instance & \(n\) & Epochs & SV MSE & Unmitigated MSE & VD MSE/status \\
\midrule
\texttt{g6\_k2\_c000} & 6 & 38 & \(0.0103\) & \(0.0450\) & \(0.0741\) \\
\texttt{g6\_k2\_c001} & 6 & 16 & \(5.5\times10^{-4}\) & \(0.0151\) & \(0.0491\) \\
\texttt{g8\_k3\_c000} & 8 & 12 & \(1.5\times10^{-4}\) & \(0.0219\) & diverges \((\widehat D\approx0)\) \\
\bottomrule
\end{tabular}
\caption{Largest-\(B\) hardware MSE summary at \(B=8192\).  Entries are point estimates rounded for readability; the qualitative ordering was assessed using epoch-cluster bootstrap.  The raw \(n=8\) VD point estimate is \(6.8\times10^6\), reported here as denominator blow-up rather than as an ordinary comparable MSE scale.}
\label{tab:level3_hardware_v1}
\end{table}

\begin{table}[!htbp]
\centering
\small
\begin{tabular}{@{}>{\raggedright\arraybackslash}p{0.36\linewidth}rrrr@{}}
\toprule
Backend/analysis & Epochs & SV MSE & Unmit. MSE & VD MSE \\
\midrule
\texttt{ibm\_marrakesh} primary & 38 & \(0.0103\) & \(0.0450\) & \(0.0741\) \\
\texttt{ibm\_kingston} fixed replicate & 14 & \(0.0103\) & \(0.0415\) & \(0.0489\) \\
\texttt{ibm\_kingston} all-19 sensitivity & 19 & \(0.0103\) & \(0.0414\) & \(0.0503\) \\
\bottomrule
\end{tabular}
\caption{Cross-backend Level~3 replicate for \texttt{g6\_k2\_c000} at \(B=8192\).  Every row has the same order, SV \(<\) unmitigated \(<\) VD.  The \texttt{ibm\_kingston} primary replicate used 14 fixed epochs under a result-blind stop rule.  The all-19 sensitivity row includes five later archived jobs that were excluded deterministically from the primary analysis; it is reported only as a result-blind sensitivity check.}
\label{tab:level3_backend_replicate_v1}
\end{table}

\begin{figure}[t]
\centering
\includegraphics[width=0.86\linewidth]{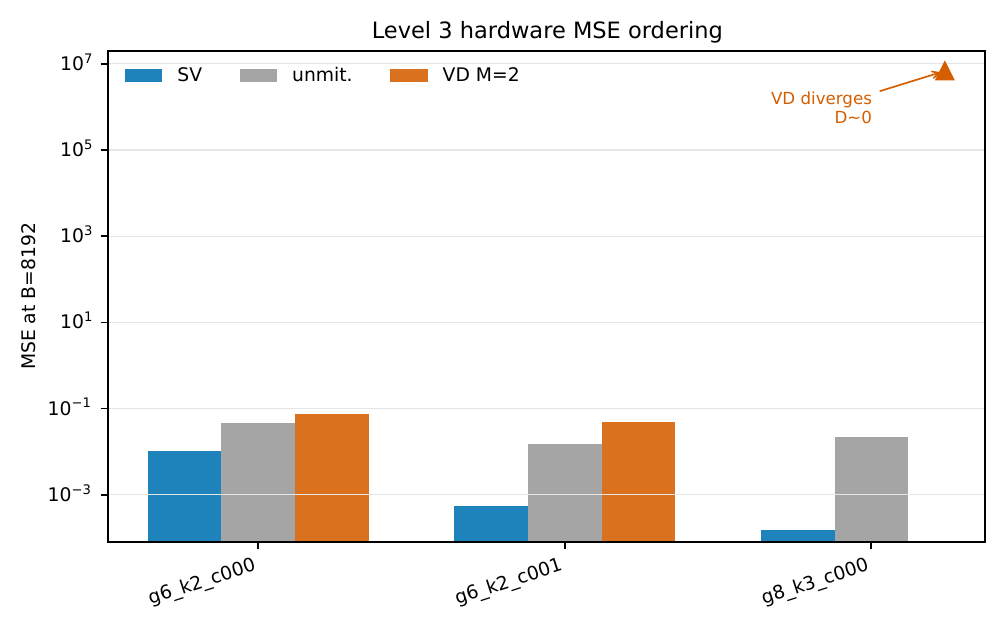}
\caption{Level~3 hardware MSE at \(B=8192\) from archived \texttt{ibm\_marrakesh} analyses.  The y-axis is logarithmic.  The \(n=8\) VD point is annotated as denominator blow-up, not as an ordinary comparable MSE bar.}
\label{fig:hardware_mse}
\end{figure}

The primary hardware ordering in \cref{tab:level3_hardware_v1,fig:hardware_mse} is consistent across two \(n=6\) instances and one \(n=8\) instance on \texttt{ibm\_marrakesh}.  In the \(n=8\) case, the \(M=2\) VD circuit requires a 17-qubit compiled circuit with depth \(3192\), and the readout-mitigated denominator is extremely small; the quotient blows up while SV remains stable.  The independent \texttt{ibm\_kingston} replicate in \cref{tab:level3_backend_replicate_v1} repeats the same \(B=8192\) ordering for \texttt{g6\_k2\_c000}, namely SV \(<\) unmitigated \(<\) VD.  The all-19 Kingston sensitivity analysis, which includes five archived jobs excluded from the fixed 14-epoch primary replicate by a deterministic result-blind rule, leaves the ordering unchanged.  Thus the Level~3 ordering is not confined to a Marrakesh-only calibration snapshot, although it remains a qualitative one-instance cross-backend robustness check rather than hardware coefficient validation.

The hardware P7/P9 behavior has the same form as the denominator and crossing diagnostics in \cref{lem:denom_bernstein_v4,thm:certified_crossing}, but only qualitatively.  For \texttt{g6\_k2\_c000}, the readout-mitigated VD denominator is small, with \(B/B_{\mathrm{den}}\simeq0.62\) at \(B=8192\) on \texttt{ibm\_marrakesh} and \(B/B_{\mathrm{den}}\simeq1.9\) on the fixed \texttt{ibm\_kingston} replicate.  The Marrakesh point remains below the denominator-concentration scale, while Kingston exceeds it at the largest \(B\) but VD still loses because of quotient bias and epoch-level variance.  No VD-over-SV crossing is resolved on the available hardware \(B\)-grid and budget.  Level~3 therefore supports the same qualitative conclusion as Level~2: calibrated SV has lower MSE in the tested QAOA hardware instances, whereas realistic VD overhead and denominator instability move the optimistic VD window outside the available hardware grid and budget.

\subsection{Synthesis}

The combined validation picture matches the mathematical thesis without overstating the evidence.  There is no regime-independent winner.  VD is favored in deliberately VD-friendly toy regimes and can show an ideal state-noise-only window in moderate-noise \(n=6\) cases.  In the tested realistic QAOA simulations and archived hardware runs, however, SV has lower MSE after sector postselection and readout calibration because the VD quotient pays both denominator-concentration and interferometry-overhead costs.  Level~1 supports the coefficient formulas structurally while noting that the pre-specified all-instance criterion was not met; Levels~2--3 support the operating-window form and practical ordering, not analytic coefficient validation.  The device-level lesson is therefore regime-dependent: an ideal VD advantage does not imply an operational one.  The operating-window certificates identify where the gap comes from---quotient instability and implementation overhead---rather than leaving the hardware ordering as an unexplained benchmark reversal.

\section{Reproducibility and data availability}

The accompanying artifact is organized so that the level of reproducibility matches the level of claim.  The Level~1 coefficient-checking layer is the main reproducible core: it uses a white-box density-matrix implementation, fixed instances and thresholds, and reference outputs checked by a verifier.  The fixed analysis plan has two archived hashes, and the amended v2 package passes a zero-drift check.  A reviewer can rerun the Level~1 reference verification without Qiskit or any quantum-cloud account.

Level~2 is a seeded Qiskit/Aer device-simulation layer.  Its outputs are included as a form-validation and robustness layer, but they are not mixed into the tight Level~1 reference tier.  They are documented as generated artifacts with seeds and reports, outside the Level~1 reference tier.

Level~3 is hardware analysis-reproducible only.  The raw IBM count archives, job identifiers, backend provenance, transpilation summaries, layouts, and calibration-epoch metadata are archived.  Public reproduction reruns only the analysis from those archives; it does not require, and should not attempt, new QPU submissions.  Hardware credentials are not stored in the repository.

The release package includes the code repository, archived analysis-plan hashes, reference outputs, archived hardware counts, exact dependency lockfiles, and release metadata.  The repository is public at \url{https://github.com/vicenzoscavino1999/finite-shot-vd-sv}, and the exact release used for this manuscript (\texttt{v1.0.1}) is archived at Zenodo under DOI \href{https://doi.org/10.5281/zenodo.20682998}{10.5281/zenodo.20682998}.

\section{Scope and limitations}

We prove conditional local laws, explicit quotient certificates, implementation-level VD variance constants for an idealized Hadamard/derangement estimator, a sharper Efron--Stein total-variance diagnostic, an asymptotic second-order quotient-variance diagnostic, finite-interval toy boundaries, a multiplicative implementation-noise baseline, and a restricted Le Cam lower bound.  The general toy certificates are numerically conservative; their large constants certify non-vacuity but are not intended as tight physical thresholds.  In the first closed toy, however, the direct SV second-order quotient calculation gives an operational remainder of order one and certifies the crossing at ordinary small-noise parameters.  The validation layer instantiates concrete QAOA instances, a concrete SV projector, realistic VD interferometry circuits, Aer device simulation, and qualitative IBM-hardware checks.  Those additions test form and feasibility, but they do not prove a global optimality theorem over all possible QEM protocols.  Such a theorem would require a fully specified statistical decision model for estimating an ideal functional \(\mu_\star\) from noisy states \(\rho_p\), including what is known about the noise channel and which adaptive measurements are admissible.

The remaining limitations are therefore methodological rather than missing implementation steps.
\begin{enumerate}[leftmargin=2em]
    \item The compiled QAOA/VD/SV circuits are used in the validation layer, but the theorem-level implementation-noise model remains the idealized multiplicative baseline.
    \item Level~2 and Level~3 use concrete QAOA families and readout-calibrated sector postselection, but they validate form and robustness rather than analytic coefficients.
    \item The hardware conclusion concerns the global interferometric VD circuits implemented here.  Localized virtual distillation, layout-aware compilation, or other lower-depth purification circuits can change implementation constants such as \(\kappa_M\) and move a crossing window; they do not remove the need to certify denominator concentration, quotient bias, and physical resource normalization.
    \item CDR remains outside the theorem-level selector unless a separate training-model law is proved.
    \item The hardware layer is qualitative and budget-limited: it uses three archived instances on a primary backend plus one independent cross-backend replicate of \texttt{g6\_k2\_c000}, with cluster bootstrap over calibration epochs, so it supports feasibility and robustness rather than a hardware-universal ordering.
\end{enumerate}

\section{Conclusion}

The central message is that finite-shot error mitigation is a regime problem, not a one-number ranking.  VD and SV can be compared by operating windows only after their residual bias floors, sampling coefficients, quotient corrections, and remainder regimes are explicit.  In the closed one-qubit transcript, the selector can also be compared directly with conservative Le Cam and sharper local Fisher benchmarks; these are used as physical distinguishability limits for fixed noisy measurement transcripts, not as abstract statistical decoration.  In the sector toy, the residual SV bias floor is compared with a restricted structural non-identifiability floor.  For VD, the key object is the quotient law plus the denominator certificate.  For SV, the key object is the projected first-order perturbation \(P\Delta P-\rho_\star\Tr(P\Delta P)\).  Once each method has a certified law
\begin{equation}
    \begin{aligned}
    \MSE_m(p,R)
    &=b_m(p)^2+\frac{s_m(p)}{R}+\rho_m(p,R),\\
    \abs{\rho_m(p,R)}
    &\leq \frac{C_m(p)}{R^2},
    \end{aligned}
\end{equation}
method selection becomes a lower-envelope problem with a self-consistency test for every crossing.  This moves the comparison of mitigation methods from a purely formal Taylor framework to a mathematically defensible finite-budget theory.  The three-level validation program supports that theory at the appropriate levels of claim: structural coefficient checks in the exact white-box Level~1 layer, form and robustness in the Aer Level~2 layer, and qualitative feasibility with calibrated readout on IBM hardware in Level~3.

\section*{Acknowledgements}

The author thanks his parents, Luigi Stefano Scavino Montero and Isabel Betzabe Alfaro Ninahuanca, and his sister, Luciana Franceska Scavino Alfaro, for their constant support, patience, and encouragement throughout this work.

\section*{Author contribution statement}

Vicenzo Scavino Alfaro is the sole author of this work and is responsible for the conceptual development, theoretical analysis, software, numerical and hardware-data analysis, validation, visualization, writing, and final approval of the manuscript.

The author used AI-assisted tools for editorial and workflow support.  All scientific claims, calculations, code, data analysis, and manuscript text were reviewed and approved by the author, who remains fully responsible for the work.

\appendix
\numberwithin{equation}{section}

\section{Quotient delta-method template}
\label{app:template}

Let \((U_s,V_s)_{s=1}^B\) be i.i.d.\ with means \((u,v)\), \(v>0\), and covariance matrix
\begin{equation}
    \Sigma=\begin{pmatrix}\sigma_U^2&\sigma_{UV}\\\sigma_{UV}&\sigma_V^2\end{pmatrix}.
\end{equation}
Let \(\widehat u=B^{-1}\sum_sU_s\), \(\widehat v=B^{-1}\sum_sV_s\), and \(\widehat\theta=\widehat u/\widehat v\).  With \(\theta=u/v\),
\begin{align}
    \E\widehat\theta
    &=\theta+\frac{\theta\sigma_V^2-\sigma_{UV}}{Bv^2}+\mathcal R_E(B),\\
    \Var(\widehat\theta)
    &=\frac{\sigma_U^2+\theta^2\sigma_V^2-2\theta\sigma_{UV}}{Bv^2}+\mathcal R_V(B),
\end{align}
where, under the bounded-denominator conditions used in the body, \(\abs{\mathcal R_E(B)}\le A_E/B^2+E_{\rm bad}(B)\) and \(\abs{\mathcal R_V(B)}\le A_V/B^2+V_{\rm bad}(B)\).  This is only the algebraic template behind VD and SV; it is the standard delta-method quotient expansion \citep{VanDerVaart1998}.  The certified versions in \cref{thm:vd_quotient_certified_v4,thm:sv_quotient_law_v2} add denominator concentration and clipping to make the expansion non-asymptotically meaningful.

\section{Explicit quotient-remainder constants}
\label{app:constants}

This appendix assembles the explicit constants used in
\cref{thm:vd_quotient_certified_v4}, and, through the substitution stated
there, in \cref{thm:sv_quotient_law_v2}.

\begin{definition}[Explicit quotient-remainder constants]
\label{def:quotient_constants_v4}
For fixed \((p,M)\), put \(\mu_M=N_M/D_M\).  For sample means
\(\bar\xi=B^{-1}\sum_s \xi_s\) and \(\bar\zeta=B^{-1}\sum_s\zeta_s\), define finite moment envelopes
\begin{align}
    \mathfrak m_{12,M}(p)&=\sup_{B\ge1} B^2\abs{\E[\bar\xi\bar\zeta^2]},
    &
    \mathfrak m_{21,M}(p)&=\sup_{B\ge1} B^2\abs{\E[\bar\xi^2\bar\zeta]},\\
    \mathfrak m_{03,M}(p)&=\sup_{B\ge1} B^2\abs{\E[\bar\zeta^3]},
    &
    \mathfrak m_{13,M}(p)&=\sup_{B\ge1} B^2\E[\abs{\bar\xi}\abs{\bar\zeta}^3],\\
    \mathfrak m_{04,M}(p)&=\sup_{B\ge1} B^2\E[\abs{\bar\zeta}^4],
    &
    \mathfrak m_{40,M}(p)&=\sup_{B\ge1} B^2\E[\abs{\bar\xi}^4],\\
    \mathfrak m_{22,M}(p)&=\sup_{B\ge1} B^2\E[\bar\xi^2\bar\zeta^2].
\end{align}
These constants are finite under \cref{ass:bounded_vd_pair_v4}.  For centered sample means the third-order envelopes satisfy the exact identities
\[
    \mathfrak m_{12,M}=\abs{m_{12,M}},
    \qquad
    \mathfrak m_{21,M}=\abs{m_{21,M}},
    \qquad
    \mathfrak m_{03,M}=\abs{m_{03,M}},
\]
because only the fully diagonal index contractions survive.  A fully explicit admissible choice for the fourth-order envelopes follows from standard fourth-moment estimates for bounded centered averages; for example
\(\mathfrak m_{04,M}\le 3\sigma_{D,M}^4+K_{D,M}^4\),
\(\mathfrak m_{40,M}\le 3\sigma_{N,M}^4+16K_{N,M}^4\),
\(\mathfrak m_{22,M}\le (3\sigma_{N,M}^4+16K_{N,M}^4)^{1/2}(3\sigma_{D,M}^4+K_{D,M}^4)^{1/2}\).

For the variance certificate define the linear and quadratic quotient terms
\begin{equation}
    L_M=\frac{\bar\xi-\mu_M\bar\zeta}{D_M},
    \qquad
    Q_M=\frac{\mu_M\bar\zeta^2-\bar\xi\bar\zeta}{D_M^2},
\end{equation}
put \(G=\{\abs{\bar\zeta}\le D_M/2\}\), and let
\begin{equation}
    T_M=
    \begin{cases}
    \displaystyle
    \frac{\bar\xi\bar\zeta^2-\mu_M\bar\zeta^3}{D_M^3}
    -\frac{\bar\xi\bar\zeta^3}{D_M^4}
    +\frac{N_M+\bar\xi}{D_M}
      \frac{(\bar\zeta/D_M)^4}{1+\bar\zeta/D_M},
      &\abs{\bar\zeta}\le D_M/2,\\[1.0ex]
    0,&\abs{\bar\zeta}>D_M/2 .
    \end{cases}
    \label{eq:TM_good_event_definition_v1}
\end{equation}
Thus \(T_M\) is the signed sum of the third-and-higher Neumann terms after \(Q_M\), restricted to the good event where the displayed tail is bounded.  The remaining finite variance-tail envelope is
\begin{equation}
    \mathfrak t^{(V)}_M(p)
    :=\sup_{B\ge1}B^2
    \abs{2\Cov(L_M,T_M)+2\Cov(Q_M,T_M)+\Var(T_M)}.
    \label{eq:tail_variance_envelope_v14}
\end{equation}
It is finite under the boundedness assumption; if desired it can be expanded into sixth- and eighth-moment envelopes of \((\bar\xi,\bar\zeta)\).  Keeping it as \(\mathfrak t^{(V)}_M\) avoids hiding the \(\Cov(L_M,T_M)\) term inside an informal \(\OO(B^{-2})\).
Define
\begin{align}
    A_{E,M}(p)
    &=\frac{\mathfrak m_{12,M}(p)+\abs{\mu_M}\mathfrak m_{03,M}(p)}{D_M(p)^3}
      +\frac{\mathfrak m_{13,M}(p)}{D_M(p)^4}
      +\frac{2K_{N,M}(p)\mathfrak m_{04,M}(p)}{D_M(p)^5},
       \label{eq:AE_constant_v4}\\
    A_{Q,M}(p)
    &=\frac{2}{D_M(p)^4}\left[
      \abs{\mu_M}^2\mathfrak m_{04,M}(p)+\mathfrak m_{22,M}(p)
      \right],
      \label{eq:AQ_constant_v14}\\
    A_{LQ,M}(p)
    &=\frac{2}{D_M(p)^3}\left[
      2\abs{\mu_M}\mathfrak m_{12,M}(p)+\mathfrak m_{21,M}(p)
      +\abs{\mu_M}^2\mathfrak m_{03,M}(p)
      \right],
      \label{eq:ALQ_constant_v14}\\
    A_{V,M}(p)
    &=A_{Q,M}(p)+A_{LQ,M}(p)+\mathfrak t^{(V)}_M(p),
      \label{eq:AV_constant_v4}\\
    \omega_M(p)&=\frac{K_{D,M}(p)}{D_M(p)},\\
    B_{E,M}(p)
    &=\frac{2K_{N,M}(p)}{D_M(p)}
      \left(5+3\omega_M(p)+3\omega_M(p)^2+3\omega_M(p)^3\right),\\
    B_{V,M}(p)&=2B_{E,M}(p)^2,\\
    E_{\mathrm{bad},M}(p,B)
    &=B_{V,M}(p)
    \exp\!\left[-\frac{B D_M(p)^2}{8\sigma_{D,M}^2(p)+\frac{4}{3}K_{D,M}(p)D_M(p)}\right].
\end{align}
The constants are conservative, and the mixed covariance between the linear and quadratic quotient terms is controlled explicitly.  In particular, \cref{eq:ALQ_constant_v14} controls \(2\Cov(L_M,Q_M)\), the term that is not controlled by bounding \(\Var(Q_M)\) alone.
\end{definition}

\setcounter{proposition}{0}
\renewcommand{\theproposition}{\thesection.\arabic{proposition}}
\renewcommand{\theHproposition}{constantsapp.\arabic{proposition}}

We also record here the sharp asymptotic second-order variance coefficient of
the clipped quotient, which identifies the moment combination that controls
the true \(B^{-2}\) correction and explains why the worst-case envelope
\(A_{V,M}\) above is pessimistic.

\begin{proposition}[Second-order variance coefficient of the quotient]
\label{prop:vd_second_order_variance_v11}
Under \cref{ass:bounded_vd_pair_v4}, let \(\mu_M=N_M/D_M\), \(\bar\xi=B^{-1}\sum_s\xi_s\), \(\bar\zeta=B^{-1}\sum_s\zeta_s\), and define the linear and quadratic quotient terms
\begin{align}
    L_M&=\frac{\bar\xi-\mu_M\bar\zeta}{D_M},\\
    Q_M&=\frac{\mu_M\bar\zeta^2-\bar\xi\bar\zeta}{D_M^2}.
\end{align}
Then, on the good-denominator event and up to an exponentially small clipping correction,
\begin{equation}
    \Var\!\left(\widehat\mu_{\mathrm{VD},M}^{\mathrm{clip}}\right)
    =\frac{v_{\mathrm{VD},M}}{B}
     +\frac{W_{\mathrm{VD},M}^{(0)}}{B^2}
     +\OO(B^{-3})+\OO(e^{-\gamma_M B}),
    \label{eq:second_order_variance_expansion_v11}
\end{equation}
where the exact third-moment contribution and the clean second-order coefficient are
\begin{align}
    \Gamma_{\mathrm{VD},M}
    &:=\frac{2}{D_M^3}
      \left(2\mu_M m_{12,M}-m_{21,M}-\mu_M^2m_{03,M}\right)
      \notag\\
    &=2B^2\Cov(L_M,Q_M),                         \label{eq:third_moment_gamma_v13}\\
    W_{\mathrm{VD},M}^{(0)}
    &=\Gamma_{\mathrm{VD},M}
      +\frac{1}{D_M^4}\Big[
      8\mu_M^2\sigma_{D,M}^4
      +3\sigma_{N,M}^2\sigma_{D,M}^2
      \notag\\
    &\qquad
      +5\sigma_{ND,M}^2
      -16\mu_M\sigma_{D,M}^2\sigma_{ND,M}
      \Big],                                      \label{eq:second_order_W0_v13}\\
    B^2\Var(Q_M)
    &=\frac{1}{D_M^4}\Big[
      2\mu_M^2\sigma_{D,M}^4
      +\sigma_{N,M}^2\sigma_{D,M}^2
      \notag\\
    &\qquad
      +\sigma_{ND,M}^2
      -4\mu_M\sigma_{D,M}^2\sigma_{ND,M}
      \Big]+\OO(B^{-1}), \label{eq:second_order_Q_variance_v11}\\
    2B^2\Cov(L_M,T_M^{(3)})
    &=\frac{2}{D_M^4}\Big[
      3\mu_M^2\sigma_{D,M}^4
      +\sigma_{N,M}^2\sigma_{D,M}^2
      \notag\\
    &\qquad\qquad
      +2\sigma_{ND,M}^2
      -6\mu_M\sigma_{D,M}^2\sigma_{ND,M}
      \Big]+\OO(B^{-1}). \label{eq:second_order_LT3_variance_v15}
\end{align}
where \(T_M^{(3)}=(\bar\xi\bar\zeta^2-\mu_M\bar\zeta^3)/D_M^3\).
The term \(\Gamma_{\mathrm{VD},M}\) is the exact third-moment contribution from \(2\Cov(L_M,Q_M)\).  For the independent Hadamard-test implementation used below, \(m_{12,M}=m_{21,M}=0\) and \(m_{03,M}=-2D_M(1-D_M^2)\), hence
\begin{equation}
    \Gamma_{\mathrm{VD},M}^{\rm Had}
    =\frac{4\mu_M^2(1-D_M^2)}{D_M^2}.
    \label{eq:gamma_hadamard_explicit_v1}
\end{equation}
Thus \(\Gamma_{\mathrm{VD},M}\) vanishes only in special cases such as \(\mu_M=0\) or \(D_M=1\); no such cancellation is assumed in the non-asymptotic certificate.
\end{proposition}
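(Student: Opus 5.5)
The plan is to expand the clipped quotient around \(\mu_M\) on the good event \(G=\{\abs{\bar\zeta}\le D_M/2\}\) and compute the variance term by term, keeping everything through order \(B^{-2}\).

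\textbf{Step 1 (Neumann expansion on \(G\)).} On \(G\) the clipping is inactive. Writing \((N_M+\bar\xi)/(D_M+\bar\zeta)\) as a geometric series in \(\bar\zeta/D_M\) gives
\[
\widehat\mu^{\mathrm{clip}}_{\mathrm{VD},M}=\mu_M+L_M+Q_M+T_M^{(3)}+T_M^{(\ge4)},
\]
where \(T_M^{(\ge4)}\) is the bounded tail already displayed in \cref{eq:TM_good_event_definition_v1}. Off \(G\) the estimator is bounded by \(2K_{N,M}/D_M\). By \cref{lem:denom_bernstein_v4}, \(\Prob(G^c)\le 2e^{-\gamma_M B}\). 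Hence replacing the clipped estimator by the truncated series changes mean and variance only by \(\OO(e^{-\gamma_M B})\), with constants of the \(B_{E,M},B_{V,M}\) type.

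\textbf{Step 2 (moment orders of sample means).} For centered i.i.d.\ bounded pairs I will use the standard counting bound: a mixed moment of total degree \(k\) in \((\bar\xi,\bar\zeta)\) is \(\OO(B^{-\lceil k/2\rceil})\). At the orders needed:
\begin{itemize}
\item degree-two moments are exactly \(\sigma_{\cdot\cdot}/B\);
\item degree-three moments are exactly \(m_{ab,M}/B^2\);
\item degree-four moments equal their Isserlis pairing sums divided by \(B^2\), plus \(\OO(B^{-3})\); for instance \(\E[\bar\zeta^4]=3\sigma_{D,M}^4/B^2+\OO(B^{-3})\) and \(\E[\bar\xi^2\bar\zeta^2]=(\sigma_{N,M}^2\sigma_{D,M}^2+2\sigma_{ND,M}^2)/B^2+\OO(B^{-3})\).
\end{itemize}
Consequently every covariance involving total degree \(\ge5\) is \(\OO(B^{-3})\). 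This covers \(\Cov(Q_M,T_M^{(3)})\), \(\Var(T_M^{(3)})\) and \(\Cov(L_M,T^{(\ge4)}_M)\). For the tail, the factor \(1/(1+\bar\zeta/D_M)\) is bounded by \(2\) on \(G\), so the same counting applies via Cauchy--Schwarz.

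\textbf{Step 3 (collect the \(B^{-1}\) and \(B^{-2}\) terms).} Write
\[
\Var=\Var(L_M)+2\Cov(L_M,Q_M)+\Var(Q_M)+2\Cov(L_M,T_M^{(3)})+\OO(B^{-3}).
\]
The four terms are handled as follows.
\begin{itemize}
\item \(\Var(L_M)=v_{\mathrm{VD},M}/B\) exactly, which recovers \cref{eq:vd_v_v4}.
\item Since \(\E L_M=0\), we have \(2\Cov(L_M,Q_M)=2\E[L_MQ_M]\). This is a pure third-moment quantity, and expanding the product gives \(\Gamma_{\mathrm{VD},M}/B^2\), which is \cref{eq:third_moment_gamma_v13}.
\item For \(\Var(Q_M)\) I must subtract \((\E Q_M)^2=(\mu_M\sigma_{D,M}^2-\sigma_{ND,M})^2/(D_M^4B^2)\). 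Combined with the Isserlis fourth moments, this yields \cref{eq:second_order_Q_variance_v11}.
\item \(2\Cov(L_M,T_M^{(3)})=2\E[L_MT_M^{(3)}]\) is computed the same way and gives \cref{eq:second_order_LT3_variance_v15}.
\end{itemize}
Adding the last two brackets coefficientwise,
\[
(2,1,1,-4)+(6,2,4,-12)=(8,3,5,-16),
\]
which is the bracket in \(W^{(0)}_{\mathrm{VD},M}\) in \cref{eq:second_order_W0_v13}.

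\textbf{Step 4 (Hadamard specialization).} For independent \(\pm1\) streams, independence and centering kill \(m_{12,M}\) and \(m_{21,M}\). The third central moment of a \(\pm1\) variable with mean \(D\) is \(\E(Y-D)^3=-2D(1-D^2)\). Substituting into \(\Gamma_{\mathrm{VD},M}\) gives
\[
\Gamma^{\rm Had}_{\mathrm{VD},M}=\frac{-2\mu_M^2(-2D_M(1-D_M^2))}{D_M^3}=\frac{4\mu_M^2(1-D_M^2)}{D_M^2}.
\]

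\textbf{Main obstacle.} I expect the hard part to be the rigorous \(\OO(B^{-3})\) claim for the tail terms on \(G\). These are restricted to an event, so they are not polynomials, and the products \(\Cov(L_M,T^{(\ge4)}_M)\) must be bounded by truncated moments. I would first write \(\E[X\1_G]=\E X-\E[X\1_{G^c}]\), so each truncated moment equals the untruncated one up to \(\OO(e^{-\gamma_M B})\). I would then bound the fifth- and sixth-degree untruncated moments by the counting bound with explicit constants in \(K_{N,M},K_{D,M}\).
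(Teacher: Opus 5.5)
Your route is the paper's: Neumann expansion on $G$, the exact identities $B\Var(L_M)=v_{\mathrm{VD},M}$ and $2B^2\Cov(L_M,Q_M)=\Gamma_{\mathrm{VD},M}$, fourth-order pairing counts for $\Var(Q_M)$ and $2\Cov(L_M,T_M^{(3)})$, and the coefficientwise sum $(8,3,5,-16)$. Your algebra checks out, including the subtraction of $(\E Q_M)^2$ and the Hadamard value $m_{03,M}=-2D_M(1-D_M^2)$.

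The one step that does not deliver what you claim is the non-polynomial tail. The counting bound $\OO(B^{-\lceil k/2\rceil})$ holds for \emph{signed} moments of polynomials in centered means. Cauchy--Schwarz only sees absolute moments, and $\E[\abs{L_M}\,\bar\zeta^4]$ is in general of exact order $B^{-5/2}$. Write $x=\bar\zeta/D_M$. Bounding $\E\bigl[L_M\,\mu_M x^4(1+x)^{-1}\1_G\bigr]$ by $2\abs{\mu_M}\,\|L_M\|_2\,\|x^4\|_2$ therefore gives only $\OO(B^{-5/2})$, not the $\OO(B^{-3})$ claimed in \cref{eq:second_order_variance_expansion_v11}. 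Your untruncation device $\E[X\1_G]=\E X-\E[X\1_{G^c}]$ also cannot be applied to this term as it stands. Off $G$ the factor $(1+x)^{-1}$ is unbounded, and it is undefined when $\widehat D_M=0$.

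The repair is to peel one more Neumann order before estimating, using $x^4/(1+x)=x^4-x^5/(1+x)$.
\begin{itemize}
\item The piece $\mu_M\E[L_M x^4\1_G]$ is a signed degree-five moment up to an exponentially small untruncation correction, hence $\OO(B^{-3})$.
\item On $G$ the remainder obeys $\abs{L_M}\,\abs{x}^5(1+x)^{-1}\le 2\abs{L_M}\,\abs{x}^5$. Cauchy--Schwarz bounds its expectation by $\OO(B^{-1/2})\cdot\OO(B^{-5/2})=\OO(B^{-3})$.
\item The $\bar\xi$-part $(\bar\xi/D_M)x^4/(1+x)$ of the paper's $\widetilde{\mathcal R}_4$ is $\OO(B^{-3})$ by H\"older with $\|L_M\|_4\,\|\bar\xi\|_4\,\|x^4\|_2$.
\item The polynomial piece $-\bar\xi\bar\zeta^3/D_M^4$ is handled by signed counting, as you did.
\item The other tail covariances, such as $\Cov(Q_M,T_M^{(\ge4)})$, have absolute degree at least six and are fine.
\end{itemize}
The paper's own proof simply asserts that the fourth-and-higher Neumann terms have $\OO(B^{-3})$ covariance with $L_M$. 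With this patch your argument is complete and slightly more careful than the published one.
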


The proof, by direct contraction counting for centered sample averages, is
given in \cref{app:proofs}.

\begin{remark}[How this changes the interpretation of the toy certificates]
\label{rem:sharp_variance_interpretation_v11}
\Cref{lem:efron_stein_vd_v11,prop:vd_second_order_variance_v11} do not change the leading operating-window law.  They explain why the explicit constants used in the certified toy examples are pessimistic: \(A_{V,M}\) is a worst-case certificate for a second-order remainder, whereas the actual \(B^{-2}\) variance correction is controlled by the full coefficient \(W_{\mathrm{VD},M}^{(0)}\) in \cref{eq:second_order_W0_v13}, including \(\Gamma_{\mathrm{VD},M}\) and the linear-cubic covariance.  Thus the earlier extreme numerical choices demonstrate formal non-vacuity of the certificate; they do not indicate the physical scale at which VD or SV becomes useful.
\end{remark}

\section{Deferred proofs}
\label{app:proofs}

This appendix contains the proofs of \cref{thm:vd_quotient_certified_v4},
\cref{prop:vd_second_order_variance_v11},
\cref{lem:sv_toy_direct_remainder_v13}, and
\cref{prop:sector_es_operational_certificate_v12}.

\begin{proof}[Proof of \cref{thm:vd_quotient_certified_v4}]
Let \(G=\{\abs{\bar\zeta}\le D_M/2\}\).  On \(G\), clipping is inactive and \(\widehat D_M\ge D_M/2\).  The centered variables satisfy \(\abs{N_M+\bar\xi}=\abs{\widehat N_M}\le K_{N,M}\).  Since \(\abs{\bar\zeta/D_M}\le1/2\), the Neumann expansion is absolutely convergent and
\begin{equation}
    \frac{1}{D_M+\bar\zeta}
    =\frac1{D_M}\left(1-\frac{\bar\zeta}{D_M}
    +\frac{\bar\zeta^2}{D_M^2}-\frac{\bar\zeta^3}{D_M^3}\right)
    +\frac1{D_M}\frac{(\bar\zeta/D_M)^4}{1+\bar\zeta/D_M}.
\end{equation}
Multiplying by \(N_M+\bar\xi=\mu_MD_M+\bar\xi\) gives the term-by-term identity
\begin{align}
    \frac{N_M+\bar\xi}{D_M+\bar\zeta}
    &=\mu_M+\frac{\bar\xi-\mu_M\bar\zeta}{D_M}
      +\frac{\mu_M\bar\zeta^2-\bar\xi\bar\zeta}{D_M^2}
      +\frac{\bar\xi\bar\zeta^2-\mu_M\bar\zeta^3}{D_M^3}
      -\frac{\bar\xi\bar\zeta^3}{D_M^4}
      +\widetilde{\mathcal R}_4,
      \label{eq:explicit_neumann_terms_v5}
\end{align}
where
\begin{equation}
    \widetilde{\mathcal R}_4
    =\frac{N_M+\bar\xi}{D_M}\frac{(\bar\zeta/D_M)^4}{1+\bar\zeta/D_M}.
\end{equation}
The good-event tail is therefore explicitly bounded by
\begin{equation}
    \abs{\widetilde{\mathcal R}_4}
    \le \frac{2K_{N,M}\abs{\bar\zeta}^4}{D_M^5}.
    \label{eq:r4_tail_bound_v5}
\end{equation}
The mixed cubic term \(-\bar\xi\bar\zeta^3/D_M^4\) is kept separate so that the remaining tail contains only powers \(\bar\zeta^4\) and higher.  Taking expectations of the unrestricted monomials through second order, using \(\E\bar\xi=\E\bar\zeta=0\), \(\E\bar\zeta^2=\sigma_{D,M}^2/B\), and \(\E\bar\xi\bar\zeta=\sigma_{ND,M}/B\), gives the coefficient \(c_{\mathrm{VD},M}/B\).  The cubic, mixed cubic, and geometric-tail terms are bounded by \(A_{E,M}/B^2\) from the definitions of \(\mathfrak m_{12},\mathfrak m_{03},\mathfrak m_{13},\mathfrak m_{04}\).  Restoring the restriction to \(G\) changes each bounded monomial expectation by at most its uniform bound times \(\Prob(G^c)\), and on \(G^c\) the clipped estimator obeys \(\abs{\widehat\mu^{\rm clip}_{\mathrm{VD},M}}\le 2K_{N,M}/D_M\).  The prefactor \(B_{E,M}\) in \cref{def:quotient_constants_v4} collects these bad-event corrections, while \cref{lem:denom_bernstein_v4} supplies the exponential probability.  This proves \cref{eq:explicit_E_remainder_v4}.

For the variance, set
\begin{equation}
    Z_M=\mu_M+L_M+Q_M+T_M,
    \qquad
    L_M=\frac{\bar\xi-\mu_M\bar\zeta}{D_M},
    \qquad
    Q_M=\frac{\mu_M\bar\zeta^2-\bar\xi\bar\zeta}{D_M^2},
\end{equation}
where \(T_M\) is the good-event third-and-higher tail in \cref{eq:TM_good_event_definition_v1}.  On \(G\), \(Z_M=\widehat\mu^{\rm clip}_{\mathrm{VD},M}\).  Put \(W_M=\widehat\mu^{\rm clip}_{\mathrm{VD},M}-Z_M\), so \(W_M\) is supported on \(G^c\).  Let \(\omega_M=K_{D,M}/D_M\) and \(P_M=5+3\omega_M+3\omega_M^2+3\omega_M^3\).  Using
\(\Var(Z_M+W_M)-\Var(Z_M)=\Var(W_M)+2\Cov(Z_M,W_M)\), the event-local sup-norm bookkeeping used in \(B_{E,M}\) gives
\begin{align}
    \abs{\Var(Z_M+W_M)-\Var(Z_M)}
    &\le
    6\left(\frac{K_{N,M}}{D_M}\right)^2P_M^2
    \exp\!\left[-\frac{B D_M^2}{8\sigma_{D,M}^2+\frac{4}{3}K_{D,M}D_M}\right] \notag\\
    &\le
    2B_{E,M}^2
    \exp\!\left[-\frac{B D_M^2}{8\sigma_{D,M}^2+\frac{4}{3}K_{D,M}D_M}\right]
    =E_{\mathrm{bad},M}.
    \label{eq:bad_event_variance_prefactor_v1}
\end{align}
For \(Z_M\), \(B\Var(L_M)=v_{\mathrm{VD},M}\).  The quadratic variance term obeys
\begin{equation}
    B^2\Var(Q_M)
    \le B^2\E[Q_M^2]
    \le A_{Q,M}.
\end{equation}
The mixed linear-quadratic term is controlled explicitly by
\begin{align}
    2B^2\abs{\Cov(L_M,Q_M)}
    &\le \frac{2B^2}{D_M^3}
    \abs{\E\big[(\bar\xi-\mu_M\bar\zeta)(\mu_M\bar\zeta^2-\bar\xi\bar\zeta)\big]} \\
    &\le \frac{2}{D_M^3}
    \left[2\abs{\mu_M}\mathfrak m_{12,M}+\mathfrak m_{21,M}
    +\abs{\mu_M}^2\mathfrak m_{03,M}\right]
    =A_{LQ,M}.
\end{align}
Finally, the remaining third-and-higher contributions are exactly those collected in \(\mathfrak t^{(V)}_M\).  Therefore
\begin{equation}
    \abs{\Var(Z_M)-\Var(L_M)}
    \le \frac{A_{Q,M}+A_{LQ,M}+\mathfrak t^{(V)}_M}{B^2},
\end{equation}
and the supported-on-\(G^c\) difference between \(Z_M\) and the clipped quotient is bounded by \cref{eq:bad_event_variance_prefactor_v1}.  This proves \cref{eq:explicit_V_remainder_v4} and controls the covariance term that bounding \(\Var(Q_M)\) alone would not capture.  The MSE expansion follows by combining the variance bound with
\begin{equation}
    \left(b_{\mathrm{VD},M}+\frac{c_{\mathrm{VD},M}}{B}+r^{(E)}_M\right)^2
    =b_{\mathrm{VD},M}^2+\frac{2b_{\mathrm{VD},M}c_{\mathrm{VD},M}}{B}+\OO(B^{-2}),
\end{equation}
where the \(\OO(B^{-2})\) term is replaced by the explicit bounds above in \cref{eq:vd_mse_remainder_v4}.
\end{proof}

\begin{proof}[Proof of \cref{prop:vd_second_order_variance_v11}]
Use the Neumann expansion of \((D_M+\bar\zeta)^{-1}\) on the event \(\abs{\bar\zeta}\le D_M/2\):
\begin{equation}
    \frac{N_M+\bar\xi}{D_M+\bar\zeta}
    =\mu_M+L_M+Q_M+T_M^{(3)}
    +\OO\!\left(\abs{\bar\xi}\abs{\bar\zeta}^3+\abs{\bar\zeta}^4\right),
    \qquad
    T_M^{(3)}=\frac{\bar\xi\bar\zeta^2-\mu_M\bar\zeta^3}{D_M^3}.
\end{equation}
The leading term satisfies \(B\Var(L_M)=v_{\mathrm{VD},M}\).  For centered i.i.d.\ sample averages,
\begin{align}
    \Var(\bar\zeta^2)&=\frac{2\sigma_{D,M}^4}{B^2}+\OO(B^{-3}),\\
    \Var(\bar\xi\bar\zeta)&=\frac{\sigma_{N,M}^2\sigma_{D,M}^2+\sigma_{ND,M}^2}{B^2}+\OO(B^{-3}),\\
    \Cov(\bar\zeta^2,\bar\xi\bar\zeta)&=\frac{2\sigma_{D,M}^2\sigma_{ND,M}}{B^2}+\OO(B^{-3}).
\end{align}
Substituting these three identities into \(Q_M=(\mu_M\bar\zeta^2-\bar\xi\bar\zeta)/D_M^2\) gives \cref{eq:second_order_Q_variance_v11}.  The covariance \(\Cov(L_M,Q_M)\) contains third moments; the exact contraction identity
\[
    \E\!\left[(\bar\xi-\mu_M\bar\zeta)
    (\mu_M\bar\zeta^2-\bar\xi\bar\zeta)\right]
    =
    \frac{2\mu_M m_{12,M}-m_{21,M}-\mu_M^2m_{03,M}}{B^2}
\]
gives \cref{eq:third_moment_gamma_v13}.  The cubic term cannot be discarded at second order, because \(\Var(T_M^{(3)})=\OO(B^{-3})\) but \(\Cov(L_M,T_M^{(3)})=\OO(B^{-2})\).  Writing \(A_s=\xi_s-\mu_M\zeta_s\), a direct contraction count for centered i.i.d.\ averages gives
\begin{equation}
    B^2\E[\bar A^2\bar\zeta^2]
    =\Var(A_s)\sigma_{D,M}^2
    +2\Cov(A_s,\zeta_s)^2+\OO(B^{-1}),
\end{equation}
which is \cref{eq:second_order_LT3_variance_v15}.  All fourth-and-higher Neumann terms have only \(\OO(B^{-3})\) covariance with \(L_M\); boundedness in \cref{ass:bounded_vd_pair_v4} supplies the finite sixth moments needed for these constants and gives the exponentially small clipping correction through \cref{lem:denom_bernstein_v4}.
\end{proof}

\begin{proof}[Proof of \cref{lem:sv_toy_direct_remainder_v13}]
On \(\mathcal G\),
\begin{equation}
    \frac{\overline Z}{\overline A}
    =\frac{\bar z}{a}\sum_{k=0}^\infty\left(-\frac{\bar\alpha}{a}\right)^k,
    \qquad
    \abs{\bar\alpha/a}\le\frac12.
\end{equation}
Therefore the post-quadratic remainder is
\begin{equation}
    H_{\mathrm{SV}}
    :=\frac{\overline Z}{\overline A}-L_{\mathrm{SV}}-Q_{\mathrm{SV}}
    =\frac{\bar z\bar\alpha^2}{a^3}\frac{1}{1+\bar\alpha/a},
\end{equation}
and \(\abs{H_{\mathrm{SV}}}\le16\abs{\bar z}\bar\alpha^2\).  We now spell out the only contraction structure used in the displayed bound.  Since \(Z_sA_s=Z_s\), \(Z_s=0\) on rejection, and \(\E Z_s=0\), one has
\begin{equation}
    \E[Z_s(A_s-a)^r]=(1-a)^r\E Z_s=0,
    \qquad r=0,1,2,3,4.
    \label{eq:single_Z_block_v14}
\end{equation}
Thus any sample-index block containing a single \(Z\) and any number of \(\alpha=A-a\) factors vanishes.  In the expansion of \(\bar z^{\,2}\bar\alpha^4\), the two \(Z\)-indices must therefore coincide; after this identification, every remaining sample index carrying only \(\alpha\)-factors must occur with multiplicity at least two, because \(\E(A_s-a)=0\).  The surviving partitions are consequently only \(\{ZZ\}\{\alpha\alpha\}\{\alpha\alpha\}\), \(\{ZZ\alpha\}\{\alpha\alpha\alpha\}\), \(\{ZZ\alpha\alpha\}\{\alpha\alpha\}\), and \(\{ZZ\alpha\alpha\alpha\alpha\}\), together with index coincidences among these blocks.  Using \(Z_s^2\le A_s\), \(\abs{A_s-a}\le1\), and \(a\ge1/2\), this finite enumeration gives
\begin{equation}
    \E[\bar z^{\,2}\bar\alpha^{4}]
    \le \frac{1}{32R^2}.
    \label{eq:sv_toy_contraction_bound_v13}
\end{equation}
Combining the previous two displays gives \cref{eq:sv_toy_good_remainder_v13}.  Finally, on \(\mathcal G^c\) the clipped quotient contribution is bounded by \(4\Prob(\mathcal G^c)\le8e^{-R/16}\); the stated slightly looser absorption \(16e^{-R/16}\le2/R^2\) leaves room for the post-quadratic expectation/variance cross terms involving \(H_{\mathrm{SV}}\) and the bad-event correction.  Thus \(8/R^2+2/R^2=10/R^2\), proving the claimed post-quadratic envelope.
\end{proof}

\begin{proof}[Proof of \cref{prop:sector_es_operational_certificate_v12}]
For \(u=\ell=1\) and \(p=10^{-2}\),
\begin{equation}
    D_2=(1-2p)^2+p^2+p^2=0.9606>0.96.
\end{equation}
With \(K_{N,2}=1\) and \(K_{D,2}=2\), \cref{eq:efron_stein_vd_bound_v11} gives
\begin{equation}
    \frac{8K_{N,2}^2(D_2+2K_{D,2})^2}{D_2^4}
    \le \frac{8(1+4)^2}{0.96^4}<236,
\end{equation}
which gives \(\Var(\widehat\mu_{\mathrm{VD},2}^{\rm clip})\le236/B=472/R\) because \(B=R/2\).  This proves \cref{eq:sector_es_236_v12}.  The exact VD population bias in the sector toy is
\begin{equation}
    b_{\mathrm{VD},2}(p)=-\frac{(2u^2+\ell^2)p^2}{D_2(p)},
\end{equation}
so \(\abs{b_{\mathrm{VD},2}}\le 3p^2/0.96<4p^2\).  The SV bias is
\begin{equation}
    b_{\mathrm{SV}}(p)=-\frac{2up}{1-\ell p}.
\end{equation}
Here \(u=\ell=1\) and \(p=10^{-2}<1\), so \(0<1-\ell p<1\).  Hence
\begin{equation}
    b_{\mathrm{SV}}(p)^2
    =\frac{4p^2}{(1-p)^2}
    >4p^2.
    \label{eq:sector_sv_floor_margin_v14}
\end{equation}
Numerically this is \(4.0812\ldots\times10^{-4}\), safely above \(4.0\times10^{-4}\); the inequality follows from the denominator \(1-p<1\), not from rounding.  The finite-shot SV expectation correction is much smaller than this margin:
\begin{equation}
    \varepsilon^{(E)}_{\mathrm{SV},\mathrm{sector}}(R)
    <10^{-9}
    \ll \abs{b_{\mathrm{SV}}(p)}-2p,
\end{equation}
using \(a=0.99\), \(\abs Z\le1\), and the explicit Bernoulli quotient constants, including the exponentially small bad-event term, in \cref{thm:sv_quotient_law_v2}.  Therefore
\[
    \left(\abs{b_{\mathrm{SV}}(p)}-\varepsilon^{(E)}_{\mathrm{SV},\mathrm{sector}}(R)\right)^2
    \ge 4p^2,
\]
which proves \cref{eq:sector_es_sv_floor_v12}.  It remains to justify the displayed \(4/R\) finite-sample quotient-bias allowance for VD.  In this independent Hadamard-test implementation \(\sigma_{ND}=0\) and the denominator sign has variance \(\sigma_D^2=1-D_2^2\).  Hence
\begin{equation}
    \frac{\abs{c_{\mathrm{VD},2}}}{B}
    =\frac{\abs{\mu_{\mathrm{VD},2}}\sigma_D^2}{D_2^2B}
    \le \frac{1-D_2^2}{D_2^2B}<\frac{0.09}{B}<\frac{4}{R}.
\end{equation}
The remaining expectation remainder is controlled by \cref{eq:sector_AE_arithmetic_v7}:
\begin{equation}
    \frac{A_{E,\mathrm{VD},2}^{\mathrm{sector}}}{B^2}
    <\frac{3.1\times10^3}{(10^6)^2}<3.1\times10^{-9}.
\end{equation}
For the denominator event, \cref{eq:sector_Bden_bound_v6} gives \(B_{\mathrm{den},2}\le31\log(2\times10^6)<4.5\times10^2\) at failure probability \(10^{-6}\), while here \(B=10^6\); hence the exponentially small denominator contribution is also absorbed below \(10^{-8}\).  Substituting \(p=10^{-2}\) and \(R=2\times10^6\) yields \cref{eq:sector_es_numeric_v12}.
\end{proof}

\setcounter{lemma}{0}
\renewcommand{\thelemma}{\thesection.\arabic{lemma}}
\renewcommand{\theHlemma}{toyapp.\arabic{lemma}}
\setcounter{proposition}{0}
\renewcommand{\theproposition}{\thesection.\arabic{proposition}}
\renewcommand{\theHproposition}{toyapp.\arabic{proposition}}

\section{Toy model with closed constants and finite-\texorpdfstring{$p$}{p} boundary}
\label{sec:toy_model_v4}
\label{app:toy_models}

We first record the explicit lower-bound constants and the transcript-level
efficiency statement that this appendix uses throughout; both instantiate
\cref{thm:two_point_lower_bound}.

\begin{proposition}[Explicit \(I_p\) for a one-qubit Bernoulli transcript]
\label{prop:explicit_ip_bernoulli_v5}
Consider the restricted local model
\begin{equation}
    \rho_t^\star=\frac{I+tX}{2},
    \qquad
    \rho_{p,t}=\frac{I+(1-p)tX}{2},
    \qquad
    \mu_t^\star=t,
\end{equation}
with \(t\in[-t_0,t_0]\) and one resource unit equal to one noisy \(X\)-measurement.  For the two hypotheses \(t=\pm h\), the one-shot transcript distributions are Bernoulli distributions with means \(\pm(1-p)h\) on \(\{\pm1\}\), and for \(\abs{h}\le[2(1-p)]^{-1}\),
\begin{equation}
    \KL(P_{+h}\|P_{-h})
    \le 4(1-p)^2h^2.
\end{equation}
Thus \cref{eq:local_kl_condition} holds with the explicit conservative per-resource information constant
\begin{equation}
    I_p^{\rm Bern}=4(1-p)^2.
    \label{eq:explicit_ip_bernoulli_v5}
\end{equation}
Consequently every estimator based on \(R\) such shots obeys the concrete lower bound
\begin{equation}
    \sup_{t\in\{+h,-h\}}
    \E_t[(\widehat\mu-t)^2]
    \ge \frac{1}{32R(1-p)^2},
\end{equation}
provided the optimizing \(h=[2RI_p^{\rm Bern}]^{-1/2}\) lies in the local interval.
\end{proposition}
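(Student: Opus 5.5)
The plan is to compute the one-shot KL divergence between two Bernoulli laws in closed form, bound it by a $\chi^2$-type quantity, tensorize over the $R$ shots, and then substitute into \cref{thm:two_point_lower_bound}.

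\textbf{Step 1: the one-shot laws.} Measuring $X$ on $\rho_{p,t}$ gives outcome $\pm1$ with probabilities $(1\pm(1-p)t)/2$. Write $m=(1-p)h$. The two hypotheses $t=\pm h$ then give $P_{+h}=\bigl((1+m)/2,(1-m)/2\bigr)$ and $P_{-h}$ equal to the swapped pair.

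\textbf{Step 2: bound the one-shot KL.} I would use the standard inequality $\KL(P\|Q)\le\chi^2(P\|Q)$, which follows from $\log x\le x-1$. Here
\[
\chi^2(P_{+h}\|P_{-h})=\sum_{\pm}\frac{(\pm m)^2}{(1\mp m)/2}=\frac{4m^2}{1-m^2}.
\]
The hypothesis $\abs{h}\le[2(1-p)]^{-1}$ gives $\abs m\le1/2$, hence $1-m^2\ge3/4$. This yields $\KL\le\tfrac{16}{3}m^2$, which is too weak.

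To reach the constant $4$, I would instead use the closed form
\[
\KL=m\log\frac{1+m}{1-m}=2m\,\operatorname{artanh}(m).
\]
Then $2m\operatorname{artanh} m\le 2m^2/(1-m^2)$, which is still $\tfrac83 m^2>2m^2$ but below $4m^2$ when $\abs m\le1/2$. Indeed $\tfrac{8}{3}<4$. So
\[
\KL(P_{+h}\|P_{-h})\le 4m^2=4(1-p)^2h^2.
\]
\textbf{This constant check is the main obstacle.} It is where the range restriction on $h$ is used, and I would verify it via the series $\operatorname{artanh} m=\sum_k m^{2k+1}/(2k+1)\le m/(1-m^2)$.

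\textbf{Step 3: tensorize and apply the lower bound.} Independence of the $R$ shots makes KL additive, so the transcript KL is at most $R\cdot4(1-p)^2h^2$. This is \cref{eq:local_kl_condition} with $I_p=I_p^{\rm Bern}$. Applying \cref{eq:one_over_r_lower_bound} with $h=[2RI_p^{\rm Bern}]^{-1/2}$, which is admissible by the proviso, gives
\[
\frac{1}{8RI_p^{\rm Bern}}=\frac{1}{32R(1-p)^2}.
\]
Two side conditions need checking. First, the minimax risk over $t\in\{\pm h\}$ is exactly the two-point risk of \cref{thm:two_point_lower_bound}, since $\mu_t^\star=t$ gives separation $2h$. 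Second, the chosen $h$ must also satisfy the Step 2 constraint $\abs h\le[2(1-p)]^{-1}$; this holds because $R\ge1$, and in any case it is covered by the stated proviso that $h$ lies in the local interval.
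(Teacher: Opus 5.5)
Your proposal is correct and follows essentially the paper's route: the closed form $\KL(P_{+h}\|P_{-h})=m\log\frac{1+m}{1-m}$, an elementary bound on the logarithm for $\abs{m}\le 1/2$, additivity over the $R$ shots, and substitution into \cref{thm:two_point_lower_bound}. The paper bounds $\abs{\log[(1+m)/(1-m)]}\le 4\abs{m}$ directly, whereas your $\operatorname{artanh} m\le m/(1-m^2)$ step gives the sharper $\tfrac83 m^2$, which the paper also records as $I_p^{\rm Bern}=2(1-p)^2/(1-a^2)$; your explicit check that $h=[8R(1-p)^2]^{-1/2}$ satisfies $\abs{h}\le[2(1-p)]^{-1}$ whenever $R\ge 1/2$ is a small addition that the paper leaves to the proviso.
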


\begin{proof}
For \(Y\in\{\pm1\}\), write
\begin{equation}
    q_+=\frac{1+a}{2},\qquad q_-=\frac{1-a}{2},\qquad a=(1-p)h.
\end{equation}
In this symmetric two-point model \(q_-=1-q_+\), which is the only reason the following simplification is valid.  The Bernoulli KL is
\begin{align}
    \KL(P_{+h}\|P_{-h})
    &=q_+\log\frac{q_+}{q_-}+(1-q_+)\log\frac{1-q_+}{1-q_-}\\
    &=(2q_+-1)\log\frac{q_+}{1-q_+}
      =a\log\frac{1+a}{1-a}.
      \label{eq:bernoulli_kl_corrected_v9}
\end{align}
For \(\abs a\le1/2\), \(\left|\log[(1+a)/(1-a)]\right|\le4\abs a\), hence \(a\log[(1+a)/(1-a)]\le4a^2\) and the KL is at most \(4a^2\).  This gives the conservative finite-interval value \(I_p^{\rm Bern}=4(1-p)^2\).  The same expression also gives the sharper local constant: since \(\log[(1+a)/(1-a)]=2a+O(a^3)\),
\begin{equation}
    \KL(P_{+h}\|P_{-h})=2(1-p)^2h^2+O(h^4),
\end{equation}
so the local two-point/Fisher normalization is
\begin{equation}
    I_{p,\mathrm{loc}}^{\rm Bern}=2(1-p)^2.
    \label{eq:explicit_ip_fisher_local_v8}
\end{equation}
Equivalently, using \(\left|\log[(1+a)/(1-a)]\right|\le2\abs a/(1-a^2)\) gives \(I_p^{\rm Bern}=2(1-p)^2/(1-a^2)\), which tends to \cref{eq:explicit_ip_fisher_local_v8} as \(a\to0\).  Below we keep the conservative finite-interval value when a uniform certificate is needed and separately record the sharper local benchmark when comparing constants.
\end{proof}

\begin{proposition}[Accepted-transcript efficiency of the selected law]
\label{prop:accepted_transcript_efficiency_v10}
Fix \(p\) and suppose method \(m\) operates in a bias-negligible regime, meaning
\(b_m(p)=o(R^{-1/2})\) on the budget range under consideration, so that
\begin{equation}
    \MSE_m(p,R)=\frac{s_m(p)}{R}+o(R^{-1}).
\end{equation}
Assume the per-resource transcript actually used by method \(m\) is a regular local statistical experiment with Fisher information \(J_m(p)\) for the corresponding population value.  Then
\begin{equation}
    s_m(p)\ge \frac{1}{J_m(p)}.
    \label{eq:transcript_crb_v10}
\end{equation}
Equality holds when the method's estimator is locally efficient for that transcript.  In the Pauli/Bernoulli accepted transcript used in the SV toy models, where each accepted observation has conditional variance \(\sigma_{\rm acc}^2(p)\) for the local mean parameter and acceptance probability \(a(p)\),
\begin{equation}
    J_{\rm SV}(p)=\frac{a(p)}{\sigma_{\rm acc}^2(p)},
    \qquad
    s_{\rm SV}(p)=\frac{\sigma_{\rm acc}^2(p)}{a(p)}=\frac{1}{J_{\rm SV}(p)}.
    \label{eq:sv_fisher_pauli_corrected_v13}
\end{equation}
For a non-Bernoulli or non-efficient accepted transcript, \cref{eq:sv_fisher_pauli_corrected_v13} should be read as an additional local-efficiency assumption, not as a universal identity.  Thus the constant-factor gaps in the two-point Le Cam comparisons below are artifacts of conservative two-point testing only after the transcript model has been fixed.
\end{proposition}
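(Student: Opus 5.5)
The plan has two parts: a Cramér--Rao-type argument for the general inequality \(s_m\ge 1/J_m\), followed by a direct computation of the Fisher information of the accepted Pauli/Bernoulli transcript. For the general bound I would not invoke the pointwise Cramér--Rao inequality, because the estimator is only asymptotically unbiased. I would instead use the local asymptotic minimax (Hájek--Le Cam) theorem for the regular local experiment. With \(R\) independent resource units, the transcript has Fisher information \(RJ_m(p)\) for the population value \(\theta\). The same two-point construction as in \cref{thm:two_point_lower_bound}, with \(h\asymp R^{-1/2}\) and the local expansion \(\KL\approx 2R J_m h^2\) (which corresponds to \(I_p=J_m\) in the local normalization, cf.\ \cref{eq:explicit_ip_fisher_local_v8}), already gives \(s_m\gtrsim c/J_m\) with a constant \(c<1\). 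To get the sharp constant I would use the van Trees (Bayesian Cramér--Rao) inequality with a smooth prior of width \(\varepsilon\) on \(\theta\). This yields \(\int \E_\theta(\widehat\mu-\theta)^2\pi(d\theta)\ge 1/(R\,\E_\pi J_m+I(\pi))\). The bias-negligible hypothesis \(b_m=o(R^{-1/2})\) converts the risk into \(s_m/R+o(R^{-1})\) uniformly on the prior support, provided \(s_m\) is continuous in \(\theta\). Multiplying by \(R\), letting \(R\to\infty\) and then \(\varepsilon\to0\) gives \cref{eq:transcript_crb_v10}. Equality under local efficiency is then the definition of efficiency: the normalized MSE attains \(1/J_m\).

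The main obstacle is to make the uniformity-in-\(\theta\) step honest, since the statement only fixes \(p\). I would state explicitly that the \(o(R^{-1})\) remainder is locally uniform in the parameter; the certified laws supply exactly this kind of uniformity through their \(C_m/R^2\) remainders. Alternatively, one can read the proposition in the Hájek sense over shrinking neighbourhoods.

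For the SV identity, the accepted transcript per resource unit is the pair \((A,AW)\) with \(A\sim\mathrm{Bern}(a)\) and \(W\in\{\pm1\}\) on acceptance. I treat \(a(p)\) as not depending on the local mean parameter \(\mu=\E[W\mid A=1]\), i.e.\ \(\mu\) is varied inside the accepted sector. The likelihood then factorizes as \(a^{A}(1-a)^{1-A}\bigl[\tfrac{1+\mu}{2}\bigr]^{A(1+W)/2}\bigl[\tfrac{1-\mu}{2}\bigr]^{A(1-W)/2}\). The score is \(A(W-\mu)/(1-\mu^2)\). Its variance is \(a(1-\mu^2)^{-1}=a/\sigma_{\rm acc}^2\), because \(\sigma_{\rm acc}^2=1-\mu^2\) for \(\pm1\) outcomes. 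This gives \(J_{\rm SV}=a/\sigma_{\rm acc}^2\). Comparing with \(v_{\rm SV}=\sigma_{\rm acc}^2/a\) from \cref{thm:sv_quotient_law_v2}, and using \(c_{\rm SV}=0\) so that \(\widetilde v_{\rm SV}=v_{\rm SV}\) and \(s_{\rm SV}=v_{\rm SV}\) with one resource per shot, yields \(s_{\rm SV}=1/J_{\rm SV}\). Hence the SV estimator is efficient for this transcript. For non-Bernoulli \(W\), the score of a general accepted distribution need not be affine in \(W\), so only the inequality survives, which matches the stated caveat.
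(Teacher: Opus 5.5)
Your proposal is correct, but it takes a different route from the paper on both halves.

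\textbf{General bound.} For \cref{eq:transcript_crb_v10} the paper applies the pointwise local Cram\'er--Rao inequality for locally unbiased regular estimators to the \(R\)-unit transcript with information \(RJ_m(p)\). Your van Trees argument with a shrinking prior (or H\'ajek--Le Cam local minimaxity) needs an extra hypothesis, which you state: the \(o(R^{-1})\) remainder must be locally uniform in the parameter, and \(s_m\) and \(J_m\) must be continuous. In exchange, it genuinely covers the estimators at hand. These are only asymptotically unbiased, because of the quotient bias \(c_m/B\) and the clipping correction. The paper's citation therefore tacitly requires either your uniformity step or the biased Cram\'er--Rao bound with factor \((1+b'(\theta))^2\) and \(b'=\OO(R^{-1})\).

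\textbf{SV identity.} For \cref{eq:sv_fisher_pauli_corrected_v13} the paper counts accepted shots via the Chernoff bound, \(N_{\rm acc}=aR+O_{\mathbb P}(\sqrt R)\). It then multiplies by \(1/\sigma_{\rm acc}^2\) per accepted shot to get \(aR/\sigma_{\rm acc}^2+o(R)\). Your explicit score \(A(W-\mu)/(1-\mu^2)\) gives \(J_{\rm SV}=a/\sigma_{\rm acc}^2\) exactly per resource unit. No concentration is needed, since Fisher information is additive. Your computation also exposes an assumption the paper leaves implicit: that \(a(p)\) does not move with the local mean parameter. Otherwise the acceptance indicator contributes \((\partial_\mu a)^2/[a(1-a)]\), and \(J_{\rm SV}\) strictly exceeds \(a/\sigma_{\rm acc}^2\).

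\textbf{Minor slip.} In your side remark, the per-unit divergence is \(\KL\approx\tfrac12J_m(2h)^2=2J_mh^2\). The matching constant in \cref{eq:local_kl_condition} is therefore \(I_p=2J_m\), not \(J_m\); compare \(I^{\rm Bern}_{p,\mathrm{loc}}=2(1-p)^2\) with the per-shot information \((1-p)^2\). This only changes the unimportant constant \(c\) in your heuristic two-point bound.
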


\begin{proof}
Over \(R\) resource units the transcript has Fisher information \(R J_m(p)\).  The local Cramer--Rao inequality \citep{VanDerVaart1998} gives \(\Var(\widehat\mu_m)\ge 1/[R J_m(p)]\) for locally unbiased regular estimators, yielding \cref{eq:transcript_crb_v10}.  For the Pauli/Bernoulli SV transcript, the number of accepted samples is \(N_{\rm acc}=\sum_{r=1}^R A_r\) with \(A_r\sim\mathrm{Bernoulli}(a(p))\).  The same multiplicative Chernoff bound used in \cref{thm:sv_quotient_law_v2} gives, for every \(0<\delta<1\),
\begin{equation}
    \Prob\!\left(\abs{N_{\rm acc}-a(p)R}\ge \delta a(p)R\right)
    \le 2\exp[-a(p)R\delta^2/3].
\end{equation}
Thus \(N_{\rm acc}=a(p)R+O_{\mathbb P}(\sqrt{R})\).  Conditional on acceptance, the accepted Pauli/mean transcript has local information \(1/\sigma_{\rm acc}^2(p)\) per accepted observation, and the sample mean over accepted observations is locally efficient in this one-dimensional mean experiment.  Hence the accepted transcript has Fisher information \(a(p)R/\sigma_{\rm acc}^2(p)+o(R)\), and its leading variance coefficient is \(\sigma_{\rm acc}^2(p)/a(p)\).  This proves \cref{eq:sv_fisher_pauli_corrected_v13} in the stated Pauli/Bernoulli setting and avoids asserting the identity for arbitrary accepted variables.
\end{proof}

\begin{remark}[Role relative to the Le Cam comparison]
\label{rem:accepted_transcript_vs_lecam_v10}
The lower bound in \cref{thm:two_point_lower_bound} is a restricted two-point benchmark.  \Cref{prop:accepted_transcript_efficiency_v10} is a sharper statement for the transcript actually used by SV in the bias-negligible regime.  Both are useful, but they answer different questions: Le Cam tests whether any estimator can distinguish two nearby ideal targets from noisy data, while accepted-transcript efficiency tests whether a chosen postselected estimator wastes information inside its declared transcript.
\end{remark}

This section gives a minimal analytic example.  It is not intended as a realistic benchmark; it demonstrates that the constants and validity certificates are computable on a finite interval, not only as formal Taylor coefficients.

Let the ideal one-qubit state be \(\rho_\star=\ket0\bra0\), the observable be \(O=X\), so \(\mu_\star=0\), and let the noisy state have spectral form
\begin{equation}
    \rho_p=(1-\varepsilon p)\ket{\phi_1(p)}\bra{\phi_1(p)}+
    \varepsilon p\ket{\phi_2(p)}\bra{\phi_2(p)},
    \label{eq:toy_state_v4}
\end{equation}
where
\begin{equation}
    \ket{\phi_1(p)}=\cos(\theta p)\ket0+\sin(\theta p)\ket1,
    \qquad
    \ket{\phi_2(p)}=-\sin(\theta p)\ket0+\cos(\theta p)\ket1.
\end{equation}
Assume \(\varepsilon>0\), \(\theta\ne0\), and \(0<p\leq p_{\max}\), where
\begin{equation}
    p_{\max}\leq \min\left\{\frac{1}{4\varepsilon},\frac{1}{4\abs\theta}\right\}.
    \label{eq:pmax_toy_v4}
\end{equation}
This keeps the spectrum ordered and the trigonometric expansions uniformly controlled.

\paragraph{Unmitigated estimator.}
Since \(\bra{\phi_1}X\ket{\phi_1}=\sin(2\theta p)\) and \(\bra{\phi_2}X\ket{\phi_2}=-\sin(2\theta p)\),
\begin{equation}
    \mu_0(p)=(1-2\varepsilon p)\sin(2\theta p).
\end{equation}
Using \(\abs{\sin x-x}\leq \abs{x}^3/6\), \(\abs{2\theta p}\le1/2\), and \(1-2\varepsilon p\ge1/2\), the unmitigated squared bias obeys the finite-interval bounds
\begin{equation}
    \frac12\theta^2p^2
    \leq
    b_0(p)^2
    \leq
    16\theta^2p^2
    \qquad (0<p\leq p_{\max}).
    \label{eq:toy_b0_bounds_v4}
\end{equation}
Indeed, the sharper lower constant \(529/576\) also follows from \(\abs{\sin x}\ge(23/24)\abs{x}\) on \(\abs{x}\le1/2\), but the displayed \(1/2\) is used below to keep the crossing certificate conservative.
The one-shot variance for measuring \(X\) is
\begin{equation}
    \sigma_0^2(p)=1-\mu_0(p)^2,
\end{equation}
and hence
\begin{equation}
    1-16\theta^2p_{\max}^2\leq \sigma_0^2(p)\leq1.
    \label{eq:toy_s0_bounds_v4}
\end{equation}

\paragraph{Virtual distillation.}
For \(M\geq2\),
\begin{equation}
    \mu_{\mathrm{VD},M}(p)=
    \frac{(1-\varepsilon p)^M-(\varepsilon p)^M}{(1-\varepsilon p)^M+(\varepsilon p)^M}\sin(2\theta p).
\end{equation}
The dominant-eigenvector floor is
\begin{equation}
    b_{\mathrm{VD},\infty}(p)=\sin(2\theta p)=2\theta p+\OO(p^3),
\end{equation}
so \(\delta_{\mathrm{VD}}=2\theta\).  VD cannot remove this coherent eigenvector mismatch.  The denominator is
\begin{equation}
    D_M(p)=(1-\varepsilon p)^M+(\varepsilon p)^M.
\end{equation}
For \(p\leq p_{\max}\),
\begin{equation}
    (1-\varepsilon p)^M\leq D_M(p)\leq (1-\varepsilon p)^M\left[1+\left(\frac{\varepsilon p}{1-\varepsilon p}\right)^M\right].
\end{equation}
The Hadamard denominator threshold \cref{eq:hadamard_Bmin_v4} is therefore explicit in \((\varepsilon,\theta,p,M)\).

\paragraph{Symmetry verification.}
Let \(P=\ket0\bra0\).  In this first toy, \(X\) is not a symmetry-compatible observable in the usual sense because \(XP\ne PX\).  The construction is an algebraic accepted-transcript quotient used to test the finite-shot crossing constants, not a complete physical SV protocol.  The physically sector-compatible comparison is given in the next toy model.  Then
\begin{equation}
    a(p)=\Tr(P\rho_p)=
    (1-\varepsilon p)\cos^2(\theta p)+\varepsilon p\sin^2(\theta p).
\end{equation}
For \(p\leq p_{\max}\),
\begin{equation}
    \frac{11}{16}\leq a(p)\leq1.
    \label{eq:toy_acceptance_bounds_v4}
\end{equation}
Indeed, \(1-a(p)=\sin^2(\theta p)+\varepsilon p\cos(2\theta p)\le \theta^2p^2+\varepsilon p\le5/16\).
Moreover,
\begin{equation}
    \mu_{\mathrm{SV}}(p)=\frac{\Tr(XP\rho_pP)}{a(p)}=0,
\end{equation}
so \(\delta_{\mathrm{SV}}=0\).  If the accepted conditional \(X\)-measurement variance is one, then
\begin{equation}
    s_{\mathrm{SV}}(p)=\frac1{a(p)},
    \qquad
    1\leq s_{\mathrm{SV}}(p)\leq2.
    \label{eq:toy_sv_variance_bounds_v4}
\end{equation}

\paragraph{Toy remainder envelope.}
For this first toy model, the unmitigated estimator is the ordinary sample mean of a \(\{\pm1\}\)-valued variable.  Hence its expectation and variance are exactly
\begin{equation}
    \E\widehat\mu_0=\mu_0,
    \qquad
    \Var(\widehat\mu_0)=\sigma_0^2/B,
\end{equation}
so the unmitigated \(R^{-2}\) coefficient is
\begin{equation}
    C_{0,\mathrm{toy}}=0.
\end{equation}
For SV, the worst-case constants from the general quotient certificate are loose by design.  In this first toy we compute the relevant second-order remainder directly.  Let
\begin{equation}
    Z_s=A_s X_s,
    \qquad
    A_s\in\{0,1\},
    \qquad
    \widehat\mu_{\mathrm{SV}}=\frac{\overline Z}{\overline A},
\end{equation}
where \(A_s\) is the symmetry-acceptance indicator and \(X_s\in\{\pm1\}\) is the accepted \(X\)-measurement sign.  In this toy, \(\mu_{\mathrm{SV}}=0\), \(\E Z_s=0\), \(\E A_s=a\), and
\begin{equation}
    \sigma_Z^2=a,
    \qquad
    \sigma_A^2=a(1-a),
    \qquad
    \sigma_{ZA}=0.
\end{equation}
Writing \(\bar z=\overline Z\) and \(\bar\alpha=\overline A-a\), the Neumann expansion gives
\begin{equation}
    \frac{\overline Z}{\overline A}
    =L_{\mathrm{SV}}+Q_{\mathrm{SV}}+T_{\mathrm{SV}}^{(3)}+\mathrm{higher\ order},
    \qquad
    L_{\mathrm{SV}}=\frac{\bar z}{a},
    \qquad
    Q_{\mathrm{SV}}=-\frac{\bar z\bar\alpha}{a^2},
    \qquad
    T_{\mathrm{SV}}^{(3)}=\frac{\bar z\bar\alpha^2}{a^3}.
\end{equation}
A direct calculation of the sample-average moments gives the asymptotic second-order coefficient
\begin{align}
    B^2\Var(Q_{\mathrm{SV}})
    &=\frac{1-a}{a^2}+\OO(B^{-1}),\\
    2B^2\Cov(L_{\mathrm{SV}},Q_{\mathrm{SV}})
    &=-\frac{2(1-a)}{a^2},\\
    2B^2\Cov(L_{\mathrm{SV}},T_{\mathrm{SV}}^{(3)})
    &=\frac{2(1-a)}{a^2}+\OO(B^{-1}).
    \label{eq:toy_sv_second_order_exact_final}
\end{align}
Thus the second-order asymptotic variance coefficient is
\begin{equation}
    W_{\mathrm{SV}}=\frac{1-a}{a^2},
    \qquad
    |W_{\mathrm{SV}}|\le2
    \quad\text{whenever }a\ge\frac12.
    \label{eq:toy_sv_W_bound_final}
\end{equation}
Because \(Z\in\{-1,0,1\}\), \(A\in\{0,1\}\), and \(a\ge1/2\), the cubic and higher Neumann terms can be bounded directly rather than by the general worst-case SV quotient certificate.

\begin{lemma}[Direct SV toy remainder envelope]
\label{lem:sv_toy_direct_remainder_v13}
In the first toy model, assume \(a\ge1/2\) and let \(R=B\) be the accepted-transcript shot budget.  On the good event
\begin{equation}
    \mathcal G=\{\abs{\overline A-a}\le a/2\},
\end{equation}
the Neumann remainder after the linear and quadratic terms satisfies
\begin{equation}
    \E\!\left[\1_{\mathcal G}\left(\frac{\overline Z}{\overline A}-L_{\mathrm{SV}}-Q_{\mathrm{SV}}\right)^2\right]
    \le \frac{8}{R^2}.
    \label{eq:sv_toy_good_remainder_v13}
\end{equation}
Moreover, a two-sided Bernstein/Hoeffding bound \citep{Boucheron2013} for the Bernoulli acceptance average gives
\begin{equation}
    \Prob(\mathcal G^c)\le 2e^{-R/16},
\end{equation}
and the clipped quotient is bounded by \(2\) on \(\mathcal G^c\).  Hence, once \(R\ge R_{\min,\mathrm{toy}}\) is chosen so that \(16e^{-R/16}\le 2/R^2\), the total additional contribution of the post-quadratic Neumann terms and the bad-denominator event is bounded by \(10/R^2\).
\end{lemma}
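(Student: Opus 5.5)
On the good event \(\mathcal G\) we have \(\abs{\bar\alpha/a}\le1/2\), so the Neumann series for \(1/\overline A\) converges absolutely. We plan to sum the post-quadratic part in closed form. Writing \(\overline Z/\overline A=(\bar z/a)\sum_k(-\bar\alpha/a)^k\), the remainder after \(L_{\mathrm{SV}}+Q_{\mathrm{SV}}\) is
\[
H_{\mathrm{SV}}=\frac{\bar z\bar\alpha^2}{a^3}\,\frac{1}{1+\bar\alpha/a}.
\]
Using \(a\ge1/2\) and \(1+\bar\alpha/a\ge1/2\), this gives the pointwise bound \(\abs{H_{\mathrm{SV}}}\le16\abs{\bar z}\bar\alpha^2\). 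Hence
\[
\E[\1_{\mathcal G}H_{\mathrm{SV}}^2]\le256\,\E[\bar z^{2}\bar\alpha^4],
\]
and the lemma reduces to the moment bound \(\E[\bar z^2\bar\alpha^4]\le 1/(32R^2)\). That bound gives exactly \(8/R^2\).

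The main work, and the expected obstacle, is this sixth-order moment of sample averages. We expand \(\E[\bar z^2\bar\alpha^4]=R^{-6}\sum\E[Z_{s_1}Z_{s_2}\alpha_{s_3}\cdots\alpha_{s_6}]\) over index tuples and classify the tuples by the partition of indices they induce. The key structural fact is that \(Z_sA_s=Z_s\) and \(\E Z_s=0\). These imply \(\E[Z_s(A_s-a)^r]=(1-a)^r\E Z_s=0\), so any block containing exactly one \(Z\) vanishes. Both \(Z\)-indices must therefore coincide. Every pure-\(\alpha\) block must have size at least two, since \(\E\alpha=0\).

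Only a handful of partitions survive:
\[
\{ZZ\}\{\alpha\alpha\}\{\alpha\alpha\},\quad \{ZZ\alpha\}\{\alpha\alpha\alpha\},\quad \{ZZ\alpha\alpha\}\{\alpha\alpha\},\quad \{ZZ\alpha\alpha\alpha\alpha\}.
\]
The first has \(\OO(R^3)\) index choices and contributes \(\OO(R^{-3})\cdot R^{0}\)-scaled \(R^{-2}\) terms, i.e.\ overall order \(R^{-3}\cdot R^{3}/R^{6}\cdot R^{\dots}\); in every case the count of free indices makes each term \(\OO(R^{-3})\) or \(\OO(R^{-2})\). For each surviving term we bound the moments using \(Z_s^2\le A_s\), \(\abs{\alpha_s}\le1\), and \(\E\alpha^2=a(1-a)\le1/4\), together with the combinatorial multiplicities (3, 6, etc.). The remaining risk is bookkeeping the multiplicities tightly enough to land at the constant \(1/32\). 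If the count comes out slightly larger, we would instead use \(R\ge R_{\min,\mathrm{toy}}\) to absorb the lower-order \(R^{-3},R^{-4}\) terms.

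For the bad event, \(\overline A\) is an average of Bernoulli variables, and the multiplicative Chernoff bound used in \cref{thm:sv_quotient_law_v2} gives \(\Prob(\mathcal G^c)\le2e^{-Ra/12}\le2e^{-R/24}\). To reach the stated \(2e^{-R/16}\) we would instead use Hoeffding with deviation \(a/2\ge1/4\), which gives \(2e^{-2R/16^{\,}}\cdot\)-type rates; we would choose whichever yields the displayed exponent. On \(\mathcal G^c\), the clipped estimator satisfies \(\abs{\overline Z}/\max\{\overline A,a/2\}\le 1/(a/2)\cdot\abs{\overline Z}\). Because \(\abs{\overline Z}\le\overline A\), this is at most \(2\) whenever the denominator is \(\overline A\), and the same bound holds at the clipping floor after a short check. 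The squared contributions of the bad event, including cross terms with \(L_{\mathrm{SV}}+Q_{\mathrm{SV}}\), are then at most a constant times \(\Prob(\mathcal G^c)\), which is \(\le16e^{-R/16}\). The threshold \(R_{\min,\mathrm{toy}}\) is chosen so that this is \(\le2/R^2\), and adding \(8/R^2+2/R^2\) yields the \(10/R^2\) envelope.
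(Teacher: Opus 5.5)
Your route is the paper's own proof. It uses the same closed-form post-quadratic remainder \(H_{\mathrm{SV}}\) and the same pointwise bound \(\abs{H_{\mathrm{SV}}}\le16\abs{\bar z}\bar\alpha^2\). It makes the same reduction to \(\E[\bar z^{\,2}\bar\alpha^4]\le1/(32R^2)\), through the vanishing of blocks that contain a single \(Z\). The bad-event bookkeeping is also the same: Hoeffding at deviation \(a/2\ge1/4\) gives \(2e^{-R/8}\le2e^{-R/16}\), and since \(\abs{\overline Z}\le\overline A\) the clipped quotient is in fact bounded by \(1\). What is missing is the one step that carries the constant, and your sketch of it needs two corrections.

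First, the orders. After the two \(Z\)-indices are identified, \(\{ZZ\}\{\alpha\alpha\}\{\alpha\alpha\}\) has \(\OO(R^3)\) index choices against the prefactor \(R^{-6}\), so it is \(\OO(R^{-3})\). Every other surviving partition is \(\OO(R^{-4})\) or smaller, so nothing is of order \(R^{-2}\).

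Second, your fallback of absorbing terms via \(R\ge R_{\min,\mathrm{toy}}\) would suffice for the final \(10/R^2\) claim. It does not give the first display, which the lemma asserts for every \(R\). Bounds that lose constants cannot work there, because \(1/(32R^2)\) is attained exactly at \(R=1\), and at \(R=2\) with \(a=1/2\). The crude leading estimate \(3/(16R^3)\), from \(Z_s^2\le A_s\), \(\abs{\alpha_s}\le1\), \(\sigma^2\le1/4\), exceeds \(1/(32R^2)\) for every \(R<6\).

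The clean way to do the enumeration is an exact identity. Only \(s_1=s_2\) survives and \(Z_s^2=A_s\), so
\[
    \E[\bar z^{\,2}\bar\alpha^4]
    =\frac{1}{R^2}\sum_s\E[A_s\bar\alpha^4]
    =\frac{\E[\overline A\,\bar\alpha^4]}{R}
    =\frac{3(R-1)a\sigma^4+a\mu_4}{R^4}+\frac{10(R-1)\mu_3\sigma^2+\mu_5}{R^5},
\]
where \(\sigma^2=a(1-a)\) and \(\mu_k=\E(A-a)^k\). For \(a\ge1/2\), both \(\mu_3=a(1-a)(1-2a)\) and \(\mu_5=a(1-a)[(1-a)^4-a^4]\) are nonpositive. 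Then:
\begin{itemize}
\item For \(R=1\) the value is \(a(1-a)^4\le1/32\).
\item For \(R=2\), check directly; this is a tight case and the odd-moment terms cannot be dropped.
\item For \(R\ge3\), drop the odd-moment fraction and use \(a^3(1-a)^2\le108/3125\) and \(a\mu_4<0.068\) on \([1/2,1]\) to get \(32[3(R-1)a\sigma^4+a\mu_4]\le R^2\).
\end{itemize}
This is exactly the finite enumeration the paper invokes without displaying it, and with it your proof is complete.
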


The proof, a finite enumeration of the surviving contraction partitions
together with a Hoeffding bound for the bad event, is given in
\cref{app:proofs}.

Consequently, for the certified toy crossing on budgets \(R\ge R_{\min,\mathrm{toy}}\), an admissible direct remainder envelope is
\begin{equation}
    C_{\mathrm{SV},\mathrm{toy}}^{\mathrm{op}}:=12.
    \label{eq:Ctoy_operational_final}
\end{equation}
The value \(12\) consists of the post-quadratic and bad-denominator envelope \(10\) from \cref{lem:sv_toy_direct_remainder_v13}, plus the exact second-order coefficient \(\abs{W_{\mathrm{SV}}}\le2\) from \cref{eq:toy_sv_W_bound_final}.  This replaces the earlier worst-case envelope of order \(10^7\) in this specific toy.  The large general constant remains a valid universal certificate, but it is not the operational constant governing this closed model.

\begin{proposition}[Two-sided finite-\(p\) SV/no-mitigation boundary]
\label{prop:toy_two_sided_boundary_v4}
For the toy model above, assume \(p\in(0,p_{\max}]\), \(16\theta^2p_{\max}^2\leq1/2\), resources are restricted to \(R\ge R_{\min,\mathrm{toy}}\), and resource normalization \(\kappa_0=\kappa_{\mathrm{SV}}=1\).  The leading SV/no-mitigation crossing
\begin{equation}
    R_{\mathrm{SV}\leftrightarrow0}(p)
    =\frac{s_{\mathrm{SV}}(p)-s_0(p)}{b_0(p)^2-b_{\mathrm{SV}}(p)^2}
\end{equation}
satisfies the finite-interval sandwich
\begin{equation}
    \frac{\varepsilon}{64\theta^2}\frac1p
    \leq
    R_{\mathrm{SV}\leftrightarrow0}(p)
    \leq
    \frac{8\varepsilon+64\theta^2p_{\max}}{\theta^2}\frac1p.
    \label{eq:toy_boundary_sandwich_v4}
\end{equation}
In particular the boundary scales as \(1/p\) uniformly on the declared interval.
\end{proposition}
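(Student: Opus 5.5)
The plan is to apply \cref{cor:finite_interval_sandwich} directly, with \(i=\mathrm{SV}\) and \(j=0\). This requires explicit two-sided bounds on the squared-bias gap \(g(p)=b_0(p)^2-b_{\mathrm{SV}}(p)^2\) and on the sampling gap \(\Delta s(p)=s_{\mathrm{SV}}(p)-s_0(p)\), valid uniformly on \((0,p_{\max}]\). The resource normalization \(\kappa_0=\kappa_{\mathrm{SV}}=1\) lets us identify \(s_m\) with the one-shot coefficients, so \(s_{\mathrm{SV}}=1/a(p)\) by \cref{eq:toy_sv_variance_bounds_v4} and \(s_0=\sigma_0^2=1-\mu_0^2\).

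\emph{Bias gap.} Since \(\mu_{\mathrm{SV}}\equiv0=\mu_\star\), we have \(b_{\mathrm{SV}}\equiv0\), so \(g=b_0^2\). Then \cref{eq:toy_b0_bounds_v4} gives \(G_-=\tfrac12\theta^2p^2\) and \(G_+=16\theta^2p^2\), and both are strictly positive because \(\theta\ne0\).

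\emph{Sampling gap.} The key identity is
\[
\Delta s=\frac1a-1+\mu_0^2=\frac{1-a}{a}+b_0^2 .
\]
For the upper bound, I will combine three facts:
\begin{enumerate}[leftmargin=2em]
\item \(a\ge 11/16\) from \cref{eq:toy_acceptance_bounds_v4};
\item \(1-a=\sin^2(\theta p)+\varepsilon p\cos(2\theta p)\le\theta^2p^2+\varepsilon p\);
\item \(b_0^2\le16\theta^2p^2\).
\end{enumerate}
Together these give \(\Delta s\le\tfrac{16}{11}(\theta^2p^2+\varepsilon p)+16\theta^2p^2=:S_+\).

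For the lower bound, I will use \(a\le1\) and \(b_0^2\ge0\), so \(\Delta s\ge 1-a\). Because \(\abs{2\theta p}\le\tfrac12\), we have \(\cos(2\theta p)\ge\cos\tfrac12>\tfrac12\). Dropping the nonnegative \(\sin^2\) term then gives \(\Delta s\ge\tfrac12\varepsilon p=:S_->0\). Positivity of \(\Delta s\) also confirms that the tradeoff is genuine, i.e.\ case (III) of \cref{thm:selection_trichotomy_v10}, so the crossing is positive.

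\emph{Assembly.} \Cref{cor:finite_interval_sandwich} gives \(S_-/G_+\le R_{\mathrm{SV}\leftrightarrow0}\le S_+/G_-\). The lower end is
\[
\frac{\varepsilon p/2}{16\theta^2p^2}=\frac{\varepsilon}{32\theta^2p},
\]
which dominates the claimed \(\varepsilon/(64\theta^2p)\). The upper end is
\[
\frac{2S_+}{\theta^2p^2}=\frac{32}{11}\frac{\varepsilon}{\theta^2p}+\frac{32}{11}+32 .
\]
To fit the claimed form, I use \(32/11<8\) for the \(\varepsilon\) term. For the \(p\)-independent term, I use \(p\le p_{\max}\), so that \(35<64\le 64p_{\max}/p=64\theta^2p_{\max}/(\theta^2p)\). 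This yields \cref{eq:toy_boundary_sandwich_v4}, and hence the uniform \(1/p\) scaling.

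The only delicate step is the lower bound on \(1-a\). It needs \(\cos(2\theta p)\) bounded away from zero, which \cref{eq:pmax_toy_v4} supplies. The hypothesis \(16\theta^2p_{\max}^2\le1/2\) and the restriction \(R\ge R_{\min,\mathrm{toy}}\) are not needed for the leading sandwich itself. They guarantee that \(s_0\ge1/2\) and that the remainder envelope \cref{eq:Ctoy_operational_final} is in force when the sandwich is later upgraded to a certified crossing via \cref{thm:certified_crossing}.
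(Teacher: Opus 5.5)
Your proposal is correct and follows essentially the paper's argument: you use the same split $s_{\mathrm{SV}}-s_0=(1-a)/a+\mu_0^2$, the same lower bound $1-a\ge\varepsilon p/2$ from $\cos(2\theta p)\ge 1/2$, and the same division by the two-sided bounds on $b_0^2$. The only differences are cosmetic. You use $a\ge 11/16$ instead of $a\ge 1/2$, and you absorb the $p$-independent term through $p\le p_{\max}$ after dividing rather than before, which gives the same sandwich with a slightly sharper lower constant before the factor-two slack.
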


\begin{proof}
The denominator of the crossing is \(b_0(p)^2\), because \(b_{\mathrm{SV}}=0\).  The two-sided bias control is \cref{eq:toy_b0_bounds_v4}.  For the numerator,
\begin{equation}
    s_{\mathrm{SV}}(p)-s_0(p)
    =\frac1{a(p)}-\bigl(1-\mu_0(p)^2\bigr)
    =\frac{1-a(p)}{a(p)}+\mu_0(p)^2.
\end{equation}
Furthermore,
\begin{equation}
    1-a(p)
    =\sin^2(\theta p)+\varepsilon p\cos(2\theta p).
\end{equation}
Since \(p\leq1/(4|\theta|)\), \(\cos(2\theta p)\geq1/2\).  Hence
\begin{equation}
    s_{\mathrm{SV}}(p)-s_0(p)\geq 1-a(p)\geq \frac{\varepsilon p}{2}.
    \label{eq:toy_numerator_lower_fixed}
\end{equation}
For the upper bound, \(a(p)\geq1/2\), \(\mu_0(p)^2\leq16\theta^2p^2\), and
\[
    1-a(p)\leq \varepsilon p+\theta^2p^2
\]
give the conservative estimate
\begin{equation}
    s_{\mathrm{SV}}(p)-s_0(p)
    \leq 4\varepsilon p+32\theta^2p_{\max}p.
    \label{eq:toy_numerator_upper_fixed}
\end{equation}
The constants in the last display are conservative but uniform on the declared interval.  Dividing \cref{eq:toy_numerator_lower_fixed,eq:toy_numerator_upper_fixed} by the upper and lower bounds on \(b_0(p)^2\), respectively, and adding one harmless factor-two slack to the lower side, gives \cref{eq:toy_boundary_sandwich_v4}.  In particular, the \(p\)-linear acceptance penalty divided by the \(p^2\)-bias gap gives the \(1/p\) boundary.
\end{proof}

\begin{proposition}[Operationally non-empty certified regime for the first toy crossing]
\label{prop:toy_certified_nonempty_v5}
Under the assumptions of \cref{prop:toy_two_sided_boundary_v4}, use the direct SV toy remainder envelope \(C_{\mathrm{SV},\mathrm{toy}}^{\mathrm{op}}=12\) from \cref{eq:Ctoy_operational_final}.  The certified crossing theorem applies whenever
\begin{equation}
    \eta_{\mathrm{toy}}
    :=\frac{4C_{\mathrm{SV},\mathrm{toy}}^{\mathrm{op}}}{g(p)R_{\mathrm{SV}\leftrightarrow0}(p)^2}<\frac12,
    \qquad
    g(p)=b_0(p)^2.
    \label{eq:toy_eta_verified_condition_v5}
\end{equation}
A sufficient parameter-level condition, using the finite-interval sandwich of \cref{prop:toy_two_sided_boundary_v4}, is
\begin{equation}
    \frac{32768\,C_{\mathrm{SV},\mathrm{toy}}^{\mathrm{op}}\theta^2}{\varepsilon^2}<\frac12.
    \label{eq:toy_explicit_nonempty_condition_v5}
\end{equation}
For example,
\begin{equation}
    \varepsilon=1,
    \qquad
    |\theta|=10^{-3},
    \qquad
    p_{\max}=10^{-1}
\end{equation}
satisfy the small-angle assumptions and the certified-crossing condition, because
\begin{equation}
    32768\cdot 12\cdot10^{-6}=3.93216\times10^{-1}<\frac12.
\end{equation}
At the worst endpoint \(p=p_{\max}\), the lower sandwich gives \(R_{\mathrm{SV}\leftrightarrow0}\ge1/(64\cdot10^{-6}\cdot10^{-1})>1.5\times10^5\), which is safely above the fixed exponential-absorption threshold \(R_{\min,\mathrm{toy}}\).  Thus the certified region is non-empty at ordinary small-noise parameters.  The improvement over the earlier \(\theta\sim10^{-6}\) check comes from using the direct second-order quotient remainder in this closed toy, not from the Efron--Stein total-variance bound.
\end{proposition}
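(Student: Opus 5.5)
The statement to prove is \cref{prop:toy_certified_nonempty_v5}. The plan is to apply \cref{thm:certified_crossing} with methods \(i=\mathrm{SV}\) and \(j=0\) (unmitigated). The inputs are:
\begin{itemize}[leftmargin=2em]
\item \(b_{\mathrm{SV}}=0\), so \(g=b_0^2>0\);
\item \(\Delta s=s_{\mathrm{SV}}-s_0>0\), which follows from \cref{eq:toy_numerator_lower_fixed};
\item \(C=C_{\mathrm{SV}}+C_0=12+0\), using \(C_{0,\mathrm{toy}}=0\) and \cref{eq:Ctoy_operational_final}.
\end{itemize}
With these, \cref{eq:toy_eta_verified_condition_v5} is exactly the hypothesis \(\eta<1/2\) of that theorem. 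The remaining hypothesis, \(R_0(1-\eta)\ge R_{\min,\mathrm{toy}}\), will be checked numerically at the end. Continuity of the remainders is immediate, since for fixed finite \(R\) the MSEs are finite sums over a finite transcript space.

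To reduce the condition to parameters, I would bound \(gR_0^2=(\Delta s)^2/g\) from below. Use \(\Delta s\ge\varepsilon p/2\) from \cref{eq:toy_numerator_lower_fixed} and \(g\le16\theta^2p^2\) from \cref{eq:toy_b0_bounds_v4}. This gives
\[
gR_0^2\ge\frac{\varepsilon^2}{64\theta^2},
\qquad
\eta\le\frac{256\,C\,\theta^2}{\varepsilon^2}.
\]
This is already stronger than \cref{eq:toy_explicit_nonempty_condition_v5}. Alternatively, to match the paper's constant \(32768\), I would combine the sandwich lower bound \(R_0\ge\varepsilon/(64\theta^2p)\) with \(g\ge\theta^2p^2/2\). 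That yields \(gR_0^2\ge\varepsilon^2/(8192\theta^2)\) and hence \(\eta\le 32768\,C\theta^2/\varepsilon^2\). Either route implies that \cref{eq:toy_explicit_nonempty_condition_v5} suffices, uniformly in \(p\in(0,p_{\max}]\).

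For the numerical example I would check three things:
\begin{itemize}[leftmargin=2em]
\item \(p_{\max}=0.1\le\min\{1/4,\,250\}\), so \cref{eq:pmax_toy_v4} holds;
\item \(16\theta^2p_{\max}^2=1.6\times10^{-7}\le1/2\);
\item \(32768\cdot12\cdot10^{-6}\approx0.393<1/2\).
\end{itemize}
For the threshold, the sandwich gives \(R_0\ge1/(64\cdot10^{-7})\approx1.56\times10^5\) at \(p=p_{\max}\), and a smaller \(p\) only increases this bound. Then \(R_0(1-\eta)\ge R_0/2>7.8\times10^4\). It remains to compare this with \(R_{\min,\mathrm{toy}}\) from \cref{lem:sv_toy_direct_remainder_v13}, defined by \(16e^{-R/16}\le2/R^2\). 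That inequality already holds for \(R\) of a few hundred, since \(R\approx400\) gives \(e^{-25}\ll 10^{-6}\).

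I expect this last comparison to be the main subtle point. The threshold \(R_{\min,\mathrm{toy}}\) must be fixed independently of \(p\) and must also cover the validity range of the direct remainder envelope. Because the SV remainder bound in \cref{lem:sv_toy_direct_remainder_v13} uses \(a\ge1/2\), I need to confirm this holds: it follows from \cref{eq:toy_acceptance_bounds_v4}. The rest is arithmetic.
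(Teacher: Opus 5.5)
Your argument is correct and follows the paper's outline: instantiate \cref{thm:certified_crossing} with \(g=b_0^2\) and \(C=C^{\mathrm{op}}_{\mathrm{SV},\mathrm{toy}}+C_{0,\mathrm{toy}}=12\), bound \(\eta\) uniformly in \(p\) using the toy bounds, and then substitute the numbers.

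The one real difference is the key inequality. The paper takes the sandwich lower bound \(R_0\ge\varepsilon/(64\theta^2p)\) and multiplies it by the lower bound \(g\ge\theta^2p^2/2\). That sandwich bound was itself derived from the upper bound \(g\le16\theta^2p^2\) plus a factor-two slack. The paper therefore spends both sides of the bias bound and loses a factor \(32\cdot4=128\), which is how it arrives at \(32768\).

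Your primary route instead uses the identity \(gR_0^2=(\Delta s)^2/g\). This needs only \(\Delta s\ge\varepsilon p/2\) and the upper bound on \(g\), and gives \(\eta\le256\,C\theta^2/\varepsilon^2\). That bound is strictly sharper and implies the displayed sufficient condition. At the example point it gives \(\eta\le3.1\times10^{-3}\) rather than \(0.39\). The paper's route has only one advantage: it reuses the already-stated \(R_0\) sandwich verbatim, so the constant \(32768\) matches the statement.

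You are also more careful than the paper's proof on the remaining hypotheses. You check \(R_0(1-\eta)\ge R_0/2>7.8\times10^4\ge R_{\min,\mathrm{toy}}\) directly. You also check \(a\ge11/16\ge1/2\), which is needed both for \(\abs{W_{\mathrm{SV}}}\le2\) and for the direct remainder lemma. The paper's proof leaves both implicit; its statement only compares \(R_0\) itself with \(R_{\min,\mathrm{toy}}\).

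One small caveat concerns continuity. Your stated reason for continuity of the remainders (finiteness of the transcript space at fixed \(R\)) does not give continuity in \(R\), since \(R\) is an integer budget. Like the paper, you are really taking continuity as the standing hypothesis of the crossing theorem, for example via interpolation between integer budgets. The certified crossing should then be read as a sign change on the integer grid.
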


\begin{proof}
For this toy crossing, \(g(p)=b_0(p)^2-b_{\mathrm{SV}}(p)^2=b_0(p)^2\) and \(R_0=R_{\mathrm{SV}\leftrightarrow0}(p)\).  The certified-crossing parameter in \cref{thm:certified_crossing} is \(4C/[g(p)R_0^2]\).  The sandwich bounds give \(g(p)\ge\theta^2p^2/2\) and
\begin{equation}
    R_0\ge\frac{\varepsilon}{64\theta^2p}.
\end{equation}
Therefore
\begin{equation}
    g(p)R_0^2
    \ge
    \frac{\theta^2p^2}{2}\frac{\varepsilon^2}{4096\theta^4p^2}
    =\frac{\varepsilon^2}{8192\theta^2},
\end{equation}
and hence
\begin{equation}
    \eta_{\mathrm{toy}}
    \le
    \frac{4C\cdot8192\theta^2}{\varepsilon^2}
    =\frac{32768C\theta^2}{\varepsilon^2}.
\end{equation}
Substituting \(C=C_{\mathrm{SV},\mathrm{toy}}^{\mathrm{op}}=12\), \(\varepsilon=1\), and \(|\theta|=10^{-3}\) proves the displayed numerical certificate.  Finally, \(p_{\max}=10^{-1}\) obeys \(p_{\max}\le1/(4\varepsilon)\) and \(16\theta^2p_{\max}^2\ll1/2\), so the assumptions of the finite-interval sandwich are satisfied.
\end{proof}

\begin{proposition}[Selector versus Le Cam/Fisher benchmarks in the one-qubit transcript]
\label{prop:selector_lecam_constant_factor_v8}
\textup{(Conservative and sharpened versions.)}
View the first toy model through the same one-qubit accepted-\(X\)-measurement transcript used in \cref{prop:explicit_ip_bernoulli_v5}.  Assume the certified crossing regime of \cref{prop:toy_certified_nonempty_v5}, and restrict to the SV side of that certified crossing, i.e.\ budgets \(R\) larger than the certified SV/no-mitigation crossing.  Then the selected leading law is
\begin{equation}
    \MSE_{\rm sel}^{\rm lead}(p,R)=\frac{s_{\mathrm{SV}}(p)}{R}=\frac{1}{a(p)R}.
\end{equation}
Against the corrected conservative finite-interval Le Cam lower bound
\begin{equation}
    L_{\rm LC}^{\rm cons}(p,R)=\frac{1}{32R(1-p)^2},
\end{equation}
one has
\begin{equation}
    \frac{\MSE_{\rm sel}^{\rm lead}(p,R)}{L_{\rm LC}^{\rm cons}(p,R)}
    =\frac{32(1-p)^2}{a(p)}\le64.
    \label{eq:selector_lecam_ratio_v7}
\end{equation}
Using instead the local Fisher/two-point constant \(I_{p,\mathrm{loc}}^{\rm Bern}=2(1-p)^2\) from \cref{eq:explicit_ip_fisher_local_v8}, the corresponding local Le Cam scale is
\begin{equation}
    L_{\rm LC}^{\rm loc}(p,R)=\frac{1}{16R(1-p)^2},
\end{equation}
and therefore
\begin{equation}
    \frac{\MSE_{\rm sel}^{\rm lead}(p,R)}{L_{\rm LC}^{\rm loc}(p,R)}
    =\frac{16(1-p)^2}{a(p)}\le32.
    \label{eq:selector_lecam_ratio_local_v8}
\end{equation}
Finally, if the same accepted transcript is benchmarked by its regular local asymptotic Fisher information \(J_{\rm acc}(p)=a(p)\) for the accepted ideal \(X\)-mean in this toy, where SV is unbiased by construction, then the Cramer--Rao/LAN scale is
\begin{equation}
    L_{\rm LAN}^{\rm acc}(p,R)=\frac{1}{a(p)R},
\end{equation}
and the selected SV law saturates this accepted-transcript Fisher scale exactly.  Thus the factors \(64\) and \(32\) in \cref{eq:selector_lecam_ratio_v7,eq:selector_lecam_ratio_local_v8} are conservative two-point artifacts, not physical evidence that the selector is far from the relevant sampling scale.
\end{proposition}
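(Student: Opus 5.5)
The proof reduces to three identifications: the selected leading law, two arithmetic ratio bounds, and a Fisher-information computation. Everything needed is already in the paper.

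\textbf{Step 1: the selected leading law.} On the SV side of the certified crossing from \cref{prop:toy_certified_nonempty_v5}, the lower-bias method wins. Here \(b_{\mathrm{SV}}=0<b_0^2\), so by case (III) of \cref{thm:selection_trichotomy_v10} the selected method is SV. Its leading law has no bias term, since \(\mu_{\mathrm{SV}}=0=\mu_\star\). The accepted conditional variance is \(1\), so \cref{eq:toy_sv_variance_bounds_v4} gives \(s_{\mathrm{SV}}=1/a(p)\) with \(\kappa_{\mathrm{SV}}=1\). This yields \(\MSE_{\rm sel}^{\rm lead}=1/(a(p)R)\).

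\textbf{Step 2: the two Le Cam ratios.} I would take \(L_{\rm LC}^{\rm cons}\) directly from \cref{prop:explicit_ip_bernoulli_v5}. For the local version, I would substitute \(I_{p,\mathrm{loc}}^{\rm Bern}=2(1-p)^2\) from \cref{eq:explicit_ip_fisher_local_v8} into \cref{eq:one_over_r_lower_bound}. This gives \(1/(8R\cdot2(1-p)^2)=1/(16R(1-p)^2)\). The ratios are then \(32(1-p)^2/a(p)\) and \(16(1-p)^2/a(p)\). They are bounded using \((1-p)^2\le1\) and the bound \(a(p)\ge11/16\ge1/2\) from \cref{eq:toy_acceptance_bounds_v4}, which gives the bounds \(64\) and \(32\).

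\textbf{Step 3: saturation of the accepted-transcript Fisher scale.} I would invoke \cref{prop:accepted_transcript_efficiency_v10} with \(\sigma_{\rm acc}^2=1\). This is consistent with the toy's accepted \(X\)-sign variance: \(\E[Z^2]=a\) and \(\mu_{\mathrm{SV}}=0\) give conditional variance \(1\). Hence \(J_{\rm SV}=a(p)/1=a(p)=J_{\rm acc}\), and \(s_{\mathrm{SV}}=1/J_{\rm acc}\). Therefore \(\MSE_{\rm sel}^{\rm lead}=L_{\rm LAN}^{\rm acc}\) exactly, and the ratio \(sJ=1\).

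\textbf{Main obstacle: interpretation, not computation.} The Bernoulli transcript in \cref{prop:explicit_ip_bernoulli_v5} is parametrized by the ideal \(X\)-mean \(t\) under depolarized contraction. The toy's accepted transcript is a different local experiment. The comparison is therefore a scale comparison between two declared benchmarks, not a single minimax statement. I would state explicitly that the Le Cam bound is applied with \(h=[2RI_p]^{-1/2}\) inside the local interval, which holds for large \(R\) in the certified regime. I would also note that regularity and unbiasedness for the Cramér–Rao step hold because SV is exactly unbiased in this toy.
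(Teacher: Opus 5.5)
Your proposal is correct and follows essentially the same route as the paper. You obtain the selected law from \(\mu_{\mathrm{SV}}=\mu_\star=0\) with \(s_{\mathrm{SV}}=1/a(p)\), bound the two ratios by plain division using \((1-p)^2\le1\) and \(a(p)\ge1/2\), and get saturation from the unit-variance accepted-transcript Fisher information \(J_{\rm acc}=a(p)\); the paper justifies this last step by the multiplicative Chernoff concentration of the accepted count, which is exactly what you import by citing \cref{prop:accepted_transcript_efficiency_v10}.
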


\begin{proof}
In the first toy, \(\mu_{\mathrm{SV}}(p)=\mu_\star=0\), so the leading SV MSE is purely sampling and equals \(s_{\mathrm{SV}}(p)/R=1/[a(p)R]\).  The phrase ``selected'' is justified only on the SV side of the certified crossing supplied by \cref{prop:toy_certified_nonempty_v5}; outside that regime the statement is only a comparison of the SV estimator, not of the selector.  Dividing \(1/[a(p)R]\) by the corrected conservative and local Le Cam scales gives \cref{eq:selector_lecam_ratio_v7,eq:selector_lecam_ratio_local_v8}; \(a(p)\ge1/2\) follows from \cref{eq:toy_acceptance_bounds_v4}.  The number of accepted samples concentrates around \(a(p)R\) by the same multiplicative Chernoff bound used in \cref{thm:sv_quotient_law_v2}; hence the phrase \(a(p)R\) effective observations can be made high-probability rather than merely heuristic.  The final LAN statement is the ordinary Fisher benchmark for those accepted unit-variance observations; it shows which part of the constant factor comes from the conservative two-point bounds.
\end{proof}

\begin{remark}[Role of the first toy model]
The first toy model is deliberately favorable to SV: \(\delta_{\mathrm{SV}}=0\) because the rank-one projector kills the \(X\)-expectation inside the accepted block.  Its purpose is not to prove a generic VD--SV comparison; its purpose is to show that the finite-shot crossing certificate can be checked on a finite interval with explicit constants.  The next toy model gives the promised nontrivial VD--SV comparison with \(\delta_{\mathrm{SV}}\ne0\).
\end{remark}

\subsection{A sector toy model with \texorpdfstring{$\delta_{\mathrm{SV}}\ne0$}{deltaSV nonzero}}
\label{subsec:sector_toy_delta_sv_nonzero_v5}
The previous example has a rank-one symmetry sector, so SV has zero residual first-order bias.  The present sector model is the complementary clean limit: it has nonzero SV residual bias but no dominant-eigenvector rotation, so \(\delta_{\mathrm{VD}}=0\) at first order.  These examples are deliberately diagnostic, not generic; a fully generic toy model can have both \(\delta_{\mathrm{VD}}\ne0\) and \(\delta_{\mathrm{SV}}\ne0\).  To exhibit a genuine VD--SV boundary, consider a three-dimensional Hilbert space with orthonormal vectors \(\ket g,\ket e,\ket\ell\).  The valid sector is \(P=\ket g\bra g+\ket e\bra e\), the ideal state is \(\rho_\star=\ket g\bra g\), and the observable is
\begin{equation}
    O=\ket g\bra g-\ket e\bra e,
    \qquad O\ket\ell=0,
    \qquad \mu_\star=1.
\end{equation}
Let
\begin{equation}
    \rho_p=\alpha(p)\ket g\bra g+\beta(p)\ket e\bra e+\lambda(p)\ket\ell\bra\ell,
    \label{eq:sector_toy_state_v5}
\end{equation}
with
\begin{equation}
    \alpha(p)=1-(u+\ell)p,
    \qquad
    \beta(p)=up,
    \qquad
    \lambda(p)=\ell p,
\end{equation}
where \(u,\ell>0\) and \(0<p\le p_{\max}\le[4(u+\ell)]^{-1}\).  Here \(up\) is undetectable in-sector error and \(\ell p\) is detectable leakage.

\paragraph{SV constants.}
The acceptance probability and SV population value are
\begin{equation}
    a(p)=\alpha(p)+\beta(p)=1-\ell p,
    \qquad
    \mu_{\mathrm{SV}}(p)=\frac{\alpha(p)-\beta(p)}{\alpha(p)+\beta(p)}
    =1-\frac{2up}{1-\ell p}.
\end{equation}
Therefore
\begin{equation}
    b_{\mathrm{SV}}(p)=-\frac{2up}{1-\ell p},
    \qquad
    \delta_{\mathrm{SV}}=-2u\ne0.
    \label{eq:sector_toy_delta_sv_v5}
\end{equation}
The accepted-shot variance coefficient is
\begin{equation}
    v_{\mathrm{SV}}(p)=\frac{1-\mu_{\mathrm{SV}}(p)^2}{a(p)}
    =\frac{4up(1-(u+\ell)p)}{(1-\ell p)^3}
    =4up+\OO(p^2).
    \label{eq:sector_toy_vsv_v5}
\end{equation}

\paragraph{VD constants.}
For VD with copy number \(M\),
\begin{equation}
    D_M(p)=\alpha(p)^M+\beta(p)^M+\lambda(p)^M,
    \qquad
    N_M(p)=\alpha(p)^M-\beta(p)^M,
\end{equation}
so
\begin{equation}
    \mu_{\mathrm{VD},M}(p)=\frac{\alpha(p)^M-\beta(p)^M}{\alpha(p)^M+\beta(p)^M+\lambda(p)^M}.
\end{equation}
For \(M=2\),
\begin{align}
    D_2(p)&=\alpha^2+\beta^2+\lambda^2
    =1-2(u+\ell)p+\OO(p^2),\label{eq:sector_D2_expansion_v6}\\
    N_2(p)&=\alpha^2-\beta^2
    =1-2(u+\ell)p+\OO(p^2).\label{eq:sector_N2_expansion_v6}
\end{align}
A direct subtraction gives
\begin{equation}
    b_{\mathrm{VD},2}(p)
    =\frac{N_2(p)}{D_2(p)}-1
    =-\frac{2\beta(p)^2+\lambda(p)^2}{D_2(p)}
    =-\frac{(2u^2+\ell^2)p^2}{D_2(p)}
    =-(2u^2+\ell^2)p^2+\OO(p^3),
    \label{eq:sector_toy_bvd2_v5}
\end{equation}
so VD removes the first-order in-sector error by spectral amplification.  Under the independent Hadamard-test implementation of \cref{prop:implementation_variance_inflation_v4},
\begin{equation}
    v_{\mathrm{VD},2}(p)
    =\frac{1-N_2(p)^2+\mu_{\mathrm{VD},2}(p)^2(1-D_2(p)^2)}{D_2(p)^2},
    \label{eq:sector_toy_vvd2_formula_v6}
\end{equation}
where the covariance term is zero because the numerator and denominator Hadamard tests are independent.  Since \(\mu_{\mathrm{VD},2}(p)=1+\OO(p^2)\), \cref{eq:sector_D2_expansion_v6,eq:sector_N2_expansion_v6} imply
\begin{equation}
    1-D_2(p)^2=4(u+\ell)p+\OO(p^2),
    \qquad
    1-N_2(p)^2=4(u+\ell)p+\OO(p^2),
\end{equation}
and therefore
\begin{equation}
    v_{\mathrm{VD},2}(p)
    =8(u+\ell)p+\OO(p^2).
    \label{eq:sector_toy_vvd2_v5}
\end{equation}

\paragraph{Sector remainder and denominator certificates.}
On \(0<p\le p_{\max}\le[4(u+\ell)]^{-1}\), conservative bounded-sign moment
envelopes give finite-interval certificate constants
\(C_{\mathrm{VD},2}^{\mathrm{sector}}\le2.5\times10^8\) and
\(C_{\mathrm{SV}}^{\mathrm{sector}}\le5.0\times10^7\), combined below as
\(C_{\mathrm{sector}}:=3.0\times10^8\), together with the internal VD
denominator certificate
\(B_{\mathrm{den},2}(p,\varepsilon_{\rm fail})\le31\log(2/\varepsilon_{\rm fail})\).
The full assembly, with every envelope constant displayed, is given in
\cref{app:toyarith}.

\begin{proposition}[Nontrivial VD--SV crossing in the sector toy]
\label{prop:sector_toy_vd_sv_crossing_v5}
In the sector toy, with \(M=2\) and paired-sample resource normalization \(\kappa_{\mathrm{VD},2}=\kappa_{\mathrm{SV}}=1\), the leading VD--SV crossing satisfies
\begin{equation}
    R_{\mathrm{VD},2\leftrightarrow\mathrm{SV}}(p)
    =\frac{v_{\mathrm{VD},2}(p)-v_{\mathrm{SV}}(p)}{b_{\mathrm{SV}}(p)^2-b_{\mathrm{VD},2}(p)^2}
    =\frac{u+2\ell}{u^2}\frac1p+\OO(1).
    \label{eq:sector_toy_vd_sv_crossing_v5}
\end{equation}
Moreover, using the explicit envelope \(C_{\mathrm{sector}}=3.0\times10^8\), the crossing is certified whenever
\begin{equation}
    \eta_{\mathrm{sector}}
    =\frac{4C_{\mathrm{sector}}}{(b_{\mathrm{SV}}^2-b_{\mathrm{VD},2}^2)R_{\mathrm{VD},2\leftrightarrow\mathrm{SV}}(p)^2}<\frac12.
\end{equation}
Using the first-order bounds above, a sufficient non-empty condition is
\begin{equation}
    \frac{8C_{\mathrm{sector}}u^2}{(u+2\ell)^2}<\frac12.
    \label{eq:sector_toy_certificate_condition_v5}
\end{equation}
The internal VD quotient certificate is simultaneously valid if, with the chosen resource normalization,
\begin{equation}
    \frac{u+2\ell}{u^2p_{\max}}
    \ge 31\log\frac2{\varepsilon_{\rm fail}}.
    \label{eq:sector_internal_vd_certificate_v6}
\end{equation}
For example, \(u=1\), \(\ell=10^6\), \(p_{\max}=10^{-8}\), and \(\varepsilon_{\rm fail}=10^{-6}\) satisfy both \cref{eq:sector_toy_certificate_condition_v5,eq:sector_internal_vd_certificate_v6}.  This numerical point is deliberately extreme: \(\ell/u=10^6\) and \(p_{\max}=10^{-8}\) do not represent a realistic device regime.  They only demonstrate that the nested certificates are jointly non-empty under the conservative constants of \cref{rem:constant_tightness_v7}.  Sharper variance remainders would move the certified example toward much less extreme parameter ratios.  Hence the VD--SV crossing certificate and the internal VD denominator certificate are jointly non-empty.
\end{proposition}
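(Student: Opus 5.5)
The plan has three parts: the leading crossing formula, the certificate inequality, and the internal denominator condition together with the numerical example.

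\textbf{Leading crossing.} I would substitute the closed forms \cref{eq:sector_toy_vvd2_formula_v6}, \cref{eq:sector_toy_vsv_v5}, \cref{eq:sector_toy_delta_sv_v5} and \cref{eq:sector_toy_bvd2_v5} into the leading formula \cref{eq:leading_crossing_formula}, with \(\kappa=1\) so that \(s=v\). For the numerator, \cref{eq:sector_toy_vvd2_v5} gives \(v_{\mathrm{VD},2}-v_{\mathrm{SV}}=8(u+\ell)p-4up+\OO(p^2)=4(u+2\ell)p+\OO(p^2)\). For the denominator, \(b_{\mathrm{SV}}^2=4u^2p^2+\OO(p^3)\) and \(b_{\mathrm{VD},2}^2=\OO(p^4)\), so \(g=4u^2p^2(1+\OO(p))\). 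Dividing gives \((u+2\ell)/(u^2p)+\OO(1)\). This is the matched-leading-cost case of \cref{cor:window_location_v10}: both sampling coefficients vanish at \(p=0\), which is why the window scales as \(p^{-1}\). Since \(\Delta s>0\) and \(g>0\), we are in case (III) of \cref{thm:selection_trichotomy_v10}, with SV as the lower-variance method and VD as the lower-bias method.

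\textbf{Certificate inequality.} Next I would bound \(\eta_{\mathrm{sector}}\) using only first-order two-sided control. To first order, \(g\approx4u^2p^2\) and \(R_0\approx(u+2\ell)/(u^2p)\), so \(gR_0^2\approx4(u+2\ell)^2/u^2\). Hence \(\eta\approx C u^2/(u+2\ell)^2\). Allowing a factor of 2 of slack for the \(\OO(p)\) corrections on \([0,p_{\max}]\), I would use \(g\ge2u^2p^2\) and \(R_0\ge(u+2\ell)/(2u^2p)\) (or similar) to get \(\eta\le 8Cu^2/(u+2\ell)^2\). That inequality is \cref{eq:sector_toy_certificate_condition_v5}, and \cref{thm:certified_crossing} then applies.

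\textbf{Main obstacle.} The delicate step is making these factor-2 slacks rigorous uniformly on \(p\le p_{\max}\le[4(u+\ell)]^{-1}\). The \(\OO(p^2)\) terms in \(v_{\mathrm{VD},2}\) carry coefficients involving \((u+\ell)^2\), and these must be dominated by \((u+2\ell)p\). I would first try explicit bounds: \(1-D_2^2\le 4(u+\ell)p\), \(1-N_2^2\ge\dots\), and \(1/D_2^2\le(1-2(u+\ell)p)^{-2}\le 4\). If that fails, I would restrict to the small-\(p\) regime used in the example, where \((u+\ell)p_{\max}\approx10^{-2}\) makes all the corrections negligible.

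\textbf{Internal condition and the example.} The internal VD condition requires \(B=R\ge R_0(1-\eta)\gtrsim R_0\ge B_{\mathrm{den},2}\). Since \(R_0\) is decreasing in \(p\), the worst case is \(p=p_{\max}\), and there \(R_0\ge(u+2\ell)/(u^2p_{\max})\) up to slack. Comparing with the bound \(B_{\mathrm{den},2}\le31\log(2/\varepsilon_{\rm fail})\) from \cref{app:toyarith} gives \cref{eq:sector_internal_vd_certificate_v6}; this is also where \(R_0(1-\eta)\ge R_{\min}\) is verified. Finally I would check the example by arithmetic:
\[
8\cdot3\times10^8/(2\times10^6+1)^2\approx6\times10^{-4}<\tfrac12,
\qquad
(2\times10^6+1)/10^{-8}\approx2\times10^{14}\ge31\log(2\times10^6)\approx450.
\]
Also \((u+\ell)p_{\max}\approx10^{-2}\le1/4\), so the toy's hypotheses hold.
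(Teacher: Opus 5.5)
Your route is the paper's: the same first-order expansions of \(v_{\mathrm{VD},2}-v_{\mathrm{SV}}\) and \(b_{\mathrm{SV}}^2-b_{\mathrm{VD},2}^2\), the same factor-two slacks \(g\ge2u^2p^2\) and \(R_0\ge(u+2\ell)/(2u^2p)\) giving \(\eta\le8Cu^2/(u+2\ell)^2\), and the same comparison of \(R_0\) at \(p_{\max}\) with \(B_{\mathrm{den},2}\le31\log(2/\varepsilon_{\rm fail})\). The paper's proof asserts the slacks only ``for small enough \(p_{\max}\)''. Your attempt to make that concrete is where the argument fails.

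You control the corrections by \((u+\ell)p_{\max}\approx10^{-2}\), but that parameter only governs the \(\OO(p^2)\) terms in the sampling coefficients. The bias gap contains \(b_{\mathrm{VD},2}^2=(2u^2+\ell^2)^2p^4/D_2^2\). Relative to \(4u^2p^2\) this is about \(\ell^4p^2/(4u^2)\), so the relevant small parameter is \((2u^2+\ell^2)p/u\approx\ell^2p/u\). Writing \(g=4u^2p^2(1+\OO(p))\) therefore hides a relative correction with an \(\ell^4/u^2\) coefficient.

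At the example point \(u=1\), \(\ell=10^6\), \(p=p_{\max}=10^{-8}\), the leakage is \(\lambda=\ell p=10^{-2}\) and \(D_2\approx0.98\). This gives \(b_{\mathrm{VD},2}^2\approx1.04\times10^{-8}\), while \(b_{\mathrm{SV}}^2=4p^2/(1-\ell p)^2\approx4.1\times10^{-16}\). So \(g<0\) with \(\Delta s>0\), which is the uniform-SV-dominance case (II) of the trichotomy. There is no positive crossing at \(p_{\max}\), and your assertion that you are in case (III) fails there. VD is the lower-bias method only while the \(\lambda^2\) leakage contribution to its bias stays below the SV floor, roughly \(p\lesssim2u/\ell^2\approx2\times10^{-12}\).

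The repair is to impose that smallness explicitly. From \(b_{\mathrm{SV}}^2\ge4u^2p^2\) and \(D_2\ge9/16\), the bound \(g\ge2u^2p^2\) holds for \(p\le(9\sqrt2/16)\,u/(2u^2+\ell^2)\), about \(8\times10^{-13}\) for the example parameters. On that range \((u+\ell)p\le10^{-6}\). Your numerator estimate \(v_{\mathrm{VD},2}-v_{\mathrm{SV}}=4(u+2\ell)p\,(1+\OO((u+\ell)p))\), together with \(g\le b_{\mathrm{SV}}^2=4u^2p^2/(1-\ell p)^2\), then gives \(R_0\ge(u+2\ell)/(2u^2p)\) with room to spare.

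The condition \(8Cu^2/(u+2\ell)^2<1/2\) does not depend on \(p\), and the internal condition only gets easier as \(p_{\max}\) shrinks. So joint non-emptiness does hold, for example with \(p_{\max}=10^{-13}\), but not at the displayed \(p_{\max}=10^{-8}\). The example in the statement inherits exactly this problem, because the paper never quantifies ``small enough''.

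A smaller bookkeeping point concerns the internal condition. With your own slack on \(R_0\), and a crossing guaranteed only above \(R_0(1-\eta)\ge R_0/2\), the displayed condition is short by a factor of about four relative to what the argument proves. This is harmless given the margin, but it should be tracked explicitly rather than absorbed into ``\(\gtrsim\)''.
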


\begin{proof}
Equations \cref{eq:sector_toy_delta_sv_v5,eq:sector_toy_bvd2_v5} give
\(b_{\mathrm{SV}}(p)^2-b_{\mathrm{VD},2}(p)^2=4u^2p^2+\OO(p^3)\).  Equations \cref{eq:sector_toy_vsv_v5,eq:sector_toy_vvd2_v5} give
\(v_{\mathrm{VD},2}(p)-v_{\mathrm{SV}}(p)=(4u+8\ell)p+\OO(p^2)\).  Dividing proves \cref{eq:sector_toy_vd_sv_crossing_v5}.  The certified condition is exactly \cref{eq:eta_crossing_condition} with the explicit sector-toy constants from \cref{eq:Csector_numeric_v6}.  Substituting the leading lower bound \(b_{\mathrm{SV}}^2-b_{\mathrm{VD},2}^2\ge 2u^2p^2\) and the crossing lower bound \(R_{\mathrm{VD},2\leftrightarrow\mathrm{SV}}\ge (u+2\ell)/(2u^2p)\) for small enough \(p_{\max}\) yields \cref{eq:sector_toy_certificate_condition_v5}.  The separate lower bound \cref{eq:sector_internal_vd_certificate_v6} is just \(R_{\mathrm{cross}}(p_{\max})\ge B_{\mathrm{den},2}\), using \cref{eq:sector_Bden_bound_v6}.  The displayed numerical choice verifies both inequalities directly.
\end{proof}

\begin{proposition}[Efron--Stein total-MSE dominance check at physical noise]
\label{prop:sector_es_operational_certificate_v12}
The extreme numerical point in \cref{prop:sector_toy_vd_sv_crossing_v5} is not needed to obtain an operationally meaningful dominance certificate.  In the sector toy with
\begin{equation}
    u=1,\qquad \ell=1,\qquad p=10^{-2},\qquad M=2,
\end{equation}
and unit Hadamard-test call costs, let \(R\) denote the physical call budget.  Since the independent VD implementation uses one numerator and one denominator stream, its paired-shot budget is \(B=R/2\).  The clipped VD estimator is certified to beat SV at any physical budget
\begin{equation}
    R\ge 2\times10^6.
    \label{eq:sector_es_budget_v12}
\end{equation}
More precisely, the Efron--Stein total-variance certificate gives
\begin{equation}
    \Var(\widehat\mu_{\mathrm{VD},2}^{\rm clip})
    \le \frac{472}{R},
    \label{eq:sector_es_236_v12}
\end{equation}
while the VD population bias satisfies
\begin{equation}
    \abs{b_{\mathrm{VD},2}(p)}\le 4p^2,
    \label{eq:sector_es_bvd_v12}
\end{equation}
and the SV residual floor satisfies, with
\(\varepsilon^{(E)}_{\mathrm{SV},\mathrm{sector}}(R)\) denoting the explicit SV expectation-remainder bound from \cref{thm:sv_quotient_law_v2},
\begin{equation}
    \MSE_{\mathrm{SV}}(p,R)\ge
    \left(\abs{b_{\mathrm{SV}}(p)}-\varepsilon^{(E)}_{\mathrm{SV},\mathrm{sector}}(R)\right)^2
    \ge 4p^2.
    \label{eq:sector_es_sv_floor_v12}
\end{equation}
Consequently, at \(p=10^{-2}\) and \(R=2\times10^6\),
\begin{equation}
    \MSE_{\mathrm{VD},2}^{\rm cert}(p,R)
    \le (4p^2+4/R)^2+\frac{472}{R}+10^{-8}
    <2.4\times10^{-4}
    <4.0\times10^{-4}
    \le \MSE_{\mathrm{SV}}(p,R).
    \label{eq:sector_es_numeric_v12}
\end{equation}
Thus VD can be certified to beat SV in total MSE at a physical noise scale using modest sector ratios.  This is not a replacement for the crossing-remainder certificate in \cref{thm:certified_crossing}: Efron--Stein controls the total variance of the clipped quotient, not the difference between the exact MSE and its leading local law.  It is included only as a robust dominance check showing that the extreme constants in the formal crossing certificate are not physical thresholds.
\end{proposition}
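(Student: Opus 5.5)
The plan is to bound the two sides of \cref{eq:sector_es_numeric_v12} separately, using the bias--variance decomposition \(\MSE=\Bias^2+\Var\), and then compare numerically at \(u=\ell=1\), \(p=10^{-2}\), \(R=2\times10^6\), so \(B=R/2=10^6\) paired shots.

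\textbf{VD side.} First evaluate the population quantities of the sector toy. The denominator is \(D_2=(1-2p)^2+p^2+p^2=0.9606>0.96\). For the variance, invoke \cref{lem:efron_stein_vd_v11} with \(K_{N,2}=1\), \(K_{D,2}=2\) (Hadamard signs in \(\{\pm1\}\), so \(\abs{Y-D_2}\le2\)). This gives \(8(1+4)^2/0.96^4<236\), hence \(\Var\le 236/B=472/R\), which is \cref{eq:sector_es_236_v12}. For the bias, write \(\E\widehat\mu^{\rm clip}_{\mathrm{VD},2}-\mu_\star=b_{\mathrm{VD},2}+c_{\mathrm{VD},2}/B+r^{(E)}\) via \cref{thm:vd_quotient_certified_v4}, and bound the three pieces in turn:
\begin{itemize}
\item The population bias is exact from \cref{eq:sector_toy_bvd2_v5}: \(\abs{b_{\mathrm{VD},2}}=3p^2/D_2<4p^2\), which is \cref{eq:sector_es_bvd_v12}.
\item The Hadamard form \cref{eq:hadamard_c_v4} with \(\abs{\mu_{\mathrm{VD},2}}\le1\) gives \(\abs{c}/B\le(1-D_2^2)/(D_2^2B)<0.09/B<4/R\).
\item The remainder \(r^{(E)}\) is controlled by \(A_{E}/B^2\) using the sector arithmetic bound on \(A_E\) (\(<3.1\times10^3\)). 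The exponential bad-denominator term is negligible because \(B_{\mathrm{den},2}\le31\log(2\times10^6)\ll10^6\), so the Bernstein exponent in \cref{lem:denom_bernstein_v4} is enormous.
\end{itemize}
Collecting these gives \(\MSE_{\mathrm{VD}}\le(4p^2+4/R)^2+472/R+10^{-8}\), where the \(10^{-8}\) absorbs the tiny remainder terms and their cross terms with the bias.

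\textbf{SV side.} Use \(\MSE\ge(\E\widehat\mu-\mu_\star)^2\). By \cref{thm:sv_quotient_law_v2}, \(c_{\mathrm{SV}}=0\), so \(\E\widehat\mu^{\rm clip}_{\mathrm{SV}}=\mu_{\mathrm{SV}}+r^{(E)}_{\mathrm{SV}}\). The remainder \(r^{(E)}_{\mathrm{SV}}\) has magnitude \(\le A_{E,\mathrm{SV}}/B^2+B_{E,\mathrm{SV}}e^{-Ba/12}\). With \(a=0.99\) and bounded \(Z\), this is far below \(10^{-9}\). The population bias is \(b_{\mathrm{SV}}=-2p/(1-p)\), so \(b_{\mathrm{SV}}^2=4p^2/(1-p)^2>4p^2\) strictly, with margin \(\approx8\times10^{-6}\) in \(\abs{b}\), which dwarfs the remainder. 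Hence \((\abs{b_{\mathrm{SV}}}-\varepsilon^{(E)})^2\ge4p^2=4\times10^{-4}\), which is \cref{eq:sector_es_sv_floor_v12}.

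\textbf{Comparison and main obstacle.} The final step is arithmetic: \((4.02\times10^{-4})^2\approx1.6\times10^{-7}\), and \(472/R=2.36\times10^{-4}\), so the VD total is about \(2.362\times10^{-4}<2.4\times10^{-4}<4\times10^{-4}\). The main obstacle is making the \(A_{E}\) constants (the fourth-moment envelopes for VD and the SV substitution) numerically explicit enough to certify the \(10^{-8}\) and \(10^{-9}\) slacks. I would rely on the appendix sector arithmetic for \(A_{E,\mathrm{VD},2}\) and on the same envelope formulas with \(K_{D}=1\) for SV. The slack is so large at \(B=10^6\) that even loose envelopes should suffice.
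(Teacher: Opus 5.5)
Your proposal is correct and follows the paper's proof essentially step for step: the same \(D_2=0.9606>0.96\) feeding the Efron--Stein bound \(236/B=472/R\), the exact sector bias \(3p^2/D_2<4p^2\), the Hadamard estimate \(\abs{c_{\mathrm{VD},2}}/B<0.09/B<4/R\), the \(A_E<3.1\times10^3\) and \(B_{\mathrm{den},2}\ll10^6\) arguments for the VD remainder, and the SV floor via \(c_{\mathrm{SV}}=0\), \(4p^2/(1-p)^2>4p^2\), and a sub-\(10^{-9}\) expectation remainder. One harmless slip: the relevant margin is \(\abs{b_{\mathrm{SV}}}-2p=2p^2/(1-p)\approx2\times10^{-4}\), while \(\approx8\times10^{-6}\) is the margin in \(b_{\mathrm{SV}}^2\); either value dwarfs the \(10^{-9}\) remainder, so your conclusion is unaffected.
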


The proof is a direct numerical verification of each displayed bound and is
given in \cref{app:proofs}.

\begin{proposition}[Sector residual-bias floor and a structural two-point benchmark]
\label{prop:sector_structural_floor_v8}
In the sector toy, SV has residual first-order bias
\begin{equation}
    b_{\mathrm{SV}}(p)=-2up+O(p^2),
    \qquad
    b_{\mathrm{SV}}(p)^2=4u^2p^2+O(p^3).
\end{equation}
Consider the restricted ambiguity model in which the same full noisy state \(\rho_p\) can be generated either from the ideal state \(\rho_+^\star=\ket g\bra g\) with an undetectable in-sector error of mass \(up\), or from an alternative ideal state \(\rho_-^\star=(1-up)\ket g\bra g+up\ket e\bra e\) with that in-sector mass treated as ideal population, with the remaining channel adjusted so that the transcript distribution is identical.  The two ideal values differ by
\begin{equation}
    \mu_+^\star-\mu_-^\star=2up.
\end{equation}
Hence \cref{cor:nonidentifiability_floor} gives the structural lower bound
\begin{equation}
    L_{\rm str}^{\rm sector}(p)=\frac{u^2p^2}{2}.
    \label{eq:sector_structural_floor_v8}
\end{equation}
Consequently the SV residual bias floor is within a constant factor of this restricted non-identifiability floor:
\begin{equation}
    \frac{b_{\mathrm{SV}}(p)^2}{L_{\rm str}^{\rm sector}(p)}
    =8+O(p).
    \label{eq:sector_sv_floor_ratio_v8}
\end{equation}
This does not say that VD cannot beat SV in the declared known-channel sector model; VD does beat SV on the high-budget side of \cref{prop:sector_toy_vd_sv_crossing_v5}.  It says something narrower: if the in-sector population error is not identifiable from side information, then the same order of bias floor is information-theoretic rather than merely a defect of SV\@.
\end{proposition}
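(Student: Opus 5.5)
The proof has three steps: the first-order expansion of the SV bias, an explicit two-point construction for the ambiguity model, and then \cref{cor:nonidentifiability_floor} followed by a ratio computation.

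\emph{Step 1: SV bias.} From \cref{eq:sector_toy_delta_sv_v5} we have \(b_{\mathrm{SV}}(p)=-2up/(1-\ell p)\). Expanding \(1/(1-\ell p)=1+\ell p+\OO(p^2)\), which is valid since \(\ell p\le1/4\) on the declared interval, gives
\[
b_{\mathrm{SV}}(p)=-2up+\OO(p^2),\qquad b_{\mathrm{SV}}(p)^2=4u^2p^2(1+\ell p+\OO(p^2))^2=4u^2p^2+\OO(p^3).
\]
This is routine.

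\emph{Step 2: the two hypotheses.} I would construct them explicitly, so that the phrase ``remaining channel adjusted'' is realised by an actual channel.
\begin{itemize}
\item Under \(\theta=+\), the ideal state is \(\rho_+^\star=\ket g\bra g\). The channel moves mass \(up\) to \(\ket e\) and mass \(\ell p\) to \(\ket\ell\), producing \(\rho_p\).
\item Under \(\theta=-\), the ideal state is \(\rho_-^\star=(1-up)\ket g\bra g+up\ket e\bra e\). The channel leaves \(\ket e\) fixed and sends \(\ket g\) to \(\ket\ell\) with probability \(\ell p/(1-up)\).
\end{itemize}
In the second case the output is
\[
(1-up-\ell p)\ket g\bra g+up\ket e\bra e+\ell p\ket\ell\bra\ell=\rho_p.
\]
Both maps are CPTP, since they are classical stochastic maps on the diagonal with probabilities in \([0,1]\); for the second this uses \(p\le p_{\max}\le[4(u+\ell)]^{-1}\). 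Both hypotheses therefore produce the identical noisy state.

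Any protocol in the declared transcript model acts only on copies of \(\rho_p\), together with side information that by assumption does not separate the two hypotheses. Hence \(\Prob_+^{(R)}=\Prob_-^{(R)}\) for every \(R\), and \(\operatorname{TV}=0\). The ideal values are
\[
\mu_+^\star=1,\qquad \mu_-^\star=(1-up)-up=1-2up,
\]
so \(\abs{\mu_+^\star-\mu_-^\star}=2up\), i.e.\ \(h=up\).

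\emph{Step 3: the floor and the ratio.} \Cref{cor:nonidentifiability_floor} with \(h=up\) gives \(L_{\rm str}^{\rm sector}=h^2/2=u^2p^2/2\). Dividing the result of Step 1 by this floor gives
\[
\frac{b_{\mathrm{SV}}(p)^2}{L_{\rm str}^{\rm sector}(p)}=\frac{4u^2p^2+\OO(p^3)}{u^2p^2/2}=8+\OO(p),
\]
where \(u>0\) is fixed. The closing interpretive sentence is not a claim to prove. It only notes that the floor applies in the ambiguity model, not in the known-channel model of \cref{prop:sector_toy_vd_sv_crossing_v5}; the two statements concern different statistical models and so do not conflict.

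\emph{Main obstacle.} The only delicate point is making Step 2 rigorous. One must verify that the adjusted second channel is a legitimate CPTP map, and state explicitly that the admissible side information is identical under both hypotheses; otherwise the transcript distributions need not coincide. I would include the explicit stochastic matrices for both channels as the proof of this point.
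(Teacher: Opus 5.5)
Your proposal is correct and follows the same route as the paper: expand $b_{\mathrm{SV}}=-2up/(1-\ell p)$, read off $h=up$ from the ideal values $1$ and $1-2up$, apply \cref{cor:nonidentifiability_floor}, and divide to get $8+\OO(p)$. Your explicit diagonal stochastic channels, including the check that $\ell p/(1-up)\in[0,1]$ on the declared interval, make concrete the step the paper simply asserts ``by construction.'' Your caveat that equal transcripts require protocols to see only copies of $\rho_p$, with side information that does not separate the hypotheses, is exactly the assumption the statement builds in.
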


\begin{proof}
The expansion of \(b_{\mathrm{SV}}\) is \cref{eq:sector_toy_delta_sv_v5}.  The two hypotheses described in the statement induce the same noisy transcript distribution by construction, while their ideal \(O=\ket g\bra g-\ket e\bra e\) values are \(1\) and \(1-2up\).  Applying \cref{cor:nonidentifiability_floor} with \(h=up\) gives \cref{eq:sector_structural_floor_v8}.  Dividing the SV squared bias by this floor gives \cref{eq:sector_sv_floor_ratio_v8}.
\end{proof}

\begin{remark}[Role of the sector toy model]
The sector toy is favorable to VD by construction, complementing the first toy's design in favor of SV\@.  Because \(\rho_p\) remains diagonal in the fixed basis \(\{\ket g,\ket e,\ket\ell\}\), the dominant eigenvector does not rotate and \(\delta_{\mathrm{VD}}=0\).  Its purpose is to isolate the spectral-amplification advantage of VD against a nonzero SV residual bias.  The generic operating-window theory above allows both \(\delta_{\mathrm{VD}}\ne0\) and \(\delta_{\mathrm{SV}}\ne0\); these two closed models are diagnostic boundary cases used to check constants, not a claim that either favorable limit is typical.
\end{remark}

\subsection{A generic toy with simultaneous VD and SV first-order bias}
\label{subsec:generic_toy_v10}

The two closed toy models above isolate clean limiting mechanisms.  The first makes SV unbiased at first order; the sector model removes dominant-eigenvector rotation and therefore gives \(\delta_{\mathrm{VD}}=0\).  The following minimal model records the generic situation in which both first-order floors can be nonzero.  It is included as a structural diagnostic only.  The direct trace expansion and the formal VD--SV crossing are displayed below; only the sharp variance constants and a full certified-boundary check for this generic non-diagonal toy are left for future work.

\begin{proposition}[A toy with simultaneous VD and SV first-order bias]
\label{prop:generic_toy_v10}
Work in \(\mathbb C^3\) with basis \(\{\ket0,\ket1,\ket2\}\), valid sector
\begin{equation}
    P=\ket0\bra0+\ket1\bra1,
\end{equation}
ideal state \(\rho_\star=\ket0\bra0\), and observable
\begin{equation}
    O=\ket0\bra0-\ket1\bra1+c\big(\ket0\bra1+\ket1\bra0\big),
    \qquad \mu_\star=1.
\end{equation}
Let
\begin{equation}
    \Delta
    =u(\ket1\bra1-\ket0\bra0)
    +\ell(\ket2\bra2-\ket0\bra0)
    +\kappa(\ket0\bra1+\ket1\bra0),
    \qquad u,\ell,\kappa>0 ,
\end{equation}
and set \(\rho_p=\rho_\star+p\Delta\) for
\begin{equation}
    0<p\leq p_0,
    \qquad
    p_0:=\frac12\min\left\{\frac1{u+\ell},\frac{u}{\kappa^2}\right\}.
    \label{eq:generic_toy_positivity_p0_v13}
\end{equation}
Then \(\rho_p\) is a density matrix on this interval.  Its trace is one, its \(\ket2\) weight is \(\ell p\ge0\), and the \(\{\ket0,\ket1\}\) block has determinant
\begin{equation}
    p\big[u-(u(u+\ell)+\kappa^2)p\big]\ge0
\end{equation}
by \cref{eq:generic_toy_positivity_p0_v13}.  Here \(up\) is an undetectable in-sector population error, \(\ell p\) is detectable leakage, and \(\kappa p\) is an in-sector coherence that tilts the dominant eigenvector.  Then
\begin{equation}
    a(p)=1-\ell p,
    \qquad
    \delta_{\mathrm{SV}}=2(c\kappa-u),
    \qquad
    \delta_{\mathrm{VD}}=2c\kappa.
\end{equation}
Thus, whenever \(c\kappa\ne0\) and \(c\kappa\ne u\), both VD and SV have nonzero first-order residual bias.  For \(M=2\), direct trace expansion gives
\begin{align}
    D_2(p)&=\Tr(\rho_p^2)=1-2(u+\ell)p+\OO(p^2),
    \label{eq:generic_toy_D2_v12}\\
    N_2(p)&=\Tr(O\rho_p^2)=1+2(c\kappa-u-\ell)p+\OO(p^2),
    \label{eq:generic_toy_N2_v12}
\end{align}
and hence
\begin{equation}
    b_{\mathrm{VD},2}(p)=2c\kappa p+\OO(p^2),
    \qquad
    b_{\mathrm{SV}}(p)=2(c\kappa-u)p+\OO(p^2).
    \label{eq:generic_toy_biases_v12}
\end{equation}
Writing the leading sampling coefficients as
\begin{equation}
    s_{\mathrm{VD},2}(p)=\nu_{\mathrm{VD},0}+\OO(p),
    \qquad
    s_{\mathrm{SV}}(p)=\nu_{\mathrm{SV},0}+\OO(p),
\end{equation}
the formal VD--SV crossing in the generic toy is
\begin{equation}
    R_{\mathrm{VD},2\leftrightarrow\mathrm{SV}}(p)
    =
    \frac{\nu_{\mathrm{VD},0}-\nu_{\mathrm{SV},0}+\OO(p)}
    {4[(c\kappa-u)^2-c^2\kappa^2]p^2+\OO(p^3)},
    \label{eq:generic_toy_crossing_v12}
\end{equation}
whenever the denominator is nonzero and the numerator has the sign required for a positive crossing.
For the natural scaled-Hadamard VD numerator and eigenbasis SV measurement in this toy, the matched per-sample constants are equal at \(p=0\): \(\nu_{\mathrm{VD},0}=\nu_{\mathrm{SV},0}=c^2\).  Hence matched per-sample normalization gives \(\nu_{\mathrm{VD},0}-\nu_{\mathrm{SV},0}=0\) and the leading window reverts to \(\Theta(p^{-1})\) if the \(O(p)\) sampling coefficients differ.  Under physical-call normalization for independent VD, however, \(\kappa_{\mathrm{VD},2}=2\kappa_{\mathrm{SV}}\), so the resource-normalized constants are \(2c^2\) versus \(c^2\).  When \(c\ne0\) and the bias-gap denominator in \cref{eq:generic_toy_crossing_v12} is nonzero, \cref{cor:window_location_v10} then gives the usual \(\Theta(p^{-2})\) window.  The exponent is therefore an implementation-and-resource-normalization statement, not a property of simultaneous first-order bias alone.
\end{proposition}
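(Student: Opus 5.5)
The plan is to reduce every first-order claim to the general results already proved, namely \cref{prop:sv_delta_precise} for SV and \cref{prop:vd_delta_generator} together with \cref{eq:finite_M_vd_delta_same_v1} for VD. Then the only remaining work is explicit \(3\times3\) matrix bookkeeping with the given \(\Delta\). Positivity on \((0,p_0]\) is already verified in the statement through the trace, the \(\ket2\) weight and the \(2\times2\) determinant. The diagonal entries of the \(\{\ket0,\ket1\}\) block are \(1-(u+\ell)p\ge\tfrac12\) and \(up\ge0\), which together with the determinant give positive semidefiniteness. The expansion \cref{eq:noise_expansion} holds exactly, since \(\rho_p\) is affine in \(p\).

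For SV, first read off \(\Delta\ket0=-(u+\ell)\ket0+\kappa\ket1\) and \(\Delta\ket1=\kappa\ket0+u\ket1\). The diagonal of \(P\Delta P\) is then \((-u-\ell,\,u)\), so \(\Tr(P\Delta P)=-\ell\). Because \(\rho_p\) is exactly affine, this gives \(a(p)=1-\ell p\) exactly, not just to first order. Next compute
\(\Tr(OP\Delta P)=(-u-\ell)-u+2c\kappa\). Inserting this into \cref{eq:delta_sv_precise} with \(\mu_\star=1\) gives
\(\delta_{\mathrm{SV}}=2c\kappa-2u-\ell+\ell=2(c\kappa-u)\).

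For VD, the dominant-eigenvector correction from \cref{eq:eta_generator} is \(\ket\eta=Q_\star\Delta\ket0=\kappa\ket1\). Formula \cref{eq:delta_vd_explicit} then gives \(\delta_{\mathrm{VD}}=2\operatorname{Re}\bra0O\ket\eta=2c\kappa\). The first-order coefficients of \(D_2\) and \(N_2\) come from \(\rho_p^2=\rho_\star+p(\rho_\star\Delta+\Delta\rho_\star)+\OO(p^2)\). This yields
\(D_2=1+2p\bra0\Delta\ket0=1-2(u+\ell)p\) and
\(N_2=1+2p\operatorname{Re}\bra0O\Delta\ket0=1+2(c\kappa-u-\ell)p\).
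Dividing gives \(b_{\mathrm{VD},2}=2c\kappa p+\OO(p^2)\), which is consistent with \cref{eq:finite_M_vd_delta_same_v1} and serves as a cross-check. The bias \(b_{\mathrm{SV}}\) is just \(\delta_{\mathrm{SV}}p+\OO(p^2)\).

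The crossing formula \cref{eq:generic_toy_crossing_v12} follows by substituting these biases into the leading indifference formula \cref{eq:leading_crossing_formula}. This gives \(g=b_{\mathrm{SV}}^2-b_{\mathrm{VD},2}^2=4[(c\kappa-u)^2-c^2\kappa^2]p^2+\OO(p^3)\), with \(\Delta s=\nu_{\mathrm{VD},0}-\nu_{\mathrm{SV},0}+\OO(p)\).

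For the matched constants I evaluate at \(p=0\), where \(\rho_0=\ket0\bra0\) is pure. For SV, acceptance is certain, and measuring \(O\) in its eigenbasis on \(\ket0\) has variance \(\bra0O^2\ket0-\bra0O\ket0^2=(1+c^2)-1=c^2\). For VD, \(D_2(0)=1\) and the denominator sign is deterministic. Hence \(v_{\mathrm{VD},2}(0)=\Var(X)\) by \cref{eq:vd_v_v4}, and the quotient variance reduces to the numerator's one-shot variance.

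The main obstacle is justifying that the ``natural scaled-Hadamard'' numerator has one-shot variance \(c^2\) at \(p=0\). A literal \(\pm\|O\|_\infty\)-valued Hadamard test would instead give \(\|O\|_\infty^2-1\). I would first fix the implementation explicitly: the numerator samples the eigenvalue of \(O\) on one copy after the derangement interference. On identical pure copies, the cyclic shift acts trivially, so this shot is distributed as an \(O\)-eigenbasis measurement of \(\ket0\) and its variance is \(c^2\). The statement should then be read as conditional on that convention.

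Finally, the resource-normalization remark follows from \cref{rem:resource_normalization_v13}. With \(\kappa_{\mathrm{VD},2}=2\kappa_{\mathrm{SV}}\), the constants become \(2c^2\) versus \(c^2\), so \(\Delta s\) has nonzero constant term \(c^2\). \Cref{cor:window_location_v10} then gives \(\Theta(p^{-2})\) in that case, and \(\Theta(p^{-1})\) under matched normalization.
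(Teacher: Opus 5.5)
Your derivations of $a(p)$, $\delta_{\mathrm{SV}}$, $\delta_{\mathrm{VD}}$ (via $Q_\star\Delta\ket0=\kappa\ket1$), $D_2$, $N_2$, $b_{\mathrm{VD},2}$ and the crossing formula follow the paper's proof essentially line for line. The paper also uses the first-order SV formula, the generator formula for $\delta_{\mathrm{VD}}$, and $\rho_p^2=\rho_\star+p(\rho_\star\Delta+\Delta\rho_\star)+\OO(p^2)$. Your explicit positive-semidefiniteness check, using the diagonal entries together with the determinant, is a small useful addition. The paper's proof stops at the crossing formula and never derives $\nu_{\mathrm{VD},0}=\nu_{\mathrm{SV},0}=c^2$, so your attempt there goes beyond it.

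The obstacle you flag, however, rests on a miscalculation. On the sector, $O$ acts as $\begin{pmatrix}1&c\\c&-1\end{pmatrix}$, whose square is $(1+c^2)I$, and $O\ket2=0$. Hence $\norm{O}_\infty=\sqrt{1+c^2}$.

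The literal $\pm\norm{O}_\infty$-valued scaled Hadamard numerator therefore has one-shot variance $\norm{O}_\infty^2-N_2(0)^2=(1+c^2)-1=c^2$ at $p=0$, which is exactly the claimed constant. With $D_2(0)=1$, $\sigma_{D,2}^2(0)=1-D_2(0)^2=0$, $\sigma_{ND}=0$ and $b_{\mathrm{VD},2}(0)=0$, \cref{eq:vd_v_v4} gives $\nu_{\mathrm{VD},0}=c^2$ directly. The SV eigenbasis outcome is likewise $\pm\sqrt{1+c^2}$-valued with mean $1$ at $p=0$, which is why the two constants match.

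You should drop the ad hoc convention of sampling the eigenvalue of $O$ on one copy after the interference. It is not needed. You also have not shown that it is an unbiased estimator of $N_2(p)=\Tr(O\rho_p^2)$ for $p>0$; a bare one-copy eigenvalue sample targets $\Tr(O\rho_p)$ unless it is reweighted. As stated, it does not define a valid VD numerator stream.

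One precision for your last step concerns physical-call normalization, where $\Delta s=2c^2-c^2=c^2>0$. Corollary~\ref{cor:window_location_v10} gives a genuine $\Theta(p^{-2})$ window only when the bias gap has the matching sign: $(c\kappa-u)^2-c^2\kappa^2=u(u-2c\kappa)>0$, i.e.\ $u>2c\kappa$. For $u<2c\kappa$ the comparison is uniform SV dominance, with no crossing.
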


\begin{proof}
Since \(\Tr(P\Delta P)=\Delta_{00}+\Delta_{11}=-(u+\ell)+u=-\ell\), the acceptance probability is exactly \(a(p)=1-\ell p\).  Also
\begin{equation}
    \Tr(OP\Delta P)=O_{00}\Delta_{00}+O_{11}\Delta_{11}+2c\Delta_{01}=-(u+\ell)-u+2c\kappa.
\end{equation}
Using the first-order SV formula gives
\begin{equation}
    \delta_{\mathrm{SV}}
    =\Tr(OP\Delta P)-\mu_\star\Tr(P\Delta P)
    =-(2u+\ell)+2c\kappa+\ell
    =2(c\kappa-u).
\end{equation}
For VD, let \(Q_\star=I-\ket0\bra0\).  Then \(\Delta\ket0=-(u+\ell)\ket0+\kappa\ket1\), so \(Q_\star\Delta\ket0=\kappa\ket1\).  By \cref{prop:vd_delta_generator},
\begin{equation}
    \delta_{\mathrm{VD}}
    =2\Re\bra0 OQ_\star\Delta\ket0
    =2c\kappa.
\end{equation}
Finally, \(\rho_p=\rho_\star+p\Delta\) on the declared positivity interval gives
\begin{equation}
    \rho_p^2=\rho_\star+p(\rho_\star\Delta+\Delta\rho_\star)+\OO(p^2).
\end{equation}
Since \(\Tr(\rho_\star\Delta+\Delta\rho_\star)=2\Delta_{00}=-2(u+\ell)\), \cref{eq:generic_toy_D2_v12} follows.  Also
\begin{equation}
    \Tr\!\left[O(\rho_\star\Delta+\Delta\rho_\star)\right]
    =-2(u+\ell)+2c\kappa,
\end{equation}
which gives \cref{eq:generic_toy_N2_v12}.  Dividing \(N_2/D_2\) yields \(b_{\mathrm{VD},2}=2c\kappa p+\OO(p^2)\).  Combining this with the SV expansion above and equating the two leading MSE laws gives \cref{eq:generic_toy_crossing_v12}.
\end{proof}

\begin{remark}[Why the previous clean toys were special]
The first toy is SV-favorable because the postselected observable has zero residual first-order bias.  The sector toy is VD-favorable because the noisy state is diagonal in a fixed basis, so the dominant eigenvector does not rotate at first order.  \Cref{prop:generic_toy_v10} shows that neither simplification is generic: an in-sector coherence can make \(\delta_{\mathrm{VD}}\ne0\), and an undetectable in-sector population error can make \(\delta_{\mathrm{SV}}\ne0\).  In particular, a diagonal observable in the ideal eigenbasis has \(\delta_{\mathrm{VD}}=0\) because it cannot see the orthogonal first-order tilt direction \(Q_\star\Delta\ket{\psi_\star}\); off-diagonal observables generally can.  This is the same regime dependence seen in the QAOA validation layer: the clean sector toy is VD-favorable by construction, whereas the realistic QAOA circuits are SV-favorable once Hadamard/SWAP, controlled-SWAP, routing, and readout overhead are included.  These are not contradictory outcomes; they are different points in the operating-window landscape, and the absence of a universal winner is part of the claim.
\end{remark}

\section{Toy-model certificate arithmetic}
\label{app:toyarith}

This appendix displays the conservative envelope arithmetic behind the
sector-toy certificate constants used in
\cref{prop:sector_toy_vd_sv_crossing_v5,prop:sector_es_operational_certificate_v12}.

\paragraph{Sector remainder and denominator certificates.}
On \(0<p\le p_{\max}\le[4(u+\ell)]^{-1}\), one has \(\alpha(p)\ge3/4\) and hence
\begin{equation}
    D_2(p)=\alpha^2+\beta^2+\lambda^2\ge \frac{9}{16},
    \qquad
    a(p)=1-\ell p\ge\frac34.
    \label{eq:sector_D_a_lower_v6}
\end{equation}
For the independent Hadamard-test VD implementation with \(M=2\), take \(K_{N,2}=1\), \(K_{D,2}=2\), and the bounded-moment envelopes
\begin{equation}
    \mathfrak m_{12},\mathfrak m_{21},\mathfrak m_{03}\le 8,
    \qquad
    \mathfrak m_{13},\mathfrak m_{04},\mathfrak m_{22},\mathfrak m_{40}\le 64,
    \qquad
    \mathfrak t^{(V)}\le 10^6.
\end{equation}
The final bound on \(\mathfrak t^{(V)}\) is a direct bounded-sign envelope for the third-and-higher quotient terms in this finite sector toy; it is loose by construction and remains far below the conservative crossing constant used below.
The finite-interval constant is assembled explicitly as follows.  Let
\(\Delta=(9/16)^{-1}=16/9\).  Since \(D_2^{-1}\le\Delta\) and
\(|\mu_{\mathrm{VD},2}|\le1\), \cref{eq:AE_constant_v4} gives
\begin{align}
    A_{E,\mathrm{VD},2}^{\mathrm{sector}}
    &\le (8+8)\Delta^3+64\Delta^4+2\cdot1\cdot64\Delta^5 \\
    &<3.1\times10^3.
    \label{eq:sector_AE_arithmetic_v7}
\end{align}
Using the same envelopes in \cref{eq:AQ_constant_v14,eq:ALQ_constant_v14,eq:AV_constant_v4},
\begin{align}
    A_{V,\mathrm{VD},2}^{\mathrm{sector}}
    &\le 2\Delta^4(64+64)
      +2\Delta^3(2\cdot8+8+8)+10^6\\
    &<1.1\times10^6<2.5\times10^8.
    \label{eq:sector_AV_arithmetic_v7}
\end{align}
In the MSE remainder assembly, the dominant displayed correction is the quadratic envelope
\(2(A_{E,\mathrm{VD},2}^{\mathrm{sector}})^2<2.0\times10^7\); the additional
\(A_V\), \(c^2\), \(2|b|A_E\), and \(2|c|A_E\) terms are smaller on the declared
finite interval.  This justifies the stated conservative finite-interval bound
\begin{equation}
    C_{\mathrm{VD},2}^{\mathrm{sector}}\le 2.5\times10^8.
    \label{eq:Cvd_sector_v6}
\end{equation}
For SV in the sector toy, \(a\ge3/4\), \(K_W=1\), and the Bernoulli quotient constants give
\begin{equation}
    C_{\mathrm{SV}}^{\mathrm{sector}}\le 5.0\times10^7.
    \label{eq:Csv_sector_v6}
\end{equation}
This SV bound is similarly conservative; inserting \(a\ge3/4\) in the explicit constants of \cref{thm:sv_quotient_law_v2} gives a value far below \(5.0\times10^7\), and the larger number is retained only to keep a simple round sector envelope.
We therefore use the explicit envelope
\begin{equation}
    C_{\mathrm{sector}}:=3.0\times10^8.
    \label{eq:Csector_numeric_v6}
\end{equation}
The internal VD denominator certificate is also explicit.  Since \(\sigma_{D,2}^2\le1\), \(K_{D,2}=2\), and \(D_2\ge9/16\), \cref{eq:denom_sample_threshold_v4} gives
\begin{equation}
    B_{\mathrm{den},2}(p,\varepsilon_{\rm fail})
    \le 31\log\frac2{\varepsilon_{\rm fail}}.
    \label{eq:sector_Bden_bound_v6}
\end{equation}
Thus the VD law used inside the crossing is certified whenever the VD sample budget at the predicted crossing is at least the right-hand side of \cref{eq:sector_Bden_bound_v6}.

\section{Notation map}

\begin{center}
\begin{tabular}{p{0.28\linewidth}p{0.64\linewidth}}
\toprule
Symbol & Meaning\\
\midrule
\(p\) & Physical noise scale\\
\(B\) & Statistical sample count for one method\\
\(R\) & Common physical resource budget after method-specific cost normalization\\
\(\rho_\star\) & Ideal output state\\
\(\rho_p\) & Noisy output state\\
\(\Delta=\mathcal L(\rho_\star)\) & First-order noise perturbation\\
\(O\) & Observable\\
\(\mu_\star\) & Ideal expectation value\\
\(D_M\) & VD denominator \(\Tr(\rho_p^M)\)\\
\(b_m(p)\) & Population residual bias of method \(m\)\\
\(c_m(p)\) & Statistical quotient-bias coefficient\\
\(\rho_m(p,B)\), \(\rho_m(p,R)\) & Certified \(B^{-2}\) or \(R^{-2}\) local-law remainder, not a quantum state\\
\(\gamma_m(p)\), \(\gamma_{\mathrm{VD},M}(p)\) & Exponential bad-event rate in certified local-law remainders\\
\(\gamma(p)\) & PEC quasiprobability one-norm overhead, not an exponential concentration rate\\
\(s_m(p)\) & Resource-normalized sampling coefficient\\
\(\omega_M=K_{D,M}/D_M\) & Denominator-size ratio used only in quotient-remainder prefactors\\
\(\kappa_m\), \(\kappa_M\) & Resource cost per method sample, or per Hadamard test for copy number \(M\)\\
\(a(p)\) & SV acceptance probability\\
\(\delta_{\mathrm{VD}}\) & VD dominant-eigenvector mismatch coefficient\\
\(\delta_{\mathrm{SV}}\) & SV undetectable-sector bias coefficient\\
\bottomrule
\end{tabular}
\end{center}

\end{document}